  \providecommand\BibTeX{{%
    \normalfont B\kern-0.5em{\scshape i\kern-0.25em b}\kern-0.8em\TeX}}}
\date{}
\title{Active learning sound negotiations}
\author{Anca Muscholl}
\affiliation{
  \department{LaBRI}              
  \institution{Bordeaux University}            
  \country{France}                    
}
\email{}          
\author{Igor Walukiewicz}
\affiliation{
  \department{LaBRI}             
  \institution{CNRS, Bordeaux University}           
  \country{France}
}
\email{}         
\begin{abstract}
  We present two active learning algorithms for sound deterministic
  negotiations. 
  Sound deterministic negotiations are models of distributed systems, a kind of 
  Petri nets or Zielonka automata with additional structure.
  We show that this additional structure allows to minimize such negotiations.
  The two active learning algorithms differ in the type of membership queries they use.
  Both have similar complexity to Angluin's $L^*$ algorithm, in
  particular, the number of queries is polynomial in the size of the
  negotiation, and not in the  number of configurations. 
\end{abstract}
\keywords{negotiations, Petri nets, Mazurkiewicz traces, Angluin learning}
\theoremstyle{definition}
\newcommand{\anca}[1]{\todo[color=blue!30]{\small #1}}
\newcommand{\igw}[1]{\todo[color=green!30]{\small #1}}
\newtheorem{theorem}{Theorem}[section]
\newtheorem{proposition}{Proposition}[theorem] 
\newtheorem{corollary}{Corollary}[theorem]
\newtheorem{lemma}[theorem]{Lemma}
\newtheorem{definition}[theorem]{Definition}
\newtheorem{remark}[theorem]{Remark}
\newcommand{\Adom}{A_{\dom}}
\newcommand{\Teacher}{{Teacher}}
\newcommand{\Learner}{{Learner}}
\newcommand{\Unique}{\emph{Uniqueness}}
\newcommand{\Closure}{\emph{Closure}}
\newcommand{\NEQ}{\emph{Node-mismatch}} 
\newcommand{\OUTINC}{\emph{Absent-trans}}
\newcommand{\Target}{\emph{Target-mismatch}} 
\newcommand{\PREF}{\emph{Pref}}
\newcommand{\DOM}{\emph{Domain}}
\newcommand{\EQ}{\mathit{EquivQuery}}
\newcommand{\res}{\mathit{res}}
\newcommand{\larr}{\leftarrow}
\newcommand{\OUT}{\mathit{OUT}}
\newcommand{\CLOS}{\mathit{CLOS}}
\newcommand{\BinS}{\mathit{BinS}}
\newcommand{\TRG}{\mathit{TRG}}
\newcommand{\Negotiation}{\mathit{Negotiation}}
\renewcommand{\d}{\delta}
\newcommand{\tequiv}{\approx}
\newcommand{\init}{\mathit{init}}
\newcommand{\Cinit}{C_{\init}}
\newcommand{\Cfin}{C_{\fin}}
\newcommand{\ninit}{n_{\init}}
\newcommand{\nfin}{n_{\fin}}
\newcommand{\dact}[1]{\stackrel{#1}{\longmapsto}}
\newcommand{\ans}{\mathit{ans}}
\newcommand{\isSound}{\mathit{IsSound}}
\newcommand{\eqL}{\equiv^L}
\newcommand{\eqT}{\equiv_T}
\newcommand{\eqTp}{\equiv_{T'}}
\newcommand{\out}{\mathit{out}}
\newcommand{\dmin}{\mathit{dmin}}
\newcommand{\ndom}{\mathit{dnode}}
\renewcommand{\dom}{\mathit{dom}}
\newcommand{\Proc}{\mathit{Proc}}
 \newcommand{\paths}{\mathit{Paths}}
 \newcommand{\sqs}{\preccurlyeq}
\newcommand{\tNn}{\widetilde{\Nn}}
\newcommand{\tL}{\widetilde{L}}
\newcommand{\tC}{\widetilde{C}}
\newcommand{\tCinit}{\widetilde{C}_{\init}}
\newcommand{\tAa}{\widetilde{\Aa}}
\newcommand{\actionformat}[1]{\mathsf{#1}}
\newcommand{\aappl}{\actionformat{appl}}
\newcommand{\ainfo}{\actionformat{info}}
\newcommand{\asetup}{\actionformat{setup}}
\newcommand{\adinit}{\actionformat{dinit}}
\newcommand{\asvote}{\actionformat{svote}}
\newcommand{\avote}{\actionformat{vote}}
\newcommand{\adec}{\actionformat{dec}}
\newcommand{\aack}{\actionformat{fin}}
\begin{document}

\title[Active learning sound negotiations]{Active Learning Sound Negotiations}         
\keywords{Active learning, Distributed systems, Mazurkie\-wicz traces}  

\maketitle

\section{Introduction}

The active learning paradigm proposed by Angluin~\cite{ang87} is a method used by a
\Learner\ to identify an
unknown language.
The paradigm assumes the existence of a \Teacher\ who can answer
membership and equivalence queries.
Learner can ask if a word belongs to the language being learned, or if an
automaton she constructed accepts that language. 
This setting allows for much more efficient algorithms than passive learning,
where Learner receives just a set of positive and negative examples~\cite{Hig10}.
While passive learning has high theoretical
complexity~\cite{TraBar:73,gold78}, Angluin's  $L^*$-algorithm
can learn a regular language  with polynomially many queries to  \Teacher.
Active learning algorithms have been designed for many extensions of
deterministic finite automata: automata on infinite words, on trees, weighted
automata, nominal automata, bi-monoids for pomset
languages~\cite{dh07,bm15,mw15,af16,chjs16,mssks17,om20,HKRS21}.
Following Angluin's original algorithm, several algorithmic
improvements have been proposed~\cite{RS93,KV94,TTT14}, implemented in learning
tools~\cite{learnlib,BolKatKer10b}, and used in case
studies~\cite{vaa17,SmeenkMVJ15,RuiPol15,FitJanVaa16,NeiderSVK97,TapAicBlo19}. 

Learning distributed systems is a particularly promising direction. 
First, because most systems are distributed anyway. 
Second, because distributed systems exhibit the state explosion phenomenon,
namely, the state space of a distributed system is often exponential
in the size of the description of the system.
If we could learn a distributed system in time polynomial in the size of
the description, we would be using state explosion to our advantage. 
Put differently, knowing something about the structure of the system would allow 
to speed up the learning process exponentially. 

The learning results cited above all rely  on the existence of canonical automata,
even though sometimes these automata may not be minimal. 
This is a main obstacle for learning distributed systems.
Consider the following example that can be reproduced in many kinds of systems.
Suppose we have two processes, $p_1$ and $p_2$, both executing a shared action
$b$. 
It means that on executing $b$ the two processes update their state.
The goal of the two processes is to test if the number of actions $b$ is a
multiple of $15$. 
One solution is to make $p_1$ count modulo $3$ and $p_2$ to count modulo $5$. 
Each time when the two remainders are $0$ they can declare that the number of
$b$'s they have seen is divisible by $15$.
The sum of the number of states of the two processes is $3+5=8$.
Another possibility is that $p_1$ stores the two lower bits of count
modulo $15$, 
and $p_2$ stores the two higher bits. 
The sum of the number of states of the two processes is $4+4=8$. 
It is clear that there is no distributed system for this language with $2+5$ states or with $3+4$ states, as
the number of global states would be $2*5=10$ and $3*4=12$, respectively. 
Thus we have two non-isomorphic minimal solutions.
But it is not clear which of the two should be considered canonical.
It is hard to imagine a learning procedure that would somehow chose one solution
over the other. 
In this paper we avoid this major obstacle.
The distributed automata we learn, \emph{sound deterministic negotiations},  cannot implement any of the two solutions. 
The minimal solution for negotiations has $15$ nodes and resembles the minimal
deterministic automaton for the language. 

Negotiations are a distributed model proposed by Esparza and Desel in~\cite{ED13},
tightly related to workflow nets~\cite{Aal:16} and 
free-choice Petri nets.
In one sentence, this model is a graph-based representation of processes
synchronizing over shared actions.
Figure~\ref{fig:lipics} shows a negotiation corresponding to
the workflow of an editorial board, with 4 processes
$NA$ (new application), $TS$ (technical support), $EC$ (editorial board chair), $EM$ (editorial board member).
Actions are written in blue, for instance {\color{blue} $\textsf{svote}$}
(set-up vote) is
a shared action of processes $EC$ and $TS$.
At node $n3$ processes $TS,EC$ have the choice between actions
{\color{blue} $\textsf{svote}$} and {\color{blue} $\textsf{tech}$}.
Taking jointly {\color{blue} $\textsf{svote}$} leads process $TS$ to $n6$ and
$EC$ to $n5$.
The semantics of a negotiation is a set of \emph{executions}, namely sequences of
actions that are executable from an initial to a final state.
In our example,
$(\aappl)(\asetup)(\adinit)(\aack)(\asvote)(\avote)(\adec)$ is an execution.
Executions are  Mazurkiewicz traces~\cite{maz77} because there is a natural independence relation
between actions: if the domains of two actions are disjoint, the actions  are
independent, and otherwise not.
\begin{figure}
   \includegraphics[width=.4\textwidth,height=.3\textheight]{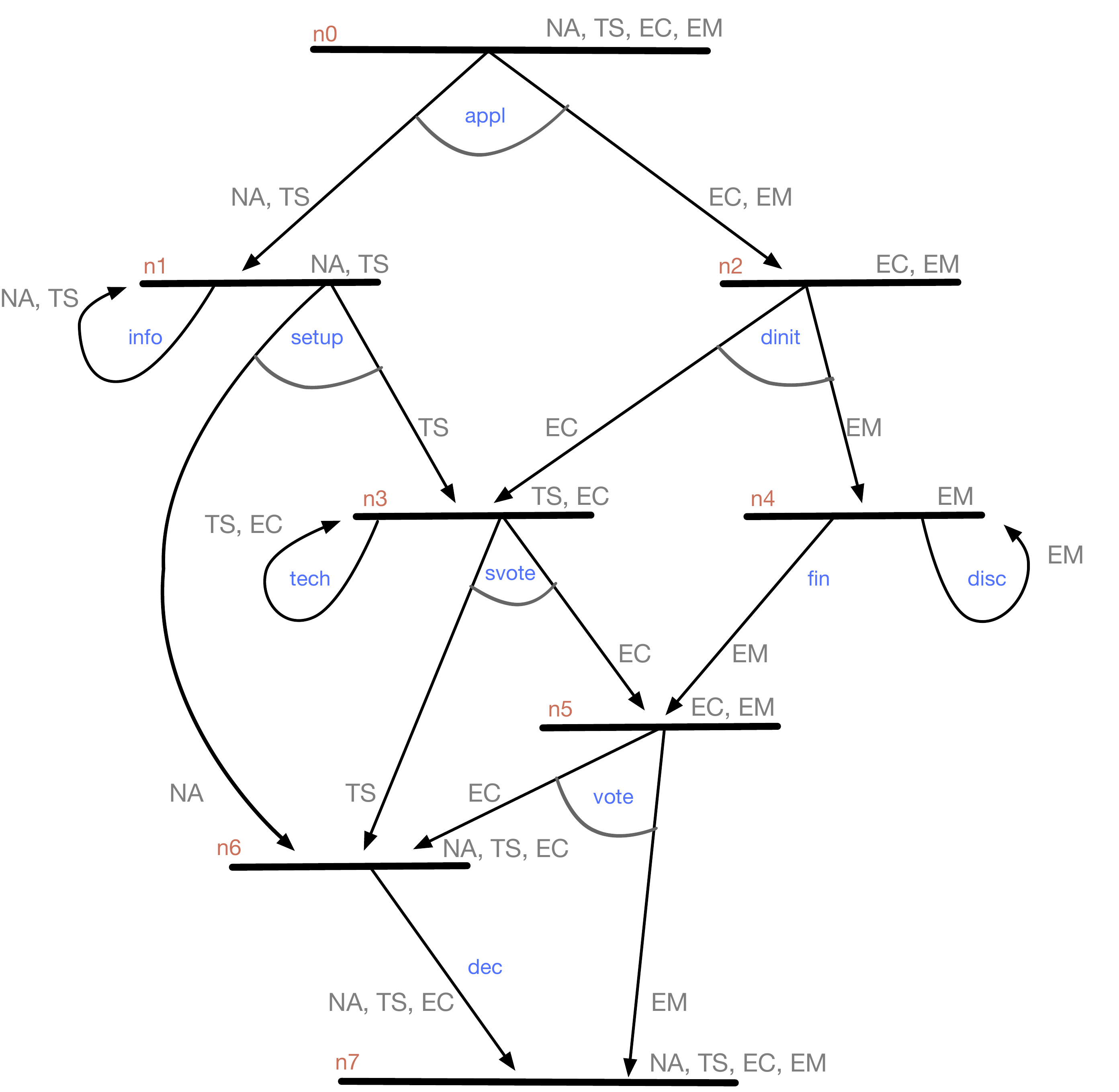}
	\caption{A sound, deterministic negotiation}
	\Description{none}
	\label{fig:lipics}
\end{figure}
      
Negotiations that are deterministic and sound,
as the one in Figure~\ref{fig:lipics},  turn out to have a close
relationship with finite automata.
Soundness is a variant of deadlock-freedom,
and determinism means that every state has at most one outgoing transition on a
given label. 
Our first result is a canonical representation for sound deterministic
negotiations by finite automata, that also provides a minimization
result.

Based on this canonical representation, one could just use the standard
Angluin algorithm $L^*$ for DFA to learn sound, deterministic
negotiations in polynomial time. 
This results in a rather unrealistic setting where \Teacher\ is supposed to have 
access to the graph representation of a negotiation.
When learning the negotiation from Figure~\ref{fig:lipics}, this setting would
e.g.~require \Teacher\ to answer with a \emph{local path} in the graph, like for
example the leftmost path $(\aappl_{TS})(\asetup_{NA})(\adec_{EC})$ from $n0$ to
$n7$.
However, if the negotiation under learning is  black-box, then equivalence
queries need to be
approximated by conformance testing~\cite{vaa17}.
In this case local paths are not accessible to \Teacher: he can only
apply executions to the system under learning.
Therefore  we assume in this paper that when the two negotiations are not
equivalent \Teacher\ replies with a counter-example in form of an execution that belongs to one negotiation
but not to the other.


As \Teacher\ replies with executions to equivalence queries,
the main challenge is to extract some information from a counter-example
execution allowing to extend the negotiation under learning. 
In our first algorithm \Learner\ can ask membership queries about local
paths. 
Membership queries about local paths are arguably difficult to justify,
yet the algorithm is relatively simple and serves as a basis for the second algorithm.

Our second learning algorithm uses only executions, both for membership
and for equivalence queries. 
With a counter-example at hand, \Learner\ needs to be able to find a
place to modify the negotiation she constructed so far.
For this the negotiation needs to have enough structure to allow
to build executions for membership queries. 
Even though this induces an important conceptual complication, the complexity of
our second algorithm is comparable to that of the standard $L^*$ algorithm
for DFA.
Moreover, equivalence queries in this algorithm can be done in \PTIME, if the negotiation to learn is given explicitly
to \Teacher.


\paragraph{Related work}
The active learning paradigm was initially designed for regular languages~\cite{ang87}. 
It is still the basis of all other learning algorithms. 
From the optimizations proposed in the literature~\cite{RS93,KV94,BolKatKer10b,TTT14} we
adopt two in this work. We use discriminator trees instead of rows, as this
allows to gain a linear factor on the number of membership queries. 
We also use  binary search to find a place where a modification should be made. 
This gives a reduction from $m$ to $\log(m)$ membership queries to process a
counterexample of size $m$. 
As it is also common by now, we add only those suffixes from a counter-example
that are needed to create new states or transitions.
These and some other optimizations are implemented in the
TTT-algorithm~\cite{TTT14}. 

There are many extensions of the active learning setting to richer models:
$\w$-regular languages, weighted languages, nominal languages, tree languages,
series-parallel pomsets~\cite{dh07,bm15,mw15,af16,chjs16,mssks17}.
All of them rely on the existence of a canonical automaton for a given language.
The algorithm for learning non-deterministic automata is not an exception as it
learns residual finite state automata. 
Categorical frameworks have been recently proposed to cover the majority of these examples
and provide new ones~\cite{HSS17,UrbSch20,CPS21}. 

To our knowledge the first active learning algorithm for  concurrent
models is~\cite{BKKL10}, where message-passing automata are learned
from MSC scenarios.
However, this algorithm requires a number of queries
that is exponential in the number of processes and the channel bounds.
Recently, an active learning algorithm for series-parallel pomsets was
proposed~\cite{HKRS21}.
This algorithm learns bimonoids recognizing
series-parallel pomsets, which may be exponentially larger than a
pomset automaton accepting the language.
It relies on  a representation of series-parallel pomsets as trees, 
and learns a tree automaton accepting the set of representations.
Note that languages of deterministic sound negotiations and of series-parallel
pomsets are incomparable.
For example, the pomsets corresponding to executions of the negotiation from
Figure~\ref{fig:lipics} are not series-parallel.

Negotiations have been proposed by 
Esparza and Desel~\cite{ED13,DEH19}. 
It is a model inspired by workflow nets~\cite{Aal98,Aal:16} but using 
processes like in Mazurkiewicz trace theory and Zielonka automata~\cite{maz77,zie87,dr95}.
Workflow nets have been studied extensively, in particular variants of black-box
learning~\cite{AalCarCha19}, but we are not aware of any result about active learning of such nets.


\paragraph{Structure of the paper.}
In the next section we give an overview and the context of the paper.
In Section~\ref{sec:basics} we define sound, deterministic negotiations.
Section~\ref{sec:minimization} presents the result on minimization.
Section~\ref{sec:angluin} recalls briefly Angluin's $L^*$ algorithm. 
Sections~\ref{sec:paths}, and~\ref{sec:executions} describe the two learning
algorithms that are the main result of the paper.
Omitted proofs can be found in the Appendix.


\section{Overview}\label{sec:overview}
Before going into the technical content of our work we give a high-level
overview of the key concepts and results.

A negotiation is like a finite automaton with many tokens.
The behavior of a finite automaton can be described in terms of one token
moving between states, that we prefer to call nodes, in the graph of the automaton.
At first, the token is in the initial node. 
It can then take any transition outgoing from this node and move further.
If the transition is labelled by $b$, we say that the automaton takes action $b$. 
With this view, words accepted by the automaton are sequences of actions
leading the token from the initial node to a final one.

What happens if we put two tokens in the initial node?
When we look at the sequences of actions that are taken we will get a shuffle of words in
the language of the automaton. 
This is concurrency without any synchronization.

Negotiations are like finite automata with several tokens and a very
simple synchronization mechanism. 
The number of tokens is fixed and each of them is called a process,
say from a finite set $\Proc$.
The processes move from one node to another
according to the synchronization mechanism 
described in the following.
Every node has its (non-empty) domain $\ndom: N\to 2^\Proc$ and a set
of outgoing actions.
The node's domain says which processes can
reach it: process $p$ can reach only nodes $n$ with $p\in\ndom(n)$. 
The synchronization requirement is that \emph{all} processes in
$\ndom(n)$  leave node $n$ jointly, after choosing a common outgoing
action. 
Taking the same action at node $n$  means that processes from
$\ndom(n)$ ``negotiate''  which action they take jointly.
As in the case of finite automata, an \emph{execution} in a negotiation is
determined by a sequence of actions labelling the transitions  taken, except that
now one action corresponds to a move of potentially several processes. 
The non-deterministic variant of this simple mechanism can simulate $1$-safe Petri Nets or Zielonka automata,
albeit with many deadlocks. 

Recall the negotiation in Figure~\ref{fig:lipics}, with the four
processes $\Proc=\set{NA,TS,EC,EM}$. 
Nodes are represented by horizontal bars. 
The initial node is on the top, and the final one at the bottom. 
The domain of every node, $\ndom(n)$, is indicated just above the node to the right.
Actions are written in blue, with $\Act=\set{\aappl,\asetup,\dots,\adec}$. 
From every node there are several outgoing transitions on the same action, one
transition per process in the domain of the node.
For example, from the initial node there is an action $\aappl$ with four
transitions, one for each process.
We denote by $\aappl_{TS}$ the transition labelled $\aappl$ of process $TS$. 
Transition $\aappl_{TS}$ leads $TS$ from $n0$ to $n1$. 
Node $n1$  has two outgoing transitions, $\ainfo$ and $\asetup$. 
Both involve the two processes $NA,TS$. 
Every transition from node $n$ involves all processes in the domain of $n$.


All processes start in the initial node $n0$. 
After action $\aappl$ processes $NA,TS$ reach node $n1$, from where they can take
action $\asetup$ leading $TS$ to node $n3$, and $NA$ to node $n6$. 
In parallel processes $EC,EM$ reach node $n2$ from where they can take action $\adinit$, which
makes $EC$ rejoin $TS$ in node $n3$.
They can continue like this forming an execution from $n0$ to $n7$:
	$(\aappl)(\asetup)(\adinit)(\aack)(\asvote)(\avote)(\adec)$.
Observe that the order of $\asetup$ and $\adinit$ is not relevant because they
appear concurrently. 
We say that the two actions are independent because they have disjoint domains. 
On the other hand $\adinit$ and  $\asvote$ cannot be permuted because $EC$ is in
the domain of the two actions. 
Actions are therefore partially ordered in an execution. 
We write $L(\Nn)\incl \Act^*$ for the set of all (complete) executions
of negotiation $\Nn$.

More formally, actions in a negotiation are typed forming a \emph{distributed alphabet}. 
Every action is assigned a set of processes participating in that action:
$\dom:\Act\to2^\Proc$. 
Going back to our example from Figure~\ref{fig:lipics}:
$\dom(\aappl)$ is the set of all four processes, while 
$\dom(\asetup)=\set{NA,TS}$ and $\dom(\adinit)=\set{EC,EM}$.
For every node $n$ and action $a$ outgoing from $n$ we have
$\dom(a)=\ndom(n)$.
This way executions of negotiations can be viewed as
Mazurkiewicz traces~\cite{maz77}.
As the domains of $\asetup, \adinit$ are disjoint the two actions are independent, so their order can be
permuted: for all  $u,v \in \Act^*$, $u(\asetup)(\adinit)v \in L(\Nn)$ iff $u(\adinit)(\asetup)v \in
L(\Nn)$.

Two negotiations $\Nn_1$, $\Nn_2$ over the same distributed alphabet are \emph{equivalent}
if $L(\Nn_1)=L(\Nn_2)$.  
Since we will consider negotiations without  deadlocks, and our systems are
deterministic, this is equivalent to the two negotiations being strongly bisimilar.
The goal of active learning is to allow Learner to find a negotiation
equivalent to the one known by Teacher, assuming Learner can ask membership and
equivalence queries to Teacher.

\paragraph{Sound, deterministic negotiations.}
Negotiations can simulate Petri nets or Zielonka automata.
The three models suffer from the main obstacle described in the introduction.
For deterministic negotiations this changes when we impose soundness.
A negotiation is \emph{sound}, if every execution starting from the initial node
can be extended to an execution that reaches a final node.
(Without loss of generality we will assume that there is only one final node in a
negotiation.)
So soundness is a variant of deadlock freedom.
A negotiation is \emph{deterministic} if for every process $p$ and
action $b$ every node has at most one outgoing edge labeled $b$ and leading to
a node with $p$ in its domain.
The negotiation from Figure~\ref{fig:lipics} is sound and deterministic.

Sound deterministic negotiations have many interesting properties.
While soundness looks like a semantic property, it can be decided in \NLOGSPACE\
for deterministic negotiations~\cite{EKMW18}. 
Actually, soundness is characterized by forbidden patterns in the
negotiation graph.
Some quantitative properties of sound deterministic negotiations can be computed in
\PTIME, see \cite{EspMusWal:17}.
  But not everything is easy.
Deciding if a given negotiation has some execution that
belongs to a given regular language is \PSPACE-complete.

\paragraph{Our results.}
Our first contribution  is the observation that sound deterministic
negotiations can be minimized. 
This presents prospects for  Angluin-style learning, as there is a
canonical object to learn.
It also provides a simple polynomial-time equivalence algorithm for
such negotiations.

To explain the minimization result, we need one more notion. 
A \emph{local path} in a negotiation is a labelled path in
the negotiation graph, for example $(\aappl_{TS})(\asetup_{NA})(\adec_{EC})$ in the
negotiation from Figure~\ref{fig:lipics}.
Since the negotiation is deterministic, the source node, the action,
and the process uniquely determine the transition. 
We write $\aappl_{TS}$ for the transition on $\aappl$ of process $TS$.
In general local paths are sequences over the alphabet $A_{\dom}=\set{a_p:
  a \in \Act,p\in\dom(a)}$.
We write $\paths(\Nn)$ for the set of all local paths of
$\Nn$
leading from the initial to the final node.

Negotiations can be minimized by simply minimizing the finite automaton for
local paths, Proposition~\ref{prop:mindfa-negotiation}.
This proposition suggests using Angluin-style learning
for finite automata to learn sound negotiations. 
But this supposes that Learner asks questions about local paths, and Teacher
replies with local paths as counter-examples. 
As already mentioned, we find it hard to justify this setting. 
Instead, we consider the scenario where Teacher replies with a complete
execution (and not a local path).

Our first learning algorithm, Theorem~\ref{thm:learning-paths}, still allows
\Learner\ to ask membership queries about local paths. 
Admittedly, this may be not very realistic either, 
but the algorithm is instructive, 
using some concepts that are central for our second algorithm. 
The main challenge is how to extract from a counter-example given by \Teacher\
some information allowing to modify a negotiation being learned. 
The crucial property is that when \Learner\ runs a counter-example given by \Teacher\ in a
negotiation being learned then she can find an inconsistency in her information
before the counter-example reaches a deadlock (Lemma~\ref{lem:lp-pos}).

In our second, main learning algorithm Learner can ask membership queries about
executions, and not about local paths, Theorem~\ref{thm:learning-executions}.
The challenge now is how to construct membership queries about executions, and how
to extract useful information from the answers. 
In the first algorithm membership queries about local paths allowed to
obtain information about the graph of the negotiation. 
It is not evident how to use executions to accomplish the same task.
Even more so because the negotiations constructed by \Learner\ are not necessarily sound at
every stage of the learning process.
Nevertheless we show that \Learner\ is able to recover soundness just
with membership queries.
We use Mazurkiewicz traces of a special form to designate states of
the negotiation to be
learned, as well as for tests. 
Moreover, transitions cannot be just labelled by an action, but require
trace supports.
All these objects are controlled by invariants guaranteeing that
\Learner\ can always make progress. 
While conceptually more complex, the second algorithm has a similar estimate on
the number of queries as the L$^*$ algorithm.




\section{Basic definitions}\label{sec:basics}

A \emph{(deterministic) negotiation} describes the concurrent
behavior of a set of 
processes.
At every moment each process is in some node. 
A node has a domain, namely the set of processes required to execute
one of its actions. 
If at some moment all the processes from the domain of the node are in that
node, then they choose a common action (outcome) to perform.
In deterministic negotiations, as the ones we consider here, the
outcome determines uniquely a new node for every process.

We fix a finite set of processes $\Proc$. 
A \emph{distributed alphabet} is a set of actions $\Act$ together with a
function $\dom:\Act\to 2^\Proc$ telling what is the 
(non-empty) set of processes participating in each action.
More generally, for a sequence of actions $w \in\Act^*$ we write
$\dom(w)$ for the set of processes participating in $w$, so
$\dom(w)=\cup_{|w|_a >0} \dom(a)$.

\begin{definition}
	A \emph{negotiation diagram} over a distributed alphabet $(\Act,\dom)$ is a tuple
		$\Nn=\struct{\Proc,N,\ndom, \Act,\dom,$ $\delta,\ninit,\nfin}$,
	where
	\begin{itemize}
		\item $\Proc=\set{p,q,\dots}$ is a finite set of \emph{processes};
		\item $N=\set{m,n,\dots}$ is a finite set of
                  \emph{nodes}, each node $n$ has a non-empty domain
                  $\ndom(n) \subseteq \Proc$;
		\item $\ninit$ is the initial node, $\nfin$ the
                  final one, and $\ndom(\ninit)=\ndom(\nfin)=\Proc$;
		\item $\d :N\times \Act\times \Proc \stackrel{.}{\to} N$ is a partial function
		defining the transitions.
	\end{itemize}
	We also require that domains of nodes and actions match:
	\begin{itemize}
		\item if $n'=\d(n,a,p)$ is defined then
                  $\ndom(n)=\dom(a)$, $p\in\dom(a) \cap \ndom(n')$, and
		$\d(n,a,q)$ is 	defined for all $q\in
                \dom(a)$.
             	\end{itemize}
The \emph{size} of $\Nn$ is $|N|+|\de|$.
\end{definition}

A \emph{configuration} is a function $C:\Proc \to N$ indicating for
each process in which  node it is.
A node $n$ is \emph{enabled} in a configuration $C$ if all processes from the domain
of $n$ are at node $n$, namely, $C(p)=n$ for all $p\in\ndom(n)$.
Note that any two simultaneously enabled nodes $n,n'$ have disjoint domains, $\ndom(n)
\cap \ndom(n')=\es$.
We say that $a$ is an outgoing action from $n$ if $\d(n,a,p)$ is defined, denoted
$a\in\out(n)$.
If $n$ is enabled in $C$ and $a\in\out(n)$ then a transition to a
new configuration $C \act{a} C'$ is possible, where $C'(p)=\d(n,a,p)$ for
all $p\in\dom(a)$, and $C'(p)=C(p)$ for $p\not\in\dom(a)$.
As usual, we write $C \act{} C'$ when there is some $a$ with $C
\act{a} C'$, and $\act{*}$ is the reflexive-transitive closure of $\act{}$.

The \emph{initial configuration} $\Cinit$ is the one with
$\Cinit(p)=\ninit$ for all $p$. The \emph{final configuration} $\Cfin$ is such
that $\Cfin(p)=\nfin$ for all $p$. 

An \emph{execution} is a sequence of transitions between configurations starting in the initial
configuration
\begin{equation*}
	\Cinit=C_1\act{a_1}C_2\act{a_2}\dots \act{a_i}C_{i+1}\ .
\end{equation*}
Observe that an execution is determined by  a sequence of
actions.
A \emph{successful execution} is one  ending in $\Cfin$.
The language $L(\Nn)$ of a negotiation is the set of successful
executions, $L(\Nn)=\set{w \in \Act^* : \Cinit \act{w} \Cfin}$.

The \emph{graph} of $\Nn$ has the set of nodes $N$ as vertices and
edges $n\act{(a,p)} n'$ if $n'=\d(n,a,p)$.
A \emph{local path} is a path in this graph, and $\paths(\Nn)$ denotes
the set of local paths of negotiation $\Nn$, leading from the initial
node $\ninit$ to the final node $\nfin$.
W.l.o.g.~we assume that each node belongs to some local path from
$\ninit$ to $\nfin$.
In a deterministic negotiation there is at most one outgoing action for every
pair action/process $(b,p)$.
We prefer to write it as $b_p$.
For example, $(\aappl)_{NA}(\asetup)_{NA}(\adec)_{EC}$ is a local path in the
negotiation from Figure~\ref{fig:lipics}.
The alphabet of local paths is then  $A_{\dom}=\set{a_p : a \in \Act, p \in\dom(a)}$.
Clearly, $\paths(\Nn)$ is a regular language over alphabet $A_{\dom}$.
For a sequence $w \in \Act^*$ and a process $p$ we write $w|_p$ for
the \emph{projection} of $w$ on the set of actions having $p$ in their
domain.
Note that these projections are, in particular, local paths.
We often consider projections $w|_p=a_1 \dots a_k$ as words over
alphabet $A_{\dom}$,  namely $(a_1)_p \dots (a_k)_p \in A_\dom^*$. 
Coming back to Figure~\ref{fig:lipics},
the projection on $NA$ of the complete execution
$(\aappl)(\asetup)(\adinit)(\aack)(\asvote)(\avote)(\adec)$ is 
the local path $(\aappl)_{NA}(\asetup)_{NA}(\adec)_{NA}$.
\medskip

A negotiation diagram is \emph{sound} if every execution
$\Cinit \act{*} C$ can be extended to a successful one, so $\Cinit \act{*} C
\act{*} \Cfin$. 

A sound negotiation cannot have a deadlock, i.e., a
configuration that is not final but from where no process can move.
An example of a deadlock configuration is when  process $p$ is at node $n_p$ with
domain containing $\set{p,q}$,
and process $q$ is at node $n_q\not=n_p$ also with the domain containing $\set{p,q}$. 
Another possibility for a negotiation to be unsound is to have an
execution that loops without the possibility of exiting the loop.

Sound, deterministic negotiations enjoy a lot of structure, in
particular they can be decomposed hierarchically using finite automata and
partial orders~\cite{EspMusWal:17}.
A notable property we will use often is that for every node $n$ there
is a unique reachable configuration in which node $n$ is the
unique enabled node: 
\begin{theorem}[Configuration $I(n)$\cite{EspMusWal:17}]\label{th:I}
  Let $\Nn$ be a sound and deterministic negotiation.
  For every node $n$ there exists unique configuration $I(n)$ such that
  node $n$ is the only node enabled in $I(n)$.
\end{theorem}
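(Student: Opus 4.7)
The plan is to split the argument into existence and uniqueness, leaning on three ingredients: soundness (every reachable configuration extends to $\Cfin$), determinism, and the structural fact that any two simultaneously enabled nodes have disjoint domains.

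For existence, I would first produce an execution $\Cinit \act{*} C_0$ in which $n$ is enabled. Since every node lies on a local path from $\ninit$ to $\nfin$, and soundness lets us synchronise independent progress across processes, one can build a word whose projections on each $p \in \ndom(n)$ send $p$ to $n$. Then, starting from $C_0$, I would greedily fire actions at enabled nodes \emph{other than} $n$. Because such nodes have domains disjoint from $\ndom(n)$, these firings never involve a process sitting at $n$, so $n$ remains enabled throughout. The resulting halting configuration, if the procedure terminates, has $n$ as its only enabled node and is our candidate $I(n)$.

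The main obstacle is proving termination of this ``avoid-$n$'' firing strategy. The approach I would take is a commutation argument: suppose the greedy process could go on forever, producing an infinite sequence $C_0 \act{} C_1 \act{} \cdots$ with $n$ always enabled but never fired. Pick any $C_i$; by soundness there is an extension $C_i \act{v} \Cfin$, and $v$ must contain some action at $n$. Let $a$ be the first such action in $v$. Because every action appearing before $a$ in $v$ acts at a node enabled together with $n$ at some point (and hence has domain disjoint from $\ndom(n)$), $a$ is independent, in the Mazurkiewicz sense, of all those prefix actions; therefore $a$ can be commuted to the front, giving $C_i \act{a} \cdots$. But then an action at $n$ was enabled at $C_i$, contradicting the hypothesis that the greedy process avoided $n$ indefinitely (modulo finiteness of the state space, which turns ``never fires $a$'' into an actual cycle contradiction). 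This commutation core is the delicate step and ultimately uses the hierarchical structure of sound deterministic negotiations from \cite{EspMusWal:17}.

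For uniqueness, suppose $C_1$ and $C_2$ both have $n$ as their unique enabled node, reached by executions $w_1,w_2$ from $\Cinit$. On $\ndom(n)$ they agree (both equal $n$). For $p\notin\ndom(n)$, I would consider the common continuation: by soundness both extend to $\Cfin$, and in both cases the next fired action must lie in $\out(n)$. Fix the same $a\in\out(n)$; determinism makes the post-$a$ configurations agree on $\ndom(n)$, and by iterating the argument one obtains that the full suffix executions from $C_1$ and $C_2$ to $\Cfin$ have, process by process, the same projections. Combined with the commutation/Mazurkiewicz-trace equivalence of $w_1$ and $w_2$ (any two linearisations of the same trace reach the same configuration in a deterministic system), this forces $C_1(p)=C_2(p)$ for every $p$, and hence $C_1 = C_2$.
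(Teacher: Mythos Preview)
The paper does not prove this theorem at all; it is imported wholesale from \cite{EspMusWal:17}, so there is no in-paper argument to compare against. Assessing your sketch on its own, both halves have real gaps.

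For existence, the greedy ``fire anything but $n$'' procedure need not terminate: a sound deterministic negotiation can perfectly well contain a cycle carried by processes outside $\ndom(n)$, and the greedy rule may spin there forever. Your attempted contradiction does not bite: you deduce that some $a\in\out(n)$ is enabled at each $C_i$, but that was already obvious (node $n$ is enabled throughout), and it does not contradict the greedy \emph{choice} to skip $n$. Finiteness of the state space yields a cycle, not a contradiction. You yourself flag that the delicate step ``ultimately uses the hierarchical structure'' of \cite{EspMusWal:17}; without actually importing that structure, the existence half remains a plan rather than a proof.

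For uniqueness, the final step is simply wrong: $w_1$ and $w_2$ are not assumed Mazurkiewicz-equivalent---two distinct executions can reach configurations with the same unique enabled node, and their $\tequiv$-equivalence is precisely what you would need to \emph{prove}, not invoke. The ``iterate'' step before it also breaks: after firing the same $a\in\out(n)$ from $C_1$ and $C_2$ the successors agree only on $\ndom(n)$, so their enabled sets may differ (any node $m$ with $\ndom(m)\not\subseteq\ndom(n)$ depends on coordinates where $C_1$ and $C_2$ may still disagree), and you cannot continue in lockstep. A correct uniqueness argument again relies on the structural results of the cited paper.
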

The uniqueness property from this theorem is very powerful, whenever we have an
execution $\Cinit\act{*} C$, and $n$ is the only node enabled in $C$ then we
know that $C=I(n)$, so we know where all the processes are.

\paragraph{Mazurkiewicz traces.}\label{def:traces}
For a given distributed alphabet $(\Act,\dom)$, an
equivalence relation $\tequiv$ on $\Act^*$ is defined as the transitive
closure of  $uabv \tequiv uba v$, for $\dom(a) \cap \dom(b)=\es$, $u,v
\in \Act^*$.
A \emph{Mazurkiewicz trace} is a $\tequiv$-equivalence class, and a
trace language is a language closed under $\tequiv$.
Note that languages of negotiations are trace languages.
We identify a word over $\Act$ with its
$\tequiv$-equivalence class, so the trace it represents.
Alternatively, a trace can be as a labeled
partial order of a special kind.
Finally let us introduce some notation about prefixes and suffixes of traces. 
When $w\in\Act^*$, we write $\min(w)$, for the set $\set{a \in \Act :
  w \tequiv aw' \text{ for some } w'\in Act^*}$ of minimal actions of
$w$.
Given $u,w \in \Act^*$ we say that the  $u$ is a \emph{trace-prefix} of
$w$ if there is some 
$v \in\Act^*$ such that $uv \tequiv w$.
In this case we call $v$ a trace-suffix of $w$, and we denote it by $u^{-1}w$.



\section{Minimizing negotiations}\label{sec:minimization}

We show now a close connection between sound deterministic
negotiations and finite automata.  
An interesting consequence is that sound deterministic negotiations can be
minimized, and that the minimal negotiation is unique.

Here we will work with local paths as defined in Section~\ref{sec:basics}.
Recall that these are sequences over alphabet $A_{\dom}=\set{a_p : a \in \Act, p
\in\dom(a)}$ labelling paths in the graph of a negotiation.  
In particular a projection $w|_p$ of an execution $w$ is a local path. 
The following simple observation about projections will be useful.








\begin{lemma}\label{lem:locpath}
  Let $\Nn$  be a deterministic negotiation, $C\act{u} C'$ an
  execution in $\Nn$, and $p$ a process. The projection $u|_p$ of $u$ on $p$ is a local
  path in $\Nn$ from $C(p)$ to $C'(p)$. 
\end{lemma}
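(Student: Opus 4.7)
The plan is a straightforward induction on the length of the execution $u$, exploiting the definition of the transition relation on configurations and the definition of the graph of $\Nn$.

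For the base case $u=\eps$, the projection $u|_p$ is empty, $C=C'$, and the empty sequence is by convention a local path from $C(p)$ to $C(p)=C'(p)$.

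For the inductive step, write $u=u'a$ with $C\act{u'} C'' \act{a} C'$. By the induction hypothesis, $u'|_p$ is a local path in $\Nn$ from $C(p)$ to $C''(p)$. Now I would split on whether $p$ participates in $a$. If $p\notin\dom(a)$, then by the definition of the transition relation $C'(p)=C''(p)$, and also $u|_p=u'|_p$, so the claim follows immediately. If $p\in\dom(a)$, then the transition $C''\act{a} C'$ requires the node $n:=C''(p)$ to be enabled in $C''$ with $a\in\out(n)$, and sets $C'(p)=\d(n,a,p)$. By the definition of the graph of $\Nn$ this yields an edge $n\act{(a,p)} C'(p)$, so appending $a_p$ to the local path $u'|_p$ (which ends at $C''(p)=n$) produces a local path from $C(p)$ to $C'(p)$, and this concatenation is exactly $u|_p$.

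There is no real obstacle here: the only thing to be careful about is matching up the two pieces of terminology — the partial function $\d$ used to define configuration transitions on one hand, and the edges $n\act{(a,p)}n'$ of the graph of $\Nn$ on the other — which by definition coincide whenever $\d(n,a,p)$ is defined. Determinism of $\Nn$ is not used in the argument itself, but it is what makes ``the local path'' well defined as a word over $A_\dom$ by the source, action, and process.
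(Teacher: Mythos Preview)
Your argument is correct. The paper actually states this lemma without proof, treating it as immediate from the definitions; your induction on the length of $u$ with a case split on $p\in\dom(a)$ is exactly the natural way to spell it out.
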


The automata we will consider in the paper are deterministic (DFA), but
incomplete.
A DFA $\Aa$ will be written as
$\Aa=\struct{S,A,\out,\d,s^0,F}$, with $S$ as a set of
states, $\d:S \times A \to S$ a partial function, and $\out:S \to
2^A$ a map from states to their set of
outgoing actions.
Thus, $a \in \out(s)$ iff $\d(s,a)$ is defined.
While $\out$ seems redundant, it is very convenient when learning
incomplete automata, as we do in this paper. 
The next definition states a useful property of automata accepting $\paths(\Nn)$. 

 \begin{definition}[Dom-complete automata]\label{def:domcomplete}
  A finite automaton $\Aa$ over the alphabet $A_{\dom}$ is \emph{dom-complete} if for every
  state $s$ of $\Aa$  and every $a_p,a_q,b_q \in A_\dom$:
  \begin{enumerate}
    \item $a_p \in out(s)$ iff $a_q \in \out(s)$, and
    \item if $\set{a_p,b_q}  \incl \out(s)$ then $\dom(a)=\dom(b)$.
  \end{enumerate}
    Moreover, we require that $a_p \in \out(s_\init)$ for some $a$
    with $\dom(a)=\Proc$, where $s_\init$ is the initial state of $\Aa$.
\end{definition}

\begin{remark}\label{rem:synt-path}
  Observe that every trimmed DFA $\Aa$ accepting the language $\paths(\Nn)$ for
  $\Nn$  sound and deterministic, is dom-complete, if $\Nn$ has at least one transition.
  (An automaton is \emph{trimmed} if every state is reachable from the initial
  state and co-reachable from some final state).
  To see this consider a state $s$ of $\Aa$.
  As $\Aa$ is trimmed, there is some $\pi \in A_{\dom}^*$ with $s_0\act{\pi} s$.
  Consider  $\set{a_p,b_q} \subseteq \out(s)$.
  Once again thanks to trimness,   $\pi a_p$ and $\pi b_q$ are prefixes of
  some words in $\paths(\Nn)$.
  Since $\Nn$ is deterministic, $\pi$ induces a local path in $\Nn$, from $\ninit$
  to some node $n$.
  Hence, $\dom(a)=\dom(b)=\ndom(n)$ by the definition of negotiation. 
  The first property follows by a similar argument. 
\end{remark}

Let us spell out how to construct a negotiation from a dom-complete automaton. 
The conditions on the automaton are precisely those that make the result be a
negotiation. 

\begin{definition}\label{def:neg-from-automaton}
  Let $\Aa=\struct{S,A_{\dom},\out,\d_\Aa,s^0,s_f}$ be a
  dom-complete DFA 
  such that $\out(s_f)=\es$ for the unique final state $s_f$.
  We  associate with $\Aa$ the negotiation 
    $\Nn_\Aa=\struct{\Proc,N,\ndom,$ $\Act,\dom,\d,\ninit,\nfin}$
  where
  \begin{itemize}
    \item $N=S$,  $\ninit=s^0$, and $\nfin=s_f$,
    \item 
    $\ndom(s)=\dom(a)$ if $a_p \in \out (s)$ for some $a\in\Act$ and
    $p\in \Proc$; moreover, $\ndom(s_f)=\Proc$,
    \item $\d(s,a,p)=\d_\Aa(s,a_p)$ for all $s,a,p$.
  \end{itemize}
\end{definition}

The main result of this section says that the \emph{minimal} automaton of $\paths(\Nn)$ determines
a sound deterministic negotiation. 
\begin{proposition}\label{prop:mindfa-negotiation}
  Let $\Nn$ be a sound deterministic
  negotiation and $\Aa$ the minimal DFA accepting
  $\paths(\Nn)$.
  Then $L(\Nn)=L(\Nn_\Aa)$. Moreover $\Nn_\Aa$ is deterministic and sound.
\end{proposition}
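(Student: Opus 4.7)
By Remark~\ref{rem:synt-path}, the minimal DFA $\Aa$ is dom-complete, so Definition~\ref{def:neg-from-automaton} applies and $\Nn_\Aa$ is a well-formed negotiation. Determinism of $\Nn_\Aa$ is immediate, since $\d(s,a,p)=\d_\Aa(s,a_p)$ is a partial function of $(a,p)$. For the two non-trivial claims, the plan is to exhibit a surjective ``collapsing'' map $\phi$ from nodes of $\Nn$ to states of $\Aa$, lift it to configurations, and show that this lift is a language-preserving simulation in both directions.

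For every node $n$ of $\Nn$, set $\phi(n)=\d_\Aa(s^0,\pi)$ for an arbitrary local path $\pi$ from $\ninit$ to $n$ in $\Nn$. Well-definedness is a Myhill--Nerode observation: since $\Nn$ is deterministic, any two such paths have exactly the same completions in $\paths(\Nn)$, namely the local paths from $n$ to $\nfin$, so they reach the same state in the minimal $\Aa$. The same idea yields a transition correspondence $n\act{a_p}n'$ in $\Nn$ iff $\phi(n)\act{a_p}\phi(n')$ in $\Aa$, with the backward direction using that $\Aa$ is trim. In particular $\phi$ is surjective, and $\phi^{-1}(s_f)=\{\nfin\}$: any $n$ with $\phi(n)=s_f$ has no outgoing edges in $\Nn$, and the standing assumption that every node lies on a local path to $\nfin$ then forces $n=\nfin$.

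Lifting pointwise by $\Phi(C)(p)=\phi(C(p))$, the transition correspondence extends to configurations: $C\act{a}C'$ in $\Nn$ implies $\Phi(C)\act{a}\Phi(C')$ in $\Nn_\Aa$. Since $\Phi$ sends the initial (resp.~final) configuration of $\Nn$ to that of $\Nn_\Aa$, this already yields $L(\Nn)\subseteq L(\Nn_\Aa)$. Both the reverse inclusion and the soundness of $\Nn_\Aa$ reduce to a single \emph{lifting lemma}: if $C$ is reachable in $\Nn$ with $\Phi(C)=D$ and an action $a$ is enabled at $D$ in $\Nn_\Aa$, then $a$ is enabled at $C$ in $\Nn$. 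Granting the lemma, a straightforward induction on the length of an execution in $\Nn_\Aa$ lifts every execution reaching $D$ to one reaching some $C$ with $\Phi(C)=D$ in $\Nn$; taking $D$ to be the final configuration and using $\phi^{-1}(s_f)=\{\nfin\}$ gives $L(\Nn_\Aa)\subseteq L(\Nn)$, and soundness of $\Nn_\Aa$ follows by lifting a reachable $D$ to a reachable $C$, extending $C\act{*}\Cfin$ via soundness of $\Nn$, and pushing the extension forward through $\Phi$.

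The hard part is the lifting lemma, and it is the only step where soundness of $\Nn$ is genuinely used. The argument I have in mind is by contradiction. Let $s$ be the $\Nn_\Aa$-node enabling $a$ at $D$, so $\phi(C(p))=s$ for every $p\in\dom(a)$; the transition correspondence forces $a\in\out(C(p))$, hence $\ndom(C(p))=\dom(a)\supseteq\{p,q\}$ for every pair $p,q\in\dom(a)$. If $C(p)=n_p\neq n_q=C(q)$ then neither $n_p$ nor $n_q$ is enabled at $C$; moreover every outgoing action of a negotiation-node shares the node's domain, so process $p$ can move only via a joint action of $\dom(a)$ at $n_p$, which demands that $q$ first arrive at $n_p$, and symmetrically for $q$. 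The two processes are therefore stuck forever in $\Nn$, and since $\phi(C(p))=s\neq s_f$ (because $a_p\in\out(s)$) we have $C(p)\neq\nfin$, contradicting the extendability of $C$ to $\Cfin$ guaranteed by soundness. Hence all $C(p)$, $p\in\dom(a)$, coincide at a common node $n$ with $\phi(n)=s$, and $n$ enables $a$ in $\Nn$.
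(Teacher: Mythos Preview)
Your argument is correct and follows essentially the same route as the paper: the paper works directly with projections $u|_p$ via Lemma~\ref{lem:locpath} rather than naming the collapsing map $\phi$ explicitly, but both proofs rest on the identical Myhill--Nerode observation (nodes with equal residual in $\paths(\Nn)$ merge in $\Aa$) and on the same deadlock argument---if two processes $p,q\in\dom(a)$ sit at distinct $\Nn$-nodes each of domain $\dom(a)$, the reachable configuration cannot be extended to $\Cfin$, contradicting soundness of $\Nn$. Your packaging via $\phi$ and the lifting lemma additionally yields Corollary~\ref{cor:minimal-negotiation} as a by-product.
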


\begin{corollary}\label{cor:projections} 
  Let $\Nn$ be sound and deterministic, and let $w \in\Act^*$ be such that
  $w|_p \in \paths(\Nn)$ for all $p \in \Proc$. Then $w \in L(\Nn)$.
\end{corollary}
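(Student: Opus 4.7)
The plan is to prove the following strengthening by induction on $|w|$: for every reachable configuration $C$ of $\Nn$ and every word $w \in \Act^*$ such that $w|_p$ is a local path from $C(p)$ to $\nfin$ for every process $p$, one has $C \act{w} \Cfin$. The corollary is the special case $C = \Cinit$, since $\Cinit(p) = \ninit$ for all $p$ and $\paths(\Nn)$ consists exactly of local paths from $\ninit$ to $\nfin$.

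The base case $w = \varepsilon$ is immediate: each projection $w|_p = \varepsilon$ is a local path from $C(p)$ to $\nfin$, which forces $C(p) = \nfin$ and hence $C = \Cfin$. For the inductive step, write $w = aw'$. For every $p \in \dom(a)$ the projection $w|_p$ begins with $a_p$, so the transition $\d(C(p), a, p)$ is defined, and the negotiation axioms give $\ndom(C(p)) = \dom(a)$. Granting for a moment that all the nodes $C(p)$ with $p \in \dom(a)$ coincide in a common node $n$, action $a$ is enabled in $C$, yielding $C \act{a} C'$. For $p \in \dom(a)$, $w'|_p$ is $w|_p$ with the leading letter $a_p$ removed, hence a local path from $\d(n,a,p) = C'(p)$ to $\nfin$; for $p \notin \dom(a)$, $w'|_p = w|_p$ and $C'(p) = C(p)$. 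Applying the induction hypothesis to $C'$ and $w'$ closes the step.

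The only nontrivial point, and the one I expect to spend most care on, is the claim that $C(p) = C(q)$ for all $p,q \in \dom(a)$; this is exactly where soundness enters. Suppose for contradiction $C(p) = n \neq n' = C(q)$ with $\{p,q\} \subseteq \ndom(n) \cap \ndom(n') = \dom(a)$. Any outgoing action of $n$ requires all processes of $\ndom(n)$, in particular $q$, to be at $n$, but $q$ sits at $n' \neq n$; symmetrically, no action at $n'$ can fire. Neither $p$ nor $q$ can then move, so $p$ is permanently stuck at $n$ and $\Cfin$ is unreachable from $C$, contradicting the soundness of $\Nn$. This establishes the key lemma and completes the argument.
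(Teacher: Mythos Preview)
Your proof is correct. The paper does not spell out a proof of this corollary but treats it as a consequence of Proposition~\ref{prop:mindfa-negotiation}; the crucial deadlock argument you isolate (two processes $p,q$ with $C(p)=n\neq n'=C(q)$ and $\ndom(n)=\ndom(n')$ are permanently stuck, contradicting soundness) is exactly the one the paper uses inside the proof of that proposition, in the direction $L(\Nn_\Aa)\subseteq L(\Nn)$. You simply apply that argument directly to $\Nn$ by induction on $|w|$, which is a cleaner route than detouring through the minimal automaton $\Nn_\Aa$; the underlying idea is the same.

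One small presentational point: your phrase ``neither $p$ nor $q$ can then move'' only literally addresses the current configuration. It is worth stating explicitly (as you implicitly intend) that this persists along \emph{any} continuation: in every configuration reachable from $C$ process $p$ remains at $n$ and $q$ at $n'$, shown by a one-line induction on the length of the continuation. This makes the contradiction with soundness watertight.
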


Recall that for a regular language $L$ any automaton accepting $L$ can 
be mapped homomorphically to the minimal automaton of $L$. For
deterministic, sound negotiations we have the same phenomenon, where
homomorphisms map nodes to nodes, so that transitions are mapped to
transitions with the same label.

 \begin{corollary}\label{cor:minimal-negotiation}
   Let $\Nn$ be sound, deterministic, and let $\Aa$ be the minimal DFA accepting
   $\paths(\Nn)$. Then there is a homomorphism from $\Nn$ to $\Nn_\Aa$.
 \end{corollary}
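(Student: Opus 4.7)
The plan is to build the homomorphism $h : N \to S$ by reading local paths. For each node $n$ of $\Nn$ choose any local path $\pi_n$ from $\ninit$ to $n$ (such a $\pi_n$ exists, since by convention every node of $\Nn$ lies on some local path $\ninit \to \nfin$), and define $h(n)$ to be the state of $\Aa$ reached from $s^0$ on input $\pi_n$. The recall is that $\Nn_\Aa$ has the states of $\Aa$ as nodes, so $h$ takes values in the node set of $\Nn_\Aa$.

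The first step, and the one that carries the content, is well-definedness: I would argue that $h(n)$ does not depend on the chosen $\pi_n$. Suppose $\pi,\pi'$ are two local paths from $\ninit$ to $n$ in $\Nn$. Because $\Nn$ is deterministic, a word $\sigma \in A_\dom^*$ labels a local path from $n$ to $\nfin$ independently of how we arrived at $n$; hence
\[
\{\sigma : \pi\sigma \in \paths(\Nn)\} = \{\sigma : \sigma \text{ labels a local path } n \to \nfin\} = \{\sigma : \pi'\sigma \in \paths(\Nn)\}.
\]
Thus $\pi$ and $\pi'$ are Myhill--Nerode equivalent with respect to $\paths(\Nn)$, so they reach the same state in the minimal DFA $\Aa$. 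This gives a well-defined $h : N \to S$, with $h(\ninit) = s^0 = \ninit_{\Nn_\Aa}$ via the empty path.

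Next I would check that $h$ transports transitions. If $\d_\Nn(n,a,p) = n'$ and $\pi_n$ is a local path to $n$, then $\pi_n \cdot a_p$ is a local path to $n'$, so by well-definedness $h(n') = \d_\Aa(h(n), a_p)$. By Definition~\ref{def:neg-from-automaton} this is exactly $\d_{\Nn_\Aa}(h(n), a, p)$. The node-domain condition $\ndom_\Nn(n) = \ndom_{\Nn_\Aa}(h(n))$ is then immediate: any outgoing $a \in \out_\Nn(n)$ witnesses $a_p \in \out_\Aa(h(n))$, and Definition~\ref{def:neg-from-automaton} reads off $\ndom_{\Nn_\Aa}(h(n)) = \dom(a) = \ndom_\Nn(n)$. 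For $\nfin$ the argument is slightly different, and this is the one small piece I expect to need a moment of care: $\nfin$ has no outgoing transitions in $\Nn$, so any local path $\pi_{\nfin}$ from $\ninit$ to $\nfin$ has empty continuation in $\paths(\Nn)$; in the minimal DFA all states with future $\{\varepsilon\}$ are identified, so $h(\nfin)$ is the unique such sink accepting state, which is precisely $s_f = \nfin_{\Nn_\Aa}$.

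The only subtle step is the well-definedness argument, but once determinism of $\Nn$ is exploited together with the trimness convention on nodes, it reduces to the standard Myhill--Nerode identification in $\Aa$. Everything else is a direct unfolding of Definition~\ref{def:neg-from-automaton} applied to the map $h$.
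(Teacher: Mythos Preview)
Your proof is correct and is precisely the detailed unfolding of the argument the paper sketches in the paragraph preceding the corollary: view the graph of $\Nn$ as a deterministic automaton over $A_\dom$ accepting $\paths(\Nn)$, use the standard Myhill--Nerode quotient map onto the minimal DFA $\Aa$, and observe via Definition~\ref{def:neg-from-automaton} that this automaton morphism is a negotiation homomorphism. The paper does not spell out a proof beyond that remark, so your write-up supplies exactly the missing details (well-definedness from determinism, transport of transitions, and the treatment of $\nfin$ via prefix-freeness of $\paths(\Nn)$).
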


\begin{corollary}\label{cor:equiv}\anca{added}
  Language equivalence of sound, deterministic negotiations can be
  checked in \PTIME.
\end{corollary}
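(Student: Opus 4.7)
The plan is to reduce language equivalence to polynomial-time operations on DFAs. Given sound and deterministic $\Nn_1,\Nn_2$ over the same distributed alphabet, I view each $\Nn_i$ as a DFA $\Aa_i$ over $A_\dom$ accepting $\paths(\Nn_i)$: the graph of $\Nn_i$ already is such a DFA, of size polynomial in $|\Nn_i|$. I then minimize each $\Aa_i$ by Hopcroft's algorithm and check equivalence of the two minimal DFAs, both in polynomial time. The correctness of the test rests on the claim that $L(\Nn_1)=L(\Nn_2)$ iff $\paths(\Nn_1)=\paths(\Nn_2)$.

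The easy direction is that $\paths(\Nn_1)=\paths(\Nn_2)$ implies $L(\Nn_1)=L(\Nn_2)$. Combining Lemma~\ref{lem:locpath} (every projection of an execution is a local path) with Corollary~\ref{cor:projections} (the converse inclusion) yields the identity $L(\Nn_i)=\{w\in\Act^*:w|_p\in\paths(\Nn_i)\text{ for all }p\in\Proc\}$, from which the two execution languages coincide.

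For the converse, I would first apply Proposition~\ref{prop:mindfa-negotiation} to replace each $\Nn_i$ by its minimal form $\Nn_{\Aa_i}$; since this preserves the execution language, from $L(\Nn_1)=L(\Nn_2)$ I obtain $L(\Nn_{\Aa_1})=L(\Nn_{\Aa_2})$. I then build an isomorphism $\phi$ between $\Nn_{\Aa_1}$ and $\Nn_{\Aa_2}$, from which $\paths(\Nn_1)=\paths(\Nn_{\Aa_1})=\paths(\Nn_{\Aa_2})=\paths(\Nn_2)$ follows by the very definition of $\Nn_\Aa$. To define $\phi$, for each node $n$ of $\Nn_{\Aa_1}$ I use soundness to pick an execution $w_n$ reaching the distinguished configuration $I(n)$ of Theorem~\ref{th:I}, run $w_n$ in $\Nn_{\Aa_2}$ (possible by language equivalence), and set $\phi(n)$ to be the unique enabled node in the resulting configuration. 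The uniqueness clauses of Theorem~\ref{th:I} applied in both directions make $\phi$ a well-defined bijection, and extending $w_n$ by one transition shows that $\phi$ commutes with $\d$.

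The main obstacle is verifying that the configuration reached in $\Nn_{\Aa_2}$ by running $w_n$ really has a single enabled node with the same domain and outgoing actions as $n$. I would argue this using that the set of actions enabled at $I(n)$ equals $\out(n)$ and shares the common domain $\ndom(n)$: language equivalence forces the same set of actions to be available at the target configuration in $\Nn_{\Aa_2}$, and the dom-completeness properties of the minimal forms (Definition~\ref{def:domcomplete} together with Remark~\ref{rem:synt-path}) then force those actions to come out of a single enabled node whose domain equals $\ndom(n)$; that node is $I(n')$ for a unique $n'$ by Theorem~\ref{th:I} applied to $\Nn_{\Aa_2}$, which is the $\phi(n)$ I seek.
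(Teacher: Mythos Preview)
The paper states Corollary~\ref{cor:equiv} without proof, so there is no reference argument to compare against. Your reduction to $L(\Nn_1)=L(\Nn_2)\Leftrightarrow\paths(\Nn_1)=\paths(\Nn_2)$ followed by a DFA-equivalence test on the path languages is the natural route, and your strategy for the hard direction via an isomorphism of the minimal forms is correct in spirit.

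One step is underspecified: the well-definedness of $\phi$. Theorem~\ref{th:I} says that a configuration with a single enabled node $n'$ must equal $I(n')$, but it does \emph{not} say that two executions $w,w'$ reaching $I(n)$ in $\Nn_{\Aa_1}$ reach the same configuration of $\Nn_{\Aa_2}$. Distinct nodes of $\Nn_{\Aa_2}$ may share the same domain and even the same outgoing action set, so your last paragraph does not yet pin down the target node. What does pin it down is minimality of $\Aa_2$. If $w,w'$ reached $I(m)$ and $I(m')$ with $m\neq m'$ in $\Nn_{\Aa_2}$, then $I(m)$ and $I(m')$ have the same set of continuations to $\Cfin$ (since $w,w'$ reach the same configuration $I(n)$ in $\Nn_{\Aa_1}$ and $L(\Nn_{\Aa_1})=L(\Nn_{\Aa_2})$); one must then argue that in the minimal negotiation $\Nn_{\Aa_2}$ this forces $m=m'$. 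That argument is essentially your single-enabled-node reasoning iterated along a local path separating $m$ from $m'$ in $\Aa_2$, so the gap is fillable with the same tools you already invoke---but it is a genuine additional step, not a consequence of the uniqueness clause in Theorem~\ref{th:I}.
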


\section{Angluin learning  for finite automata}\label{sec:angluin}

We briefly present a variant of  Angluin's $L^*$ learning algorithm for finite automata.
Our approach is particular because it works with automata that are not
necessarily complete. 
This will be very useful when we extend the algorithm to learn negotiations.
The automata coming from negotiations, dom-complete automata as in
Definition~\ref{def:domcomplete}, are in general not complete.

Angluin-style learning of finite automata relies on the Myhill-Nerode equivalence
relation, which in turn accounts for the unicity of the minimal DFA of
a regular language.
A \Learner\ wants to compute the minimal DFA $\Aa$ of an unknown regular
language $L \subseteq A^*$.
For this she interacts with a \Teacher\ by
asking membership queries $w \in^? L$ and equivalence
queries $L(\tAa) =^? L$, for some word $w$ or automaton $\tAa$.
To the first type of query Teacher replies yes or no, to the second Teacher
either says yes, or provides a word that is a counterexample to the equality of the
two languages.

Angluin's algorithm maintains two finite sets of words, a set $Q
\subseteq  A^*$
of \emph{state words} and a set $T \subseteq A^*$ of \emph{test words}.
The sets $Q,T$ are used to construct a deterministic candidate
automaton $\tAa$ for $L$.
The elements of $Q$ are the states of $\tAa$.
The set $Q$ is prefix-closed and $\e \in Q$ is the initial state of $\tAa$.

The set of words $T \subseteq A^*$ determines an equivalence
relation $\eqT$ on $A^*$ approximating Myhill-Nerode's right congruence
$\equiv^L$ of $L$: 
  $u\eqT v \quad\text{if}\quad \text{for all $t\in T$, \; $ut\in L$ iff $vt\in L$}$.
Angluin's algorithm maintains two invariants, \Unique\ and \Closure.

\begin{description}
 \item[\Unique] \text{for all $u,v\in Q$, if $u\eqT v$ then
     $u=v$}\,.
\end{description}

Observe that if  $u\equiv^L v$ then $u\eqT v$. 
So $\eqT$ has no more equivalence classes than the Myhill-Nerode's congruence
$\equiv^L$.
Since Angluin's algorithm adds at least one state in every round, the
consequence of \Unique\ is that the number of rounds is bounded by the index of $\equiv^L$,
or equivalently by the size of the minimal automaton for $L$.


In the original Angluin's algorithm the candidate automata maintained by
\Learner\ are complete, every state has an outgoing transition on every
letter.
When learning negotiations, it is more natural to work with automata that
are incomplete.
Because of this we have a third parameter besides
$Q,T$, which is a mapping  $\out:Q\to 2^A$, telling for each state
what are its outgoing transitions defined so far.
The original closure condition of Angluin's algorithm
now becomes: 
\begin{description}
 \item[\Closure] \; \text{for all $u \in Q, a \in \out(u)$
     there exists } $v \in Q$ \text{ with }  $ua \eqT v$\ .
\end{description}

For $(Q,T,\out)$ satisfying \Unique\ and \Closure\ we can now construct an automaton:
  $\tAa=\struct{Q,A,\de,q_\init,F}$
with  state space $Q$ and alphabet $A$.
The initial state is $\e$, and the final states of $\tAa$ are the states $u \in
Q \cap L$.
The partial transition function  $\de:Q \times A \stackrel{.}{\to} Q$ is
defined by:
\begin{equation*}
  \d(u,a)=v\quad \text{ if }\quad  \text{$u a\eqT v$ and $a\in\out(u)$}\ .
\end{equation*}
Thanks to \Unique\ there can be at most one  $v$ as above.
While \Closure\  guarantees that $\d(u,a)$ is defined iff $a\in\out(u)$.

The learning algorithm works as follows.
Initially, $Q=T=\set{\e}$ and $\out(\e)=\es$.
Note that $(Q,T,\out)$ satisfies \Unique\ and \Closure.
The algorithm proceeds in rounds.
A round starts with $(Q,T,\out)$ satisfying both invariants. 
\Learner\ can construct a candidate automaton $\tAa$.
She then asks \Teacher\ if $\tAa$ and $\Aa$ are equivalent. 
If yes, the algorithm stops, otherwise \Teacher\ provides a counter-example word
$w \in A^*$.
It may be a positive counter-example, $w\in L\setminus L(\tAa)$, or a negative
one, $w\in L(\tAa)\setminus L$.
In both cases \Learner\ extends $(Q,T,\out)$ while preserving
the invariants.
Then a new round can start.
The details can be found in the Appendix.

\section{Learning negotiations with local queries}\label{sec:paths}


We present our first algorithm for learning sound deterministic negotiations.
This algorithm serves as intermediate step to the main learning
algorithm of Section~\ref{sec:executions} that uses
only executions as queries.

Recall that an execution is a sequence over $\Act$; where $\Act$ is an alphabet
of actions equipped with a domain function $\dom:\Act\to2^\Proc$.
Local paths are  sequences over the alphabet $A_{\dom}=\set{a_p : a \in \Act, p \in
\dom(a)}$.
They correspond to paths in the graph of the negotiation. 



We assume that \Teacher\ knows a sound deterministic  negotiation $\Nn$
over the distributed alphabet $(\Act,\dom:\Act\to2^\Proc)$. 
\Learner\ wants to determine the minimal negotiation $\tNn$ with $L(\tNn)=L$. 
By Corollary~\ref{cor:minimal-negotiation} this minimal negotiation is
$\Nn_\Aa$, with $\Aa$ the minimal
automaton for the regular language $\paths(\Nn)$.
Our  algorithm uses two types of queries:
\begin{itemize}
  \item membership queries $\pi \in^? \paths(\Nn)$, to which \Teacher\
  replies yes or no; 
  \item equivalence queries: $L(\tNn) =^?L(\Nn)$ to which \Teacher\ either replies
  yes, or gives an execution $w \in \Act^*$ in the symmetric difference of $L(\tNn)$ and
  $L(\Nn)$. 
\end{itemize}
The structure of the algorithm will be very similar to the one for DFA from Section~\ref{sec:angluin}.
Let us explain two new issues we need to deal with.
Learner will keep a tuple  $(Q,T,\out)$, with $Q,T \subseteq
\Adom^*$ and $\out:Q \to  2^{\Adom}$,  satisfying invariants
\Unique\ and \Closure.
This tuple defines an automaton $\tAa$ as in Section~\ref{sec:angluin}.
Learner constructs from $\tAa$ a negotiation $\tNn$ as in
Definition~\ref{def:neg-from-automaton}.  
She proposes $\tNn$ to Teacher, and if Teacher answers with a
counter-example execution she uses it to extend $(Q,T,\out)$ and construct a new $\tNn$.
Compared to learning finite automata, we have two new issues.
We need to impose additional invariants to obtain a dom-complete automaton
$\tAa$ (Definition~\ref{def:domcomplete}) as this is required to construct $\tNn$.
More importantly, we need to find a way how to exploit a counter-example that is
an execution and not a local path (Lemmas~\ref{lem:lp-pos} and~\ref{lem:lp-neg}).

We will write $L_P$ as shorthand for $\paths(\Nn)$. 
Since final nodes of negotiations do not have outgoing actions,
$L_P$ is prefix-free.
Said differently, all words in $L_P$ are $\eqT$ equivalent as soon as
$\e\in T$.
In particular, there will be a unique  final state (with no outgoing
transitions)  in the automaton $\tAa$ constructed from $(Q,T,\out)$.
We write $[u]_T$ for the $\eqT$-class of $u \in \Adom^*$. 
The learning algorithm will preserve the  following invariants for
a triple $(Q,T,\out)$: 
\begin{description}
 \item[\Unique] For all $u,v\in Q$, $u \eqT v$ implies $u=v$.
\item[\Closure] For every $u \in Q, a\in\out(u)$ there exists $v \in Q$
  with $ua \eqT v$.
 \item[\PREF] For every $u \in Q$ there is some $t \in T$ with $u t \in
   L_P$. 
 \item[\DOM] For every $u \in Q$ and every $a_p,a_q\in \Adom$:
 $a_p \in out(s)$ iff  $a_q \in \out(s)$.
\end{description}
The first two invariants are the same as in Section~\ref{sec:angluin}.
The third one is important to determine the domain of a node:
if $u$ can be extended to a complete path, we know one outgoing action from $u$,
and this determines the domain of $u$.
The last invariant is the first condition of dom-completeness
(Definition~\ref{def:domcomplete}).
Note that \Closure\ and \PREF\ entail the other condition of
dom-completeness: if $\set{a_p,b_q} \subseteq \out(u)$ for $u \in Q$
then $ua_p$ and $ub_q$ are local paths because of \Closure\ and \PREF; 
so $u$ leads in $\Nn$ to some node $n$ with  outgoing actions  $a,b$, hence $\dom(a)=\dom(b)=\ndom(n)$.

\begin{lemma}
  If $(Q,T,\out)$ satisfies all four invariants \Unique, \Closure,
  \PREF, \DOM\ then
  the associated automaton
  $\tAa$ is dom-complete, so a deterministic negotiation $\tNn=\Nn_{\tAa}$ can be
  defined, see~Definition~\ref{def:neg-from-automaton}.
\end{lemma}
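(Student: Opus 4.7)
The plan is to verify the three clauses of Definition~\ref{def:domcomplete} for the automaton $\tAa$ built from $(Q,T,\out)$, and then to observe that the unique final state of $\tAa$ carries an empty outgoing set, so that Definition~\ref{def:neg-from-automaton} applies and yields a deterministic negotiation $\tNn = \Nn_{\tAa}$.

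The first clause (symmetry of $a_p$ and $a_q$ in $\out(s)$ for $p,q\in\dom(a)$) is literally invariant \DOM, so there is nothing to do. For the second clause I would reproduce the argument already sketched in the paragraph preceding the lemma. Fix $u\in Q$ and $\{a_p,b_q\}\subseteq \out(u)$. Invariant \Closure\ supplies $v,v'\in Q$ with $ua_p\eqT v$ and $ub_q\eqT v'$, and \PREF\ applied to $v,v'$ supplies $t,t'\in T$ with $vt,v't'\in L_P$. Unpacking $\eqT$ on the witnesses $t,t'\in T$ gives $ua_p t\in L_P$ and $ub_q t'\in L_P$. So $ua_p$ and $ub_q$ both extend to local paths of $\Nn$; since $\Nn$ is deterministic, $u$ labels a unique local path from $\ninit$ to some node $n$, at which both $a$ and $b$ are outgoing. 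By the definition of a negotiation diagram this forces $\dom(a)=\dom(b)=\ndom(n)$.

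For the third clause (some $a_p\in\out(s_\init)$ with $\dom(a)=\Proc$), I would apply \PREF\ to $\e\in Q$ to get $t\in T$ with $t\in L_P$. In the non-trivial case $t$ begins with some letter; the corresponding action $a$ at $\ninit$ has $\dom(a)=\ndom(\ninit)=\Proc$, and \DOM\ then propagates the presence in $\out(\e)$ of one $a_p$ to every $a_q$ with $q\in\dom(a)$. The degenerate case $t=\e$ forces $L_P=\set{\e}$, so $\Nn$ is trivial with $\ninit=\nfin$ and no transitions, and the single-state $\tAa$ yields the trivial negotiation directly. Finally, the unique state $s_f\in Q\cap L_P$ has no $\tAa$-successor since $L_P$ is prefix-free, so $\out(s_f)=\es$, matching the hypothesis of Definition~\ref{def:neg-from-automaton}.

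The main subtlety I expect is the third clause: nothing in the four invariants taken in isolation forces $\out(\e)$ to be non-empty when $\paths(\Nn)$ contains a non-empty word, so the proof should cite the algorithm's maintenance that $\out(\e)$ is populated as soon as the first letter of some counterexample is processed. Once that point is handled, everything else is a direct combination of \DOM, \Closure\ and \PREF, and \Unique\ is used only implicitly to guarantee that the transition function of $\tAa$ is well-defined.
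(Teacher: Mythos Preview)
Your argument for clauses (1) and (2) of dom-completeness is exactly the paper's: clause (1) is literally \DOM, and clause (2) is the combination of \Closure\ and \PREF\ spelled out in the paragraph immediately preceding the lemma. The paper gives no proof beyond that paragraph, so your write-up is at least as complete as the original.

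Your observation about clause (3) is on point and worth keeping. The four invariants alone do \emph{not} force $\out(\e)\neq\es$: the initial triple $(\{\e\},T,\out(\e)=\es)$ satisfies all four invariants but fails the ``moreover'' clause of Definition~\ref{def:domcomplete} whenever $L_P\neq\{\e\}$. The paper silently relies on the fact that $\tNn$ is only built after $\OUT$ has been called at least once (line~4 of Algorithm~1), which populates $\out(\e)$ with some $b_p$ where $\dom(b)=\Proc$. So your suggested fix---cite the algorithm's maintenance of $\out(\e)$---is the right one; the lemma as stated is slightly under-hypothesised.

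One small sharpening: your argument that $\out(s_f)=\es$ should not read ``no successor, hence $\out(s_f)=\es$'' (that is the wrong direction, since $\out$ is data, not derived). The correct derivation is: if $a_p\in\out(s_f)$ then \Closure\ and \PREF\ give $s_f a_p t\in L_P$ for some $t\in T$, contradicting prefix-freeness of $L_P$ since already $s_f\in L_P$. This is implicit in what you wrote but deserves to be stated in the right order.
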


Henceforth we use $\tNn$ to denote the negotiation $\Nn_{\tAa}$ and $\tL$ to denote the language of $\tNn$.

The next two lemmas lay the ground to handle  counter-examples
provided by Teacher. 
Suppose $(Q,T,\out)$  satisfies all four invariants.
Teacher replies with $w$ in the symmetric difference of $L$ and $\tL$.
As  $w$ is an execution, and not a local path, it 
can be seen as a (Mazurkiewicz) trace.
We will use operations on traces 
introduced on page~\pageref{def:traces}. 

The main point of the next lemma is not stated there explicitly.
An execution in a negotiation $\tNn$ may reach a deadlock. 
The lemma says that we do not have to deal with this situation because
we can look backwards either for  a place
where we need to add a  node (\NEQ) or a transition (\OUTINC).


\begin{lemma}\label{lem:lp-pos}
  Consider a positive counter-example $w\in L\setminus \tL$. 
  Let $v$ be the maximal trace-prefix of $w$ executable in $\tNn$. 
  So we have $\tCinit\act{v}\tC$ in $\tNn$, and no action in
  $\min(v^{-1}w)$ can   be executed from $\tC$. 
  With at most $|\Proc|$  membership queries \Learner\ can determine one of the following situations:
  \begin{description}
    \item[\OUTINC:]  An action $b\in\min(v^{-1}w)$, a node
      $u \in \Adom^*$ of $\tNn$, and a sequence $r
      \in \Act^*$  starting with $b$ such that for every $p\in\dom(b)$:
      \[ u \,
        r|_p\in L_P \quad \text{and } \quad  b_p\not\in \out(u)\,.
        \]
    \item[\NEQ:] A process $p$, and a local path $\pi \in \Adom^*$
      such that
      \[ v|_p\, \pi\in L_P \quad \not\iff \quad  u\, \pi\in L_P \quad \text{ with
        } u=\tC(p)\,.
      \]
      \end{description}
\end{lemma}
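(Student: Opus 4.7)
My plan is to fix an arbitrary action $b \in \min(v^{-1}w)$ together with a representative $r \in \Act^*$ of the trace $v^{-1}w$ whose first letter is $b$, and then to split on whether the processes in $\dom(b)$ all agree on their location in $\tC$. Before the case split I collect two consequences of Lemma~\ref{lem:locpath}. Because $L(\Nn)$ is trace-closed and $vr \tequiv w \in L(\Nn)$, we have $vr \in L(\Nn)$, so $v$ also executes in $\Nn$, reaching some configuration $C^\Nn$ from which $b$ is enabled; hence all $p \in \dom(b)$ share a common node $n_b$ of $\Nn$ in $C^\Nn$, and the projections $v|_p$ for $p \in \dom(b)$ are local paths from $\ninit$ to $n_b$ in $\Nn$, giving $v|_p \equiv^{L_P} v|_q$ for every $p,q \in \dom(b)$. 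On the other side, $v|_p$ leads in $\tNn$ from $\ninit$ to $u_p := \tC(p) \in Q$, so $v|_p \eqT u_p$, and $v|_p\,r|_p = w|_p \in L_P$ because $w \in L(\Nn)$.

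In the easy case, where $\tC(p) = u$ is a single node for all $p \in \dom(b)$, the non-executability of $b$ at $\tC$ forces $b \notin \out(u)$ in $\tNn$, equivalently $b_p \notin \out(u)$ for every $p \in \dom(b)$ by invariant \DOM. I then ask, for each $p \in \dom(b)$, the membership query $u\,r|_p \in^? L_P$, for a total of at most $|\dom(b)| \le |\Proc|$ queries. If every answer is yes, the triple $(u,b,r)$ directly witnesses \OUTINC. Otherwise some $p$ answers no; combined with $v|_p\,r|_p \in L_P$, this yields \NEQ at $p$ with $\pi = r|_p$.

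The main obstacle is the other case, where some $p, q \in \dom(b)$ satisfy $u_p \neq u_q$: now no single node fits the \OUTINC template, and asking $u_{p'}\,r|_{p'} \in^? L_P$ for each $p'$ may return yes uniformly and expose no usable mismatch. My plan is to bypass those queries and exploit invariant \Unique\ directly. Since $u_p, u_q \in Q$ are distinct, \Unique\ gives $u_p \not\eqT u_q$, so Learner's observation table already records a separating test $t \in T$ with, say, $u_p\,t \in L_P$ and $u_q\,t \notin L_P$; no query is needed to locate it. A single further query $v|_p\,t \in^? L_P$ then decides the case via $v|_p \equiv^{L_P} v|_q$: a yes answer forces $v|_q\,t \in L_P$ while $u_q\,t \notin L_P$, so $(q, t)$ witnesses \NEQ; a no answer gives $v|_p\,t \notin L_P$ while $u_p\,t \in L_P$, so $(p, t)$ witnesses \NEQ. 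The overall query count is at most $\max(|\dom(b)|, 1) \le |\Proc|$, as required.
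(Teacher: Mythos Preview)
Your case split and your handling of the two cases mirror the paper's proof almost exactly; the paper also uses invariant \DOM\ in the ``all $\tC(p)$ equal'' case and invariant \Unique\ together with $v|_p \equiv^{L_P} v|_q$ in the ``two distinct $\tC(p)$'' case. Two issues remain, one a genuine gap and one a spurious claim.

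The gap is that you never treat the possibility $v \tequiv w$, i.e.\ the case where all of $w$ is executable in $\tNn$ but the run ends in a non-final configuration $\tC$. In that situation $v^{-1}w$ is empty, so there is no $b \in \min(v^{-1}w)$ to fix, and your argument never starts. The paper handles this separately: pick any $p$ with $\tC(p)$ not the final node of $\tNn$; then $u = \tC(p) \notin L_P$ while $w|_p \in L_P$ because $w \in L(\Nn)$, yielding \NEQ\ with $\pi = \e$ and no membership query at all. You should add this as a preliminary case.

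The spurious claim is the assertion ``$v|_p$ leads in $\tNn$ from $\ninit$ to $u_p$, so $v|_p \eqT u_p$''. This inference is unsound: $\eqT$ is only an equivalence, not a right congruence, so a run $\e \act{a_1} u_1 \act{a_2} \cdots \act{a_k} u_k$ in $\tAa$ gives $u_{i-1}a_i \eqT u_i$ at each step but \emph{not} $a_1\cdots a_k \eqT u_k$ in general; indeed the whole point of the \NEQ\ case (and of the subsequent binary search in the algorithm) is that this equivalence can fail. Fortunately you never use this claim downstream---both your cases rely only on $v|_p\,r|_p = w|_p \in L_P$ and on $v|_p \equiv^{L_P} v|_q$---so simply delete the sentence.
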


The case of negative counter-examples is much simpler, and 
we get the \NEQ\ case as in Lemma~\ref{lem:lp-pos} for $\pi=\e$:

\begin{lemma}\label{lem:lp-neg}
  Consider a negative  counter-example $w\in \tL\setminus L$, and let $\tCinit \act{w} \tC$.
  With at most $|\Proc|$  membership queries \Learner\ can find a process $p$
      such that
      \[ w|_p \in L_P \quad \not\iff \quad  u\ \in L_P \quad \text{ for
        } u=\tC(p)\,.
      \]
    \end{lemma}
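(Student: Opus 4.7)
The plan is to exploit the fact that a negative counter-example $w$ is executable all the way through $\tNn$, so unlike the positive case there is no partial-execution subtlety: both $\tC$ and each $\tC(p)$ are completely determined as final objects of $\tNn$, and the only question is how $w$ fails with respect to $\Nn$.

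First I would observe that because $w \in \tL$, the execution $\tCinit \act{w} \tC$ actually reaches the unique final configuration $\tCfin$ of $\tNn$, so $\tC(p) = \tnfin$ for every process $p$. By Definition~\ref{def:neg-from-automaton} and the construction of $\tAa$ recalled in Section~\ref{sec:angluin}, the unique final node of $\tNn$ is the unique state $u \in Q$ that already belongs to $L_P$ (the state word $u$ itself lies in $L_P$, and has $\out(u)=\es$). In particular, $u := \tC(p) \in L_P$ for every process $p$; no membership query is needed to establish this side of the biconditional.

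Next I would use Corollary~\ref{cor:projections} in contrapositive form: since $w \notin L = L(\Nn)$, there must exist at least one process $p$ such that $w|_p \notin L_P = \paths(\Nn)$. To locate such a $p$, \Learner\ simply iterates over the processes and asks the membership query $w|_p \in^? L_P$ for each; by Lemma~\ref{lem:locpath} each $w|_p$ is a syntactically well-formed local path in $\tNn$, so the question is well-defined. At most $|\Proc|$ queries suffice, and by Corollary~\ref{cor:projections} at least one of them must come back negative. For the $p$ thus obtained we have $w|_p \notin L_P$ and $u \in L_P$, which is exactly the mismatch $w|_p \in L_P \not\iff u \in L_P$ required.

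There is no real obstacle here: the conceptual work is already done in Corollary~\ref{cor:projections} and in the construction of $\tnfin$ from the unique state of $\tAa$ in $Q \cap L_P$. The only point one has to be careful with is to notice that, unlike in the positive case treated by Lemma~\ref{lem:lp-pos}, there is no need to search for a trace-prefix or to look for an absent transition, because $w$ is fully executable in $\tNn$ and lands in the final configuration; everything reduces to the single fact that one of the projections $w|_p$ must witness the failure $w \notin L$.
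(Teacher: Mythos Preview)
Your proposal is correct and follows essentially the same approach as the paper's proof: both observe that $w\in\tL$ forces $\tC(p)$ to be the unique final node of $\tNn$, hence $\tC(p)\in L_P$ for every $p$, and then invoke the contrapositive of Corollary~\ref{cor:projections} to find, with at most $|\Proc|$ membership queries, a process $p$ with $w|_p\notin L_P$. Your aside about Lemma~\ref{lem:locpath} is harmless but unnecessary, since the membership query $w|_p\in^? L_P$ concerns $\Nn$, not $\tNn$.
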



\paragraph{Processing counter-examples.}

We describe now how to deal with the two cases
\OUTINC\ and  \NEQ\ of Lemmas~\ref{lem:lp-pos} and \ref{lem:lp-neg}.
Before we start we observe that \PREF\ and \Closure\ entail the
following variant of \PREF, that will
be useful below:
\begin{description}
  \item[\PREF'] for every $u'\in Q$ and every $a_p\in\out(u')$
  there is some $t\in T$ such that $u'a_pt\in L_p$.
\end{description}
Indeed, using \Closure\ we get some $v\in Q$ with $u'a_p\eqT v$, and because of
\PREF, there is some $t$ with $vt\in L_P$.
So $u'a_pt\in L_P$.



\paragraph{\textbf{\OUTINC\ case.}}
We have some node $u\in Q$ with $ur|_p\in L_P$, for $r \in\Act^*$ starting
with $b$,  and $b_p\not\in\out(u)$ for all $p\in\dom(b)$.

For every $p \in\dom(b)$, we add $b_p$ to $\out(u)$ and $(b^{-1}r)|_p$ to $T$.
For each $b_p$, one at a time, we check if there is some $v \in Q$
with $u b_p \eqT v$. 
If not, we add $u b_p$ to $Q$ with $\out(u b_p)=\es$.
This step preserves \Unique\ and \DOM.
Also \PREF\ holds if $ub_p$ is added
to $Q$, because of $(b^{-1}r)|_p\in T$.

Finally, since $T$ changed, \Closure\ must be restored.
\Closure\ holds for newly added $ub_p$, since we set $\out(ub_p)=\es$.
The other $u'\in Q$ are those that were there already at the beginning of the
round. 
If $u'\not=u$ then $\out(u')$ is unchanged, so \PREF' continues to hold. 
For $u'=u$ we have established \PREF' by adding $(b^{-1}r)|_p$ to $T$. 
In both cases, if for some  $a_p \in \out(u')$ there is no $v\in Q$
with $u' a_p \eqT v$ then we add $u'a_p$ to $Q$, and set $\out(ua_p)=\es$.
Thanks to \PREF', invariant \PREF\ holds after this extension. 
The other invariants are clearly preserved.

\paragraph{\textbf{\NEQ\ case.}}
We have a process $p$, a node $u\in Q$, a sequence $v\in \Act^*$,  and a local path $\pi$
such that $v|_p\, \pi\in L_P\not\iff u\pi \in  L_P$.
Moreover $\tCinit\act{v}\tC$ and $\tC(p)=u$.

Let $v|_p=a_1 \dots a_k$ and $\e \act{a_1} u_1
\dots \act{a_k} u_k$ the run of $\tAa$ on $v|_p$ (this run exists since $\tCinit \act{v} \tC$).
We have $u=u_k$ and $a_1 \dots a_k \, \pi \in L_P \not\iff u_k \, \pi
\in L_P$.
So there is some $i \in\set{1,\dots,k-1}$ such that
\[
u_i \, a_{i+1} \dots a_k \pi \in L_P \not\iff u_{i+1} \, a_{i+2} \dots
a_k \pi \in L_P
  \]
Such an $i$ can be determined by binary search using $O(\log(k))$
membership queries.
We add $a_{i+2} \dots a_k \pi$ to $T$, $u_i a_{i+1}$ to $Q$, and set
$\out(u_i a_{i+1})=\es$.
The invariants \Unique\ and \DOM\ are clearly preserved.
For \PREF\ note that $a_{i+1}$ already belonged to $\out(u_i)$, so
thanks to \PREF', invariant \PREF\ holds for $u_ia_{i+1}$ as well. 
For \Closure\ we proceed as in case \OUTINC, by enlarging $Q$,  if
necessary.


\paragraph{Learning algorithm.}
We sum up the developments in  this section in the learning algorithm
shown below.
The initialization step of our algorithm consists in asking
\Teacher\ an equivalence query for the empty negotiation $\Nn_\es$; this is a
negotiation consisting of two nodes $\ninit,\nfin$ and empty transition mapping $\de$.
\Teacher\ either says yes or returns a positive example $w \in L$.
Note that the first action of $w$ must involve all the processes because the domain
of the initial node is the set of all processes.
So $w=bw'$ for some $b \in\Act$ with $\dom(b)=\Proc$.
We initialize $(Q,T,\out)$ by setting $Q=\set{\e}$, $T=\set{w|_p : p
  \in \Proc}$,  and $\out(\e)=\es$.
All invariants are clearly satisfied.
Observe that we have here the \OUTINC\ case of Lemma~\ref{lem:lp-pos}
with $b \in \Act$ as above, $u=v=\e$ and $r=w$.

\begin{algorithm}
   $(\res,w) \larr \EQ(\Nn_\es$)\;
   \lIf{($\res=\true$)}{\Return $\Nn_\es$}
   $(Q,T,\out) \larr (\set{\e}, \set{w|_p : p \in\Proc},
   \out(\e)=\es)$ 
   $\OUT(w,Q,T,\out)$ \tcp*{add missing transitions}
   $\CLOS(Q,T,\out)$ \tcp*{restore \Closure}
   \While{($\res=\false$)}{
     $\tNn\larr \Negotiation(Q,T,\out)$\tcp*{build $\tNn$}
     $(\res,w) \larr \EQ(\tNn)$ \tcp*{ask \Teacher}
     \lIf{($\res=\true$)}{\Return $\tNn$}
     \lIf(\tcp*[f]{add missing transitions}){(\OUTINC)}{$\OUT(w,Q,T,\out)$}
     \lIf(\tcp*[f]{add new state}){(\NEQ)}{$\BinS(w,Q,T,\out)$}
     $\CLOS(Q,T,\out)$ \tcp*{restore \Closure}
     }\caption{Learning sound negotiations with membership queries about local paths.}
\end{algorithm}

Procedure $\OUT(w,Q,T,\out)$ adds missing transitions as described in case
\OUTINC.
It extends $\out$, and possibly $T$. 
After calling $\OUT$ the invariants \Unique, \PREF, \DOM\
are satisfied.
Each $\OUT$ is followed by $\CLOS$ that restores the  \Closure\ invariant,
as also described in case \OUTINC. 
It may happen that nothing is added by $\CLOS$ operation.
Procedure $\BinS$ performs a binary search and extends $Q,T,\out$ as
described in the \NEQ\ case. 
After its call we are sure that \Closure\ does not hold, so $\CLOS$ adds
at least one new node. 
Thus in every iteration the algorithm extends at least one of $\out$
or $Q$.
For the complexity of Algorithm~1, see the appendix.


\begin{theorem}\label{thm:learning-paths}
  Algorithm~1 actively learns sound deterministic negotiations,
  using membership queries on local paths and equivalence queries returning
  executions. It can learn a negotiation of size $s$ using $O(s(s+
  |\Proc|+\log(m)))$ membership queries and $s$ equivalence queries,
  with $m$ the size of the longest counter-example.
\end{theorem}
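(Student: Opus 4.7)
The plan is to establish correctness, termination within at most $s$ rounds, and an amortised membership-query count matching the stated bound. For correctness I would verify that the initial triple $(Q,T,\out)=(\set{\e}, \set{w|_p : p\in\Proc}, \es)$ satisfies \Unique, \Closure, \PREF, \DOM\ (as observed just before the pseudocode) and that both $\OUT$ and $\BinS$ followed by $\CLOS$ restore the four invariants after every extension; this is essentially the case analysis already carried out for the \OUTINC\ and \NEQ\ branches. The only delicate point is in $\BinS$: adding a suffix to $T$ can only refine $\eqT$, so \Unique\ is preserved, and the binary search then guarantees that the freshly added state $u_i a_{i+1}$ is $\eqT$-distinguished both from $u_{i+1}$ (by the new suffix) and from every other $v\in Q$ (via the pre-existing test that already separated $v$ from $u_{i+1}$). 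Once the invariants hold, $\tNn$ is a well-defined sound deterministic negotiation by Definition~\ref{def:neg-from-automaton}, and a positive equivalence reply gives $L(\tNn)=L(\Nn)$.

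For termination I would show that every round strictly increases $|Q|$ or the total size of $\out$: in the \OUTINC\ branch, $\OUT$ introduces at least one new $b_p\in\out(u)$, while in the \NEQ\ branch, $\BinS$ introduces a genuinely new $\eqT$-class, hence a new element of $Q$. By \Unique, the elements of $Q$ remain pairwise distinct modulo the Myhill--Nerode congruence of $L_P=\paths(\Nn)$, so $|Q|$ is bounded by the number of states of the minimal DFA of $L_P$; by Proposition~\ref{prop:mindfa-negotiation} this equals the number of nodes of the minimal sound deterministic negotiation for $L(\Nn)$, and the bound on transitions is analogous. Both together yield $s$, so the algorithm issues at most $s$ equivalence queries.

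For the membership count I would amortise over the whole run rather than per round. Classifying each counter-example as \OUTINC\ or \NEQ\ uses at most $|\Proc|$ queries by Lemmas~\ref{lem:lp-pos} and~\ref{lem:lp-neg}, contributing $O(s\,|\Proc|)$ in total. Each binary search inside $\BinS$ costs $O(\log m)$ queries, contributing $O(s \log m)$. Finally, organising the tests in $T$ as a discriminator tree, each closure lookup costs $O(|T|)=O(s)$ queries; the total number of such lookups over the whole execution is $O(s)$, one per state or transition created, giving $O(s^2)$. Summing yields the claimed $O(s(s+|\Proc|+\log m))$ membership queries.

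The main obstacle is the interplay between two effects that must be tracked simultaneously: the cascading character of $\CLOS$, where a new transition can spawn further states whose own outgoing transitions may again violate \Closure, and the fact that counter-examples are \emph{executions} rather than local paths, so that the local-path information extracted via Lemmas~\ref{lem:lp-pos} and~\ref{lem:lp-neg} must be chosen so as to preserve \PREF\ and \DOM\ on the next candidate. Amortising over the entire run, rather than per round, tames the cascade; adding to $T$ precisely the trace-suffix produced by the lemmas (or its projection onto the relevant process) is what keeps \PREF\ and \DOM\ intact.
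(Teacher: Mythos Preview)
Your plan is essentially the paper's own argument: maintain the four invariants, show each round strictly grows $Q$ or $\out$, bound both by the size $s$ of the minimal negotiation via Proposition~\ref{prop:mindfa-negotiation}, and amortise the membership queries as $O(s\,|\Proc|)$ for classifying counter-examples, $O(s\log m)$ for the binary searches, and $O(s^2)$ for closure checks (using $|T|\le |Q|+|\out|\le s$). The termination and complexity accounting match the paper almost line for line.

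Two small corrections. First, the invariants do \emph{not} make $\tNn$ sound; Definition~\ref{def:neg-from-automaton} only yields a deterministic negotiation, and the paper explicitly remarks that Algorithm~1 may issue equivalence queries for unsound $\tNn$ (soundness is restored only in Algorithm~2). This does not damage your argument, since neither the equivalence query nor Lemmas~\ref{lem:lp-pos} and~\ref{lem:lp-neg} need $\tNn$ to be sound, but you should drop the word ``sound'' from that sentence. Second, the cascade you worry about in $\CLOS$ does not occur: every state created by $\CLOS$ (or by $\OUT$, or by $\BinS$) is given $\out=\es$, so \Closure\ holds trivially at the new state and the procedure never recurses. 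The $O(s^2)$ bound for closure then follows directly: over the whole run, each of the at most $s$ defined pairs $(u,a_p)$ is tested against at most $|T|\le s$ suffixes.
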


It is possible to modify Algorithm~1 so that equivalence queries are
asked only for $\tNn$ sound.
We do this in our second, main Algorithm~2.
Here the presentation is clearer without this step.

  
\section{Learning negotiations by querying executions}\label{sec:executions}
Our second learning algorithm asks membership queries about executions and not
about local paths. 
The immediate consequence is that $Q$ and $T$ are built from executions and not
from local paths.
Executions are sequences of actions from $\Act$, but since $\Act$ is a
distributed alphabet we consider them as (Mazurkiewicz) traces.
The trace structure of executions will be essential. 
The challenge is how to construct membership queries about executions, and how
to extract useful information from the answers. 


Throughout the section we fix the sound deterministic negotiation
$\Nn$ we want to learn.
We use the same notations as in Section~\ref{sec:paths}, namely $L,
\tNn, \tL$.
The negotiation $\tNn$ will always be deterministic, but not necessarily sound.
Yet, we will show that \Learner\ can extend it to a sound negotiation with just
membership queries. 
So $\tNn$ will be sound at every equivalence query. 
This greatly simplifies dealing with counter-examples. 

The construction is spread over several subsections. 
First, we describe how we use Mazurkiewicz traces to identify nodes in a negotiation (Figure~\ref{fig:ut}).
Building on this we can identify transitions in negotiations. 
In Section~\ref{subsec:learning-structure} we describe our representation of nodes and
transitions of a negotiation in a learning algorithm. 
We also state there the invariants of the construction. 
Section~\ref{sec:two-operations} describes two operations for extending $\tNn$.
They are used in Sections~\ref{sec:negative} and~\ref{sec:positive} where we
show how to handle counter-examples.
Section~\ref{sec:making-sound} explains how to restore soundness of $\tNn$.
Finally, we present a learning algorithm in
Section~\ref{sec:learning-executions-algorithm}. 

\subsection{Technical set-up}

We describe how to use traces to talk about nodes and transitions in a
negotiation. 
We start with a couple of definitions. 

We use $u,v,w,s,r,t\in\Act^*$ for sequences of actions and
often consider them as partial orders, i.e., as Mazurkiewicz
traces.
Recall that we write $u \tequiv v$ when $u,v$ represent the
same Mazurkiewicz  trace.
For all other notations related to traces and configurations we refer to the end of
Section~\ref{sec:basics}.
We will use extensively Theorem~\ref{th:I} stating the existence and
uniqueness of the configuration $I(n)$ enabling precisely node $n$. 

We start by defining two main kinds of traces used throughout
the section (see~Figure~\ref{fig:ut}).
\begin{itemize}
	\item $t \in Act^*$ is \emph{co-prime} if $t$ has a unique
          minimal element in the trace order.
          In other words, there is some $b \in \Act$ such that 
          every $v \in Act^*$ with $t \tequiv v$ starts with $b$.
          We write $b=\min(t)$ and $\dmin(t)$ for the domain of
          $\min(t)$, namely, $\dmin(t)=\dom(b)$. 
	\item $s \in Act^*$ is a \emph{$(b,p)$-step} if $p\in\dom(b)$,
          $s=bs'$ is co-prime and $b$ is 
	the only action involving $p$ in $s$, namely,	$p\not\in\dom(s')$.
      \end{itemize}

\begin{figure} 
  \centering
  \includegraphics[scale=.25]{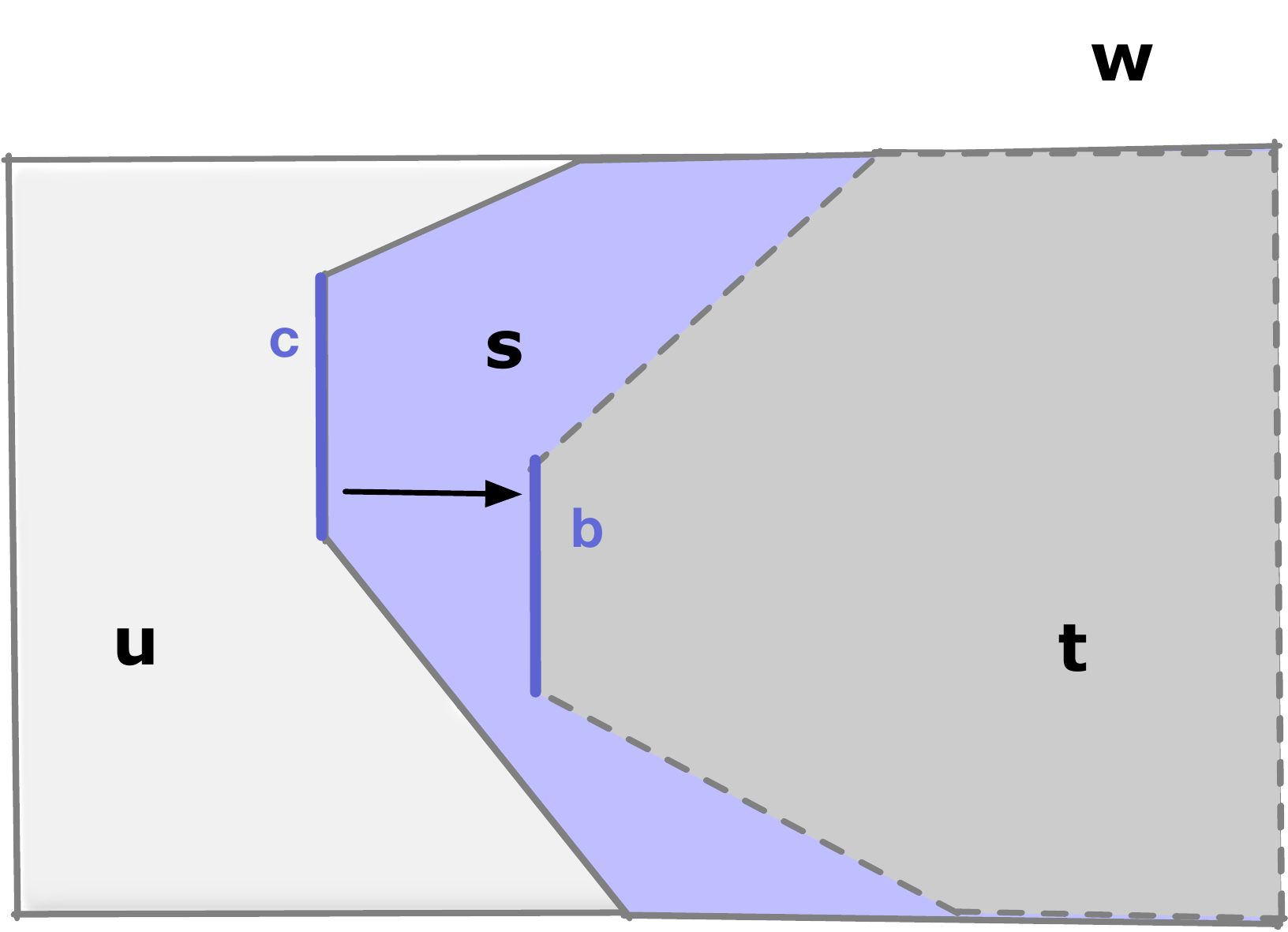}
  \caption{
  Partial order of execution $ust$. The blue
  part $s$ is a $(c,q)$-step,
    of some process $q \in \dom(b)\cap\dom(c)$. No action of  $q$,
    besides $c$, appears in $s$. Both $t$ and $st$ are co-prime. Actions $c,b$ are
    outcomes of two nodes, and process $q$ participates in both.}
	\Description{none}
	\label{fig:ut}
\end{figure}

The next two lemmas explain the link between co-prime traces and nodes of
the negotiation.
Lemma~\ref{lem:bstep} roughly says that while process $q$ goes from
node $m$ to node $n$ by action $b$,  the remaining processes 
execute $u$, after which $n$ is the unique executable node.
See also Figure~\ref{fig:ut} for an illustration.


\begin{lemma}\label{lem:dom}
  If $ut \in L$ and  $t$ is co-prime then:
  \begin{itemize}
  \item $\Cinit\act{u}C$ is an execution of $\Nn$ with $C=I(n)$ for some node
  $n$, and $\ndom(n)=\dmin(t)$.
\item If $ut' \in L$ for some $t'$ then $t'$ is also co-prime, and
  $\dmin(t')=\dmin(t)$. 
  \end{itemize}
\end{lemma}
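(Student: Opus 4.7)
The plan is to apply Theorem~\ref{th:I}: to show $C = I(n)$ it suffices to exhibit a node $n$ enabled at $C$ and to argue that it is the unique enabled one. The natural candidate is the common node occupied by processes in $\dom(b)$ after executing $u$, where $b = \min(t)$.

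For the first item, choose a linearization $t \tequiv b t''$. Since $u b t''$ is an execution of $\Nn$, action $b$ must be executable from $C$. Hence all processes of $\dom(b)$ share a common node $n$ at $C$, with $b \in \out(n)$ and $\ndom(n) = \dom(b) = \dmin(t)$; in particular $n$ is enabled at $C$. To see that $n$ is the \emph{only} enabled node, suppose some $n' \ne n$ is also enabled at $C$. By the observation that simultaneously enabled nodes have disjoint domains, $\ndom(n') \cap \dom(b) = \es$. In the execution $t$ from $C$ to $\Cfin$, the processes of $\ndom(n')$ must eventually leave $n'$; since they sit together at $n'$, their first joint move in $t$ is some outcome $a' \in \out(n')$ with $\dom(a') = \ndom(n')$. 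Any action of $t$ that strictly precedes $a'$ in a chosen linearization involves no process of $\ndom(n')$ (those processes are still sitting at $n'$), so it is independent of $a'$. Hence $a'$ can be commuted to the front of $t$, making $a'$ a minimal action. Since $\dom(a')$ and $\dom(b)$ are disjoint and non-empty, $a' \ne b$, contradicting co-primality of $t$. Theorem~\ref{th:I} then yields $C = I(n)$.

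The second item follows quickly. The first item applied to $ut$ already shows $\Cinit \act{u} I(n)$, so $t'$ is an execution from $I(n)$ to $\Cfin$. Any linearization of $t'$ is executable from $I(n)$, and since $n$ is the unique enabled node at $I(n)$, every minimal action $a$ of $t'$ lies in $\out(n)$ and has $\dom(a) = \ndom(n)$. Two actions sharing a non-empty domain are dependent, so $t'$ can have only one minimal action. Thus $t'$ is co-prime and $\dmin(t') = \ndom(n) = \dmin(t)$.

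The delicate step is the independence argument in the first item: one must justify that the first action of $t$ involving any process of $\ndom(n')$ is necessarily an outcome of $n'$ with domain exactly $\ndom(n')$, and then use Mazurkiewicz independence to commute it past all preceding actions. Once this commutation is in place, co-primality of $t$ immediately rules out a second enabled node, and everything else is bookkeeping around Theorem~\ref{th:I}.
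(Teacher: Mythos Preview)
Your proof is correct. The paper does not include its own proof of this lemma (it is stated without proof both in the main text and in the appendix), so there is nothing to compare against; your argument---showing that the minimal action $b$ forces a unique enabled node by commuting any hypothetical competing outcome to the front of $t$, then invoking Theorem~\ref{th:I}---is the natural one and all steps are sound, including the point you flag as delicate.
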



\begin{lemma}\label{lem:bstep} 
  Let  $\Cinit\act{*} I(m)\act{u}I(n)\act{*} \Cfin$
  be an execution of $\Nn$.
  We have  $m\act{(b,p)} n$  if  and only if  $u$ is a $(b,p)$-step.
\end{lemma}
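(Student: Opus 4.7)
The plan is to prove both directions of the equivalence using Theorem~\ref{th:I}, Lemma~\ref{lem:dom}, and Lemma~\ref{lem:locpath} as the main tools. I write the execution as $\Cinit \act{v} I(m) \act{u} I(n) \act{t} \Cfin$. The preliminary observation, which I extract using Lemma~\ref{lem:dom} applied to $v \cdot (ut)$, is that $ut$ is co-prime with $a := \min(u) = \min(ut) \in \out(m)$ and $\dom(a) = \ndom(m)$: the minimum lies in $\out(m)$ because only $m$ is enabled in $I(m)$, and uniqueness follows because any two distinct actions in $\out(m)$ share the non-empty set $\ndom(m)$ and hence cannot be concurrent. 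Analogously, if $n \neq \nfin$, the suffix $t$ is co-prime with $\dom(\min(t)) = \ndom(n)$.

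For the ($\Leftarrow$) direction, I will assume $u$ is a $(b,p)$-step, so $u = bu'$ with $b = \min(u) = a \in \out(m)$, $p \in \dom(b) = \ndom(m)$, and $p \notin \dom(u')$. Setting $n_0 := \d(m,b,p)$, the negotiation's domain constraints give $p \in \ndom(n_0)$. Firing $b$ from $I(m)$ lands $p$ at $n_0$, and since $p \notin \dom(u')$, process $p$ remains at $n_0$ throughout $u'$, so $I(n)(p) = n_0$. It remains to show $n_0 = n$, and for this it suffices to prove $p \in \ndom(n)$: combined with $I(n)(p) = n$ (enabledness of $n$ in $I(n)$) this forces $n_0 = n$. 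If $n = \nfin$ this is immediate; otherwise $\ndom(n) = \dom(\min(t))$, and I must show $p \in \dom(\min(t))$. I plan to do this by considering the earliest action $d$ of $t$ involving $p$, which exists because $p$ has to reach $\nfin \neq n_0$; since $p$ is stuck at $n_0$ until $d$ fires and $p \in \ndom(n_0)$, $d$ must be fired at $n_0$. The key step is arguing $d = \min(t)$: if $d \neq \min(t) =: c$, then $d > c$ in the trace order of $t$, producing a causal chain $c < e_1 < \cdots < d$ of actions sharing processes pairwise; combining this chain with the constraints that $c$ fires at $n$, $d$ fires at $n_0$, and that $n_0$ can only become uniquely enabled at $I(n_0)$ (Theorem~\ref{th:I}), I derive that some action earlier than $d$ along the chain must already involve $p$, contradicting the minimality of $d$.

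For the ($\Rightarrow$) direction, assume $\d(m,b,p) = n$. Then $p \in \ndom(m) \cap \ndom(n)$, so $I(m)(p) = m$ and $I(n)(p) = n$, and by Lemma~\ref{lem:locpath}, $u|_p$ is a local path from $m$ to $n$ whose first letter is $a_p$ (because $a \in \out(m)$ has $\dom(a) \ni p$). To conclude that $u$ is a $(b,p)$-step, I will show $a = b$ and $u|_p = a_p$ by invoking the already-established ($\Leftarrow$) direction on a well-chosen witness process $q \in \ndom(m)$ satisfying $u|_q = a_q$. At least one such $q$ exists because the execution terminates at the $I$-configuration $I(n)$ and, by the block-structured decomposition of sound deterministic negotiations that Theorem~\ref{th:I} encodes, at least one process in $\ndom(m)$ lands in its final position of $I(n)$ immediately after firing $a$. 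For this $q$, $u$ is an $(a,q)$-step, so the ($\Leftarrow$) direction gives $\d(m,a,q) = n$; combined with $\d(m,b,p) = n$ and the uniqueness of $a = \min(u)$, this forces $a = b$, and applying the same reasoning with $q = p$ yields $u|_p = a_p$.

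I expect the principal obstacle to be the crux step of the ($\Leftarrow$) direction, namely proving $d = \min(t)$. This requires tightly combining Mazurkiewicz trace causality with the structural rigidity that Theorem~\ref{th:I} imposes on intermediate configurations that can enable $n_0$; the precise bookkeeping of the causal chain seems delicate. The witness-process step of the ($\Rightarrow$) direction is a secondary challenge, relying on the block-structured decomposition implicit in Theorem~\ref{th:I}.
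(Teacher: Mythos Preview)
The paper omits the proof of this lemma entirely (it appears neither in the main text nor in the appendix), so there is no paper proof to compare against. More importantly, the lemma as written is \emph{false in both directions}, and your proof founders exactly where the statement breaks.

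For the ($\Leftarrow$) direction, take processes $p,q,r$ and a node $m$ with $\ndom(m)=\{p,q,r\}$ whose unique outgoing action $b$ sends $p$ to a node $n_0$ with $\ndom(n_0)=\{p,q\}$ and sends $q,r$ to a node $n_1$ with $\ndom(n_1)=\{q,r\}$; from $n_1$ an action $e$ sends $q,r$ to a node $n$ with $\ndom(n)=\{q,r\}$. This is readily completed to a sound deterministic negotiation (let $f\in\out(n)$ send $q$ to $n_0$ and $r$ to a node $n_2$ with $\ndom(n_2)=\{r\}$, and let $n_0,n_2$ empty into $\nfin$). Then $u=be$ satisfies $I(m)\act{u}I(n)$ and is a $(b,p)$-step, yet $\d(m,b,p)=n_0\neq n$. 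Your crux step---deriving $p\in\ndom(n)$ by arguing that the first $p$-action $d$ of $t$ must equal $\min(t)$---therefore cannot be completed; in this example $p\notin\ndom(n)$, and the causal-chain argument you sketch has no valid instantiation. For the ($\Rightarrow$) direction, take two processes and a node $m$ with two actions $a,b\in\out(m)$ both sending every process to the same node $n$. With $u=a$ one has $I(m)\act{u}I(n)$ and $m\act{(b,p)}n$, yet $u$ is not a $(b,p)$-step since $\min(u)=a\neq b$. The ``block-structured decomposition'' you invoke is not supplied by Theorem~\ref{th:I} and cannot force $\min(u)=b$.

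What the paper actually uses (in the proof of the Crossing Lemma and in Proposition~\ref{prop:making-sound}) is only the weaker implication: if $u$ is a $(b,p)$-step \emph{and} $p\in\ndom(n)$, then $\d(m,b,p)=n$. That is precisely the easy part of your ($\Leftarrow$) argument: once $p\in\ndom(n)$ is given, your computation $I(n)(p)=\d(m,b,p)$ together with $I(n)(p)=n$ finishes in one line. In every place the paper invokes Lemma~\ref{lem:bstep}, the extra hypothesis $p\in\ndom(n)$ (equivalently $p\in\dmin(t)$ for the co-prime suffix $t$) is already available from context.
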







The last lemma exhibits a structural property of sound deterministic
negotiations in terms of co-prime traces and $(b,p)$-steps.
\begin{lemma}[Crossing Lemma] \label{lem:cross}
	If $ws_1t_1\in L$ and $ws_2t_2\in L$, where $t_1,t_2$ are co-prime,
	$p\in\dmin(t_1)\cap\dmin(t_2)$, and $s_1,s_2$ are  
	$(b,p)$-steps, then
	\begin{itemize}
		\item $\dmin(t_1)=\dmin(t_2)$,
		\item $ws_1t_2\in L$.
	\end{itemize}
\end{lemma}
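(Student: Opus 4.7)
The plan is to use Lemma~\ref{lem:dom} to locate the configurations reached after $ws_1$ and $ws_2$, then use determinism together with the special structure of $(b,p)$-steps to identify these configurations as equal, and finally glue the executions to obtain $ws_1 t_2 \in L$.

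More concretely, I would first fix some configuration $C$ with $\Cinit \act{w} C$, which exists since both $ws_1 t_1$ and $ws_2 t_2$ lie in $L$. Since $t_i$ is co-prime, applying Lemma~\ref{lem:dom} to $ws_i t_i \in L$ yields a node $n_i$ with $C \act{s_i} I(n_i)$ and $\ndom(n_i) = \dmin(t_i)$. So after each of the two executions we are in a configuration that enables exactly one node $n_i$. The goal of the first step is to show $n_1 = n_2$.

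For that, I would exploit the fact that $s_i$ is a $(b,p)$-step: writing $s_i = b s_i'$, action $b$ is the unique minimum of $s_i$ and $p$ does not participate in $s_i'$. Thus, during the execution of $s_i$ from $C$, process $p$ makes exactly one move, namely $C(p) \act{(b,p)} C'(p)$ for some node $C'(p)$, and then remains at $C'(p)$ throughout $s_i'$. Consequently $C'(p)$ equals the position of $p$ in $I(n_i)$; since $p \in \dmin(t_i) = \ndom(n_i)$, this position is $n_i$ itself. Therefore both $n_1$ and $n_2$ are reached from $C(p)$ by the edge labelled $(b,p)$. Since $\Nn$ is deterministic, such an edge is unique, whence $n_1 = n_2 =: n$. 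This immediately gives $\dmin(t_1) = \ndom(n) = \dmin(t_2)$.

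For the second conclusion, I would use the uniqueness half of Theorem~\ref{th:I}: from $\Cinit \act{ws_1} I(n_1) = I(n)$ and $\Cinit \act{ws_2} I(n_2) = I(n)$, both executions reach exactly the same configuration $I(n)$. Combining $\Cinit \act{ws_1} I(n)$ with the tail $I(n) \act{t_2} \Cfin$ coming from $ws_2 t_2 \in L$ yields the successful execution $\Cinit \act{ws_1 t_2} \Cfin$, hence $ws_1 t_2 \in L$. The only subtle point in the argument is the application of determinism: one must check that $p$ lies in the domain of the source node $C(p)$ (so that the $(b,p)$-edges are actually comparable), which follows because $b \in \out(C(p))$ and $p \in \dom(b) = \ndom(C(p))$. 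Beyond this bookkeeping, the argument reduces cleanly to Lemma~\ref{lem:dom}, the uniqueness of $I(n)$, and the definition of determinism.
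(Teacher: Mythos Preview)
Your argument is correct and follows the same route as the paper: apply Lemma~\ref{lem:dom} to land in configurations $I(n_1)$ and $I(n_2)$, use the $(b,p)$-step structure plus determinism to obtain $n_1=n_2$, and then glue via the uniqueness of $I(n)$. The only cosmetic difference is that the paper first observes that $s_it_i$ is itself co-prime, hence $C=I(m)$ for some node $m$, and then invokes Lemma~\ref{lem:bstep} to get $m\act{(b,p)}n_i$; you instead track process~$p$'s single move directly, which is exactly the content of (one direction of) Lemma~\ref{lem:bstep} and does not require knowing that $C$ has the form $I(m)$.
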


\subsection{The learned negotiation}\label{subsec:learning-structure}

The negotiation learned by our algorithm is built from the
following sets: 
\begin{itemize}
	\item $Q \subseteq \Act^*$ is a set of traces, we often call them \emph{nodes}.
	There should be a unique node in $Q$ that is also in
	$L$. 

	\item $T \subseteq \Act^*$ is a set of co-prime traces, plus the empty trace
          $\e$.

	\item $S:Q \times \Act \times \Proc \to \Act^*$ is a partial function giving \emph{supports} for transitions: if defined,
	$S(u,b,p)$ is a $(b,p)$-step.
\end{itemize}

The use of co-prime traces for $T$ is motivated by Lemma~\ref{lem:dom}, as
runs from configurations of the form $I(n)$ are co-prime traces.
The support function is new. 
It is a generalization of the mapping $\out$ from
Sections~\ref{sec:angluin} and~\ref{sec:paths}. 
As described by Lemma~\ref{lem:bstep}, when a process $p$ executes an
action $b$ reaching a new node $n$, other processes need also to progress until
$n$ becomes the only executable node; such a progress is
a trace, and the support $S(u,b,p)$ is one such trace.

Our construction will preserve the following invariants:
\begin{description}
	\item[\Unique] For every $u,v \in Q$, $u \eqT v$ implies $u=v$.\smallskip
	\item[\PREF] For every $u\in Q$ there is $t\in T$ such that
	$ut\in L$.\smallskip
	\item[\DOM] If the support $S(u,b,p)$ is defined then $S(u,b,q)$ is defined for all $q\in\dom(b)$.\smallskip
	\item[\PREF'] If the support $S(u,b,p)$ is defined then there  exists some
	$t\in T$ with $u \, S(u,b,p)\, t \in 	L$. Moreover, if $t\not=\e$ then $p \in\dmin(t)$. \smallskip
	\item[\Closure] If the support $S(u,b,p)$ is defined then
          there is some $v\in Q$ 	with $u S(u,a,p) \eqT v$.
  \end{description}



 \Unique\ and \Closure\ are the basic invariants, as in
 Sections~\ref{sec:angluin} and \ref{sec:paths}.
 \DOM\ and \PREF\ are the counterparts of the invariants in
 Section~\ref{sec:paths}.
 Note that \PREF' is not a direct consequence of \PREF\ and \Closure\
 because it puts an additional condition on $\dmin(t)$.
 The next lemma shows how to restore the \Closure\ invariant once
 the other four hold.
 
\begin{lemma}\label{lem:closure}
	If a triple $(Q,T,S)$ satisfies all invariants \Unique,
        \PREF, \DOM, \PREF', \Closure, and
        $(Q,T',S')$ with $T \subseteq T'$ and $S \subseteq S'$
        satisfies all invariants  but \Closure, then Learner can extend $Q$ and restore all five 
  invariants using $O(|S|(|T'\setminus T|)+(|S'\setminus S|)|T'|)$  membership queries.
      \end{lemma}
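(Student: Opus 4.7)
The strategy is straightforward: iterate over the supports in $S'$ and, for each one, either locate an existing $\eqTp$-equivalent node in $Q$ or add its endpoint as a new node. Since the lemma only permits extending $Q$ (and not $S'$ or $T'$), and since the new nodes are added without any outgoing support being defined on them, the process is a single pass and no cascade of closure obligations arises. This is the central structural observation that makes the argument and the query count work.

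First I would precompute, via membership queries, the data needed to compare signatures with respect to $T'$. For each existing $v\in Q$, I query $L$ on $v\cdot t$ for $t\in T'\setminus T$; the values on $T$ were computed in earlier rounds and are reused. For each old support $(u,b,p)\in S$, I similarly extend the signature of $u\cdot S(u,b,p)$ by querying on $T'\setminus T$. For each new support $(u,b,p)\in S'\setminus S$, I query on all of $T'$. Using the bound $|Q|\leq |S|+1$ (every node of $Q$ apart from the initial trace $\e$ is the endpoint of some support, by \Closure\ and \Unique\ for the starting triple $(Q,T,S)$), this phase uses $O(|S|\cdot|T'\setminus T| + |S'\setminus S|\cdot|T'|)$ membership queries, matching the stated budget.

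Next I sweep through the supports in $S'$. For each $(u,b,p)$, I compare the already computed $T'$-signature of $w=u\cdot S'(u,b,p)$ against the signatures of nodes currently in $Q$, including any added earlier in this phase. If a match is found, $Q$ is left unchanged; otherwise, $w$ is added to $Q$. No extra queries are needed here, since all signatures are already in hand.

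Finally I verify the five invariants for the extended triple. \Unique\ holds by construction, since a new node is added only when its signature differs from those of all nodes already in $Q$. \PREF\ for each newly added $w=u\cdot S'(u,b,p)$ is a direct consequence of \PREF'\ for $(Q,T',S')$, which furnishes a $t\in T'$ with $u\cdot S'(u,b,p)\cdot t\in L$, i.e.\ $w\cdot t\in L$. The invariants \DOM\ and \PREF'\ depend only on $S'$ and are unaffected. \Closure\ now holds by construction, since every support in $S'$ has an $\eqTp$-equivalent node in the extended $Q$. The main subtlety in this bookkeeping is that newly introduced nodes do not carry any outgoing supports and therefore cannot spoil \Closure\ themselves; this is what prevents the sweep from having to be iterated.
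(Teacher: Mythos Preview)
Your proof is correct and takes essentially the same approach as the paper's: for each support, test whether its endpoint has a $\eqTp$-match in $Q$, and otherwise add it with no outgoing supports; \PREF\ for the new node comes from \PREF', and no cascade arises because new nodes carry no supports.

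One minor correction: the bound $|Q|\le |S|+1$ does not follow from \Closure\ and \Unique\ as you claim (\Closure\ guarantees each support has a matching node, not that each node is a support endpoint); it is rather an invariant of how the algorithm constructs $Q$. The paper sidesteps any appeal to this bound by observing that for an old support $(u,b,p)\in S$ the only possible $\eqTp$-match in the original $Q$ is its former $\eqT$-match $v$ (any other $v'\in Q$ already differs from $u\,S(u,b,p)$ on some $t\in T\subseteq T'$ by \Unique), so it suffices to query $u\,S(u,b,p)t'$ and $vt'$ for $t'\in T'\setminus T$; this yields the $O(|S|\cdot|T'\setminus T|)$ term directly, without a separate pass over $Q$.
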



From $(Q,T,S)$ satisfying all invariants we can construct the negotiation
$\tNn$ such that: 
\begin{itemize}
	\item $Q$ is the set of nodes of $\tNn$,
	\item $\ndom(u)=\dmin(t)$ if $ut\in L$ for some co-prime $t\in T$, and $\ndom(u)=\Proc$ if $u\in L$,
	\item $u\act{(b,p)}v$ if $S(u,b,p)$ defined and $u \,
          S(u,b,p)\eqT v$,
        \item $\ninit=\e$, and $\nfin$ is the unique node in $Q \cap L$.
\end{itemize}
Notice the use of supports in defining transitions.
We cannot simply use actions to define transitions as $T$ contains only co-prime
traces.

\begin{lemma}\label{lem:domains}
	For every $(Q,T,S)$ satisfying the invariants, the negotiation $\tNn$ is
	deterministic and satisfies the following conditions: 
	\begin{itemize}
		\item The domain $\ndom(u)$ is well-defined for every
                  node $u \in Q$. 
		\item If $S(u,b,p)$ is defined then $\ndom(u)=\dom(b)$.
		\item If $u\act{(b,p)} v$ then $p\in \ndom(u)\cap\ndom(v)$.
	\end{itemize}
\end{lemma}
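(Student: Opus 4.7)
The plan is to verify the four claims one by one, beginning with determinism, which follows at once from \Unique. If $u\act{(b,p)}v_1$ and $u\act{(b,p)}v_2$ both hold in $\tNn$, then by construction both targets are $\eqT$-equivalent to $u\,S(u,b,p)$, hence $v_1\eqT v_2$; since $v_1,v_2\in Q$, \Unique\ forces $v_1=v_2$.

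For the well-definedness of $\ndom(u)$, I would split on whether $u\in L$. The case $u\in L$ is unambiguous: $\ndom(u)=\Proc$, and one cannot simultaneously have $ut\in L$ for a non-empty $t$, since the final configuration admits no outgoing transition. In the remaining case, \PREF\ supplies a non-empty $t\in T$ (hence co-prime by the convention on $T$) with $ut\in L$, and Lemma~\ref{lem:dom}(2) guarantees that $\dmin(t)$ does not depend on the chosen witness.

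The third bullet, $\ndom(u)=\dom(b)$ whenever $S(u,b,p)$ is defined, is the technical heart of the argument. Since $S(u,b,p)$ is a non-empty $(b,p)$-step, $u\notin L$; applying \PREF\ yields a co-prime $t^{*}\in T$ with $ut^{*}\in L$, so $\ndom(u)=\dmin(t^{*})$. Then \PREF' supplies $t\in T$ with $u\,S(u,b,p)\,t\in L$. Setting $t'=S(u,b,p)\cdot t$, we have $ut'\in L$, so Lemma~\ref{lem:dom}(2) forces $t'$ to be co-prime with $\dmin(t')=\dmin(t^{*})$. Since $t'$ begins with $b$ (the first action of $S(u,b,p)$), co-primality pins $\min(t')=b$, giving $\dmin(t')=\dom(b)$. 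Chaining equalities yields $\ndom(u)=\dom(b)$.

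For the last bullet, $p\in\ndom(u)$ is immediate from the previous step together with $p\in\dom(b)$, which holds because $S(u,b,p)$ is a $(b,p)$-step. For $p\in\ndom(v)$ I distinguish $v\in L$ (trivial: $\ndom(v)=\Proc$) from $v\notin L$. In the latter case $v\eqT u\,S(u,b,p)$ combined with \PREF' gives $t\in T$ with $u\,S(u,b,p)\,t\in L$, hence $vt\in L$. Since $v\notin L$ forces $t\ne\e$, the second clause of \PREF' supplies $p\in\dmin(t)$, and by well-definedness of $\ndom(v)$ established earlier, $\ndom(v)=\dmin(t)$. The main obstacle is the third bullet: the support $S(u,b,p)$ is co-prime on its own, but the concatenation $S(u,b,p)\cdot t$ is not \emph{a priori} co-prime; it becomes so only once embedded in a successful execution of $\Nn$, via Lemma~\ref{lem:dom}(2). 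This passage through the semantics of $\Nn$ is the pivotal mechanism that transfers information between the algebraic data $S$ and the domain structure of $\tNn$.
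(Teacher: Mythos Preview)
Your proof is correct and follows the same line as the paper's: well-definedness of $\ndom$ via \PREF\ and Lemma~\ref{lem:dom}, the identity $\ndom(u)=\dom(b)$ via \PREF' and Lemma~\ref{lem:dom}(2) applied to $S(u,b,p)\,t$, and $p\in\ndom(v)$ via $u\,S(u,b,p)\eqT v$ together with the $p\in\dmin(t)$ clause of \PREF'. You are simply more explicit than the paper about the case splits $u\in L$ versus $u\notin L$ and $t=\e$ versus $t\ne\e$, and you spell out the determinism argument from \Unique, which the paper leaves implicit.
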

Note that $\tNn$ need not be sound. 
In particular, even if $(Q,T,S)$ defines a sound negotiation, the triple
$(Q',T',S')$ obtained after an application of Lemma~\ref{lem:closure} may not be
sound. 
We will see how to restore soundness in Section~\ref{sec:making-sound}.
Before this we describe two operations that extend $T'$ and $S'$. 


  

\subsection{Two operations to extend $\tNn$}\label{sec:two-operations}

In response to an equivalence query \Teacher\ may give a counter-example that
\Learner\ then analyses in order to extend $\tNn$.
This is described in Sections~\ref{sec:negative},
\ref{sec:positive} that follow.
Here we present two operations used in these sections to actually extend $\tNn$.

\paragraph{\textbf{\OUTINC $(u,r)$}}
Suppose that we have $u \in Q$, $r$ co-prime  with $ur \in L$, but
$\min(r) \notin \out(u)$.
Since $ur \in L$ we know that $\dmin(r)=\ndom(u)$ by
Lemma~\ref{lem:dom}.
Let $a=\min(r)$. 
For every process $p \in\dom(a)$, consider the decomposition $r=ar'r_p$, where
$p\not\in\dom(r')$, and $r_p$ is either
the co-prime trace with $p \in \dmin(r_p)$, or $r_p=\e$.
We set $S(u,a,p)=ar'$.  
Since we do it for all $p\in\dom(a)$, invariant \DOM\ holds.
We add $r_p$ to $T$ to satisfy invariant \PREF'.  
This way we restore invariants \Unique, \PREF, \DOM, and \PREF'.
The \Closure\ invariant can be restored by Lemma~\ref{lem:closure}.


\paragraph{\textbf{\Target$(u' \act{(b,p)} u, r)$}}
Assume we have a transition $u' \act{(b,p)} u$ of $\tNn$ and a co-prime trace
$r$  such that $u'S(u',b,p)r\in L\not\iff ur\in L$.
Note that $p \in \ndom(u)$ because of $u' \act{(b,p)} u$ and
Lemma~\ref{lem:domains}. 
Also, $p\in\dmin(r)$ because either $ur \in L$ and $p\in\ndom(u)$, or
$u'S(u',b,p)r \in L$ and \PREF'.
We add $r$ to $T$. 
Clearly all the invariants but \Closure\ continue to hold. 
Since \Closure\ does not hold, applying Lemma~\ref{lem:closure} will
add at least one new node to $Q$. 
Afterwards all invariants are restored.

\medskip

We end with a very useful lemma allowing to detect the \Target\ case.
\begin{lemma}\label{l:path}
Let $\e \act{a_1,p_1} u_1 \act{a_2,p_2} \cdots \act{a_k,p_k} u_k$ be
a local path in $\tNn$, and $s_i=S(u_{i-1},a_i,p_i)$ be the
support of the $i$-th transition.
Let also $r$ be a co-prime trace such that $p_k\in\dmin(r)$ and $u_k r  \in L
\not\iff s_1 \dots s_kr \in L$.
There exists some index $i$ such that
\[u_{i-1} \, s_i \dots s_k r \in L \not\iff
  u_i \, s_{i+1} \dots s_k r \in L
  \]
Moreover $u_{i-1} \act{a_i,p_i} u_i$ together with $s_{i+1}\dots s_kr$ is an
instance of the \Target\ case.
Such an index $i$ can be found with $O(\log(k))$ membership queries. 
\end{lemma}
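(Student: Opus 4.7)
The plan is to reduce the problem to a binary search on the boolean sequence indexed by the nodes of the local path. For each $j\in\{0,1,\dots,k\}$ (with $u_0=\e$), let $b_j$ denote the truth value of $u_j\, s_{j+1}\cdots s_k\, r\in L$. The endpoints $b_0$ and $b_k$ coincide with the truth values of $s_1\cdots s_k r\in L$ and $u_k r\in L$, which differ by hypothesis. Hence some adjacent pair $b_{i-1}\neq b_i$ must exist, and this is exactly the index $i$ claimed in the lemma. Since each $b_j$ is obtained by a single membership query, a standard binary search locates such an $i$ using $O(\log k)$ queries.

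To confirm that $(u_{i-1}\act{(a_i,p_i)}u_i,\; s_{i+1}\cdots s_k r)$ satisfies the premises of the \Target\ case, it remains to show that $r'=s_{i+1}\cdots s_k r$ is a co-prime trace and that $p_i\in\dmin(r')$. I would establish co-primeness by downward induction on $m$, proving that for every $m\in\{i+1,\dots,k\}$ the trace $s_m s_{m+1}\cdots s_k r$ is co-prime with unique minimum $a_m$. The base case $m=k$ uses the defining properties of an $(a_k,p_k)$-step: $a_k=\min(s_k)$ is unique in $s_k$ and $p_k$ takes no other action in $s_k$; combined with the assumption $p_k\in\dmin(r)$, this forces $a_k$ to precede $\min(r)$, and hence every event of $r$, so $a_k$ is the unique minimum of $s_k r$. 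For the inductive step, Lemma~\ref{lem:domains} applied to the transition $u_{m-1}\act{(a_m,p_m)}u_m$ and to the defined support $S(u_m,a_{m+1},p_{m+1})$ yields $p_m\in\ndom(u_m)=\dom(a_{m+1})$. Since $a_m$ and $a_{m+1}$ share the process $p_m$ and $a_m$ occurs earlier in the sequence, $a_m$ precedes $a_{m+1}$ in the trace order; combined with the inductive hypothesis that $a_{m+1}$ is the minimum of the tail, we obtain that $a_m$ lies below every event of $s_m s_{m+1}\cdots s_k r$, and its uniqueness within $s_m$ yields the claim.

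With co-primeness in hand, $\dmin(r')=\dom(a_{i+1})=\ndom(u_i)$ via Lemma~\ref{lem:domains} when $i<k$, and this set contains $p_i$ because $p_i\in\ndom(u_i)$ by the transition $u_{i-1}\act{(a_i,p_i)}u_i$. For $i=k$ the required condition $p_i\in\dmin(r')=\dmin(r)$ is exactly the hypothesis on $r$. Finally, $u_{i-1}s_i=u_{i-1}\,S(u_{i-1},a_i,p_i)$ by definition of the support, so $b_{i-1}\neq b_i$ is precisely the inequivalence $u_{i-1}S(u_{i-1},a_i,p_i)r'\in L\not\iff u_i r'\in L$ required by \Target. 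The main obstacle will be the co-primeness argument above: it is the only place where the structural combinatorics of sound deterministic negotiations enters, via the chaining of $(b,p)$-steps along transitions that share a common node domain; the rest is bookkeeping around a standard binary search.
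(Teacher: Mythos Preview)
Your proof is correct and follows essentially the same approach as the paper's: the telescoping/binary-search argument for locating $i$ is identical, and your downward induction for the co-primeness of $s_{i+1}\cdots s_k r$ is a fully spelled-out version of the paper's terse ``repeating this argument'' step, where you make explicit (via Lemma~\ref{lem:domains}) why consecutive minimal actions $a_m$ and $a_{m+1}$ share a process. Your additional verification that $p_i\in\dmin(r')$ is harmless but not strictly required as a precondition of \Target, since that paragraph derives it from the other assumptions.
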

\begin{proof}
By assumption, $u_k r \in L \not\iff s_1 \dots s_k r\in
  L$ for $r$ co-prime with $p_k \in \dmin(r)$.
  Setting $u_0=\e$ we see that we cannot have $u_{i-1} \, s_i \dots
  s_k r \in L \iff u_i \, s_{i+1} \dots s_k r \in L$ for all
  $i=1,\dots, k$.
  Finding such an $i$ is done with binary search.

  In order to have get a \Target\ case we need to verify that $s_{i+1}\dots
  s_kr$ is co-prime. 
  Recall that each $s_i$ is co-prime with minimal element $a_i$. 
  Since $a_i$ and $a_{i+1}$ have a process in common, $a_{i+1}$ is after $a_i$ in
  $s_is_{i+1}$, hence all elements of $s_{i+1}$ are after $a_i$ in $s_is_{i+1}$.
  Repeating this argument we obtain that $s_i\dots s_k$ is co-prime. 
  Finally, $s_i\dots s_kr$ is co-prime because $r$ is co-prime and
  $p_k \in\dmin(s_k) \cap \dmin(r)$. 
\end{proof}

\begin{corollary}\label{cor:path}
  Let $\e \act{a_1,p_1} u_1 \act{a_2,p_2} \cdots \act{a_k,p_k} u_k$ be
a local path in $\tNn$, and $s_i=S(u_i,a_{i+1},p_{i+1})$ be the support of the $i$-th transition. 
  If $u_k\not\eqT s_1\dots s_k$ then with $O(\log(k))$ queries one can find $u_i
  \act{a_i,p_i} u_{i+1}$ and $s_{i+1}\dots s_kr$ forming an instance of the
  \Target\ case. 
\end{corollary}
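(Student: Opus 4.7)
The plan is to reduce the corollary to Lemma~\ref{l:path}: once a co-prime trace $r$ is produced with $p_k\in\dmin(r)$ and $u_k\,r\in L\not\iff s_1\dots s_k\,r\in L$, Lemma~\ref{l:path} directly supplies the index $i$ and the \Target\ instance with $O(\log k)$ binary-search queries. Thus the heart of the argument is to extract such an $r$ from the hypothesis $u_k\not\eqT s_1\dots s_k$ using only $O(1)$ extra membership queries.

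The hypothesis yields a witness $t\in T$ with $u_k\,t\in L\not\iff s_1\dots s_k\,t\in L$. I would first handle the case when $t$ is co-prime, arguing that then automatically $p_k\in\dmin(t)$. On the $u_k$-side (should $u_k\,t\in L$), Lemma~\ref{lem:dom} identifies $\dmin(t)$ with $\ndom(n)$ for the unique node $n$ with $\Cinit\act{u_k}I(n)$ in $\Nn$; by construction of $\tNn$ and the uniqueness clause of Lemma~\ref{lem:dom}, this coincides with $\ndom(u_k)$ in $\tNn$, which contains $p_k$ by Lemma~\ref{lem:domains} applied to the final step $u_{k-1}\act{a_k,p_k}u_k$. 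On the $s_1\dots s_k$-side, $s_1\dots s_k$ is itself co-prime (by the argument in the proof of Lemma~\ref{l:path}), so it traces a path in $\Nn$ ending at some $I(m)$; Lemma~\ref{lem:bstep} applied to the $(a_k,p_k)$-step $s_k$ then forces $p_k\in\ndom(m)=\dmin(t)$. In either sub-case $r:=t$ satisfies the required properties.

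It remains to handle $t=\e$, where exactly one of $u_k,s_1\dots s_k$ lies in $L$. If $u_k\notin L$ and $s_1\dots s_k\in L$, then \PREF\ supplies some $t'\in T$ with $u_k\,t'\in L$; since $u_k\notin L$, one has $t'\neq\e$, so $t'$ is co-prime with $p_k\in\dmin(t')$ by the previous paragraph, and $s_1\dots s_k\,t'\notin L$ because $s_1\dots s_k$ is already a complete execution. Hence $r:=t'$ works. The symmetric sub-case $u_k\in L$, $s_1\dots s_k\notin L$ is the main technical obstacle: no co-prime extension of $u_k=\nfin$ can lie in $L$, so one must instead produce a co-prime completion $r$ of $s_1\dots s_k$. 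I would combine \PREF' on the last transition of the path with soundness of $\Nn$ and the co-primality of $s_1\dots s_k$ to locate such $r$ with $p_k\in\dmin(r)$, at the cost of at most one additional membership query.

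Once $r$ is in hand, Lemma~\ref{l:path} returns in $O(\log k)$ queries an index $i$ and the \Target\ instance for the transition $u_{i-1}\act{a_i,p_i}u_i$ with co-prime trace $s_{i+1}\dots s_k\,r$; modulo the shift between the lemma's indexing and the corollary's enumeration $s_i=S(u_i,a_{i+1},p_{i+1})$, this gives the desired pair $u_i\act{a_{i+1},p_{i+1}}u_{i+1}$, completing the proof.
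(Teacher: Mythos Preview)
The paper gives no explicit proof of the corollary, so there is nothing to compare at the level of argument structure; your reduction to Lemma~\ref{l:path} via a suitable witness $r$ is exactly what ``corollary'' suggests. Most of your case analysis is fine, but the sub-case you yourself flag as the ``main technical obstacle'' --- $u_k\in L$ and $s_1\dots s_k\notin L$ --- contains a real gap. You propose to locate a co-prime completion $r$ of $s_1\dots s_k$ with $p_k\in\dmin(r)$ using soundness of $\Nn$. This cannot work in general: nothing guarantees that $s_1\dots s_k$ is executable in $\Nn$. Each support $s_i$ is known (via \PREF') to be executable after $u_{i-1}$, not after $s_1\dots s_{i-1}$, and the very hypothesis $u_k\not\eqT s_1\dots s_k$ says these may diverge in $\Nn$. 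If $s_1\dots s_k$ is not a prefix of any word in $L$, no such $r$ exists and your plan stalls. The clean fix is to run the binary search of Lemma~\ref{l:path} directly with the empty witness: by \Closure\ one has $u_{k-1}s_k\eqT u_k$, and since $\e\in T$ this gives $u_{k-1}s_k\in L$, forcing the index $i$ returned by the search to satisfy $i<k$; then $s_{i+1}\dots s_k$ is non-empty and co-prime, and $(u_{i-1}\act{a_i,p_i}u_i,\,s_{i+1}\dots s_k)$ is already a \Target\ instance.

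There is also a smaller gap in the co-prime case when $s_1\dots s_k\,t\in L$ but $u_k\,t\notin L$: your appeal to Lemma~\ref{lem:bstep} presupposes that the configuration reached after $s_1\dots s_{k-1}$ is of the form $I(m')$, which you have not established (it would require $s_k t$ to be co-prime, which is precisely what is in question). A simple repair: by \PREF\ there is $t'\in T$ with $u_k t'\in L$; if $t'=\e$ you are back in the previous paragraph, and otherwise your $u_k$-side argument gives $p_k\in\dmin(t')$, so either $r:=t'$ already works, or $s_1\dots s_k\,t'\in L$ as well and the second clause of Lemma~\ref{lem:dom} yields $\dmin(t)=\dmin(t')\ni p_k$.
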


\subsection{Handling a negative counter-example}\label{sec:negative}
Suppose  Teacher replies to an equivalence query with a negative counter-example
to the equivalence between $\Nn$ and $\tNn$:
\begin{equation*}
  w\in \tL\setminus L\ .
\end{equation*}
We show how to find a \Target\ case with $O(\log(|w|))$
membership queries. 

Let $v_1$ be the longest prefix of $w$ executable in $\Nn$.
Let us suppose first that $v_1=w$, so $\Cinit \act{w} C \not=\Cfin$ in
$\Nn$.
Since $\Nn$ is sound there must exist some action $a$ executable in
$C$.
Chose some $p \in \dom(a)$ and consider the projection $w|_p=a_1 \dots
a_k$.
In $\tNn$ we have a local path $\e \act{a_1,p} u_1 \dots \act{a_k,p}
u_k=u$ and $u \in L$ by assumption ($w \in \tL$).
Let $s_i=S(u_{i-1},a_i,p)$ be the support of the $i$-th transition.
If $u \eqT s_1 \dots s_k$ then $s_1 \dots s_k \in L$.
But this is impossible, as $(s_1 \dots s_k)|_p = w|_p$, so $\Cinit
\act{s_1 \dots s_k} C'$ with $C'(p)=C(p)$.
So $u \not\eqT s_1 \dots s_k$ and we obtain by Corollary~\ref{cor:path} 
an instance of the \Target\ case, after adding one trace to $T$.

Assume now that $w=v_1 b v_2$, and chose some $p \in \dom(b)$.
Consider the projection  $v_1|_p=a_1 \dots a_k$ and the local path $\e
\act{a_1,p} u_1 \dots \act{a_k,p} u_k \act{b,p} u'$ in $\tNn$.
Let also $s_i=S(u_i,a_{i+1},p)$, and set $u :=u_k$.
By the invariants of $\tNn$ there are some $t,t' \in T$ with $ut \in L$, $u't'
\in L$.
Also, we have $u S(u,b,p) \eqT u'$, so $u S(u,b,p)t' \in L$.
Suppose that  $u \eqT s_1 \dots s_k$ holds.
Then $s_1 \dots s_k$  is executable in $\Nn$ because of $s_1 \dots s_n t \in
L$.
Consider now $t'':=S(u,b,p)t'$ and observe that $s_1 \dots s_k t'' \notin L$: if $\Cinit \act{v_1} C$ and
$\Cinit \act{s_1 \dots s_k} C'$ then $C(p)= C'(p)$, so action $b$ is
impossible in $\Nn$  after executing $s_1 \dots s_k$.
Therefore we have $u t'' \in L \not\iff  s_1 \dots s_kt'' \in L$.
We can conclude by applying Lemma~\ref{l:path} to the local path $\e
\act{a_1,p} u_1 \dots \act{a_k,p} u_k$ and $t''$, obtaining  an
instance of the \Target\ case.

\subsection{Handling a positive counter-example}\label{sec:positive}
Consider now the case where Teacher provides a positive counter-example:
\begin{equation*}
  w\in L\setminus \tL
\end{equation*}
Compared to negative counter-example case, here we need to assume that
$\tNn$ is sound, in order to be able to use the Crossing
Lemma~\ref{lem:cross}. 
We can show that \Learner\ can determine an instance either of
\OUTINC\ or of the \Target\ situation with $O(\log(|w|)$ membership
queries. 
The details can be found in the appendix.

\subsection*{Making $\tNn$ sound}\label{sec:making-sound}
Making $\tNn$ sound is important for two reasons.
The first one is that we use the soundness of $\tNn$ when handling positive counter-examples.
The second reason is that if \Learner\ asks  equivalence queries only when $\tNn$ is
sound, then \Teacher\ can answer them in \PTIME, according to
Cor.~\ref{cor:equiv}. 

After handling  counter-examples $\tNn$ is extended as described for the cases
\OUTINC\ and \Target\ in Section~\ref{sec:two-operations}.
These do guarantee that the result satisfies the invariants, but do not guarantee
that the result is sound.
In the proposition below we show how $\tNn$ can be made sound by
\Learner\ using only
membership queries.  

We assume that $\tNn$ satisfies all the invariants of
Section~\ref{subsec:learning-structure}. 

A local path in $\tNn$, $\pi= (a_1,p_1) \dots (a_k,p_k)$ determines nodes through which it
passes $\e \act{a_1,p_1} u_1 \act{a_2,p_2} \cdots
\act{a_k,p_k} u_k$  in  $\tNn$.
We write $S(\pi)$ for the trace
$S(u_0,a_1,p_1) \dots S(u_{k-1},a_k,p_k)$ concatenating the supports
of the transitions of $\pi$.
As we have observed in Lemma~\ref{l:path} this trace is co-prime.
We say that $\pi$ as above is a \emph{$p$-path} if $p_i=p$ for $i=1,\dots,k$.

\begin{proposition}\label{prop:making-sound}
  \Learner\ can check in \PTIME\ if $\tNn$ is sound.
If the answer is no, then  \Learner\ can find either an
instance of \OUTINC\ or of \Target,  with $O(s|T|+ \log(m))$
membership queries.
\end{proposition}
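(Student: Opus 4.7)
The plan splits into two parts mirroring the two claims of the proposition.

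For the \PTIME\ soundness test, I would invoke the result of Esparza et al.~\cite{EKMW18}: soundness of a deterministic negotiation is decidable in \NLOGSPACE. Since $\tNn$ is held explicitly by \Learner, she can run this test locally without posing any membership query to \Teacher.

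Assume now $\tNn$ is found unsound. The soundness decision procedure exhibits a concrete witness: a reachable configuration of $\tNn$ from which $\Cfin$ is unreachable. Tracking back, one extracts a local path $\pi=(a_1,p_1)\cdots(a_k,p_k)$ in $\tNn$ from $\ninit$ to some node $u\in Q$ such that $\tNn$ cannot complete from $u$ (either deadlock or a trap cycle). By invariant \PREF\ there is $t\in T$ with $ut\in L$. By an easy induction over $\pi$ using the definition $u\act{(b,p)}v \Leftrightarrow uS(u,b,p)\eqT v$, the support trace $S(\pi)=S(u_0,a_1,p_1)\cdots S(u_{k-1},a_k,p_k)$ satisfies $S(\pi)\eqT u$, hence $S(\pi)\,t\in L$ since $L$ is a trace language.

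The core step is then to compare the execution $S(\pi)\,t$ in $\Nn$ with what $\tNn$ admits from $u$. By construction $t$ is co-prime, so by Lemma~\ref{lem:bstep} it decomposes into a chain of $(b,q)$-steps. I would walk through $t$ along this chain, maintaining the current node of $\tNn$. Two failure modes can arise: either at some point the prescribed action $b$ is not in $\out(v)$ of the current node $v$, giving an \OUTINC\ witness with the residual co-prime suffix of $t$ as continuation $r$ (the co-primeness of $r$ being inherited via Lemma~\ref{lem:dom}); or every action is enabled in $\tNn$ but the composed supports diverge from $\Nn$ on the reached configuration, in which case the co-prime remainder of $t$ together with the local sub-path plays the role of the mismatching co-prime trace, and Corollary~\ref{cor:path} extracts a \Target\ instance by binary search.

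The counting of queries matches the claimed bound. Populating the table of values $ut\in^?L$ for $u\in Q,\;t\in T$—needed both to select \PREF-companions and to certify where the comparison breaks—costs $O(s\,|T|)$ membership queries. Once a failing pair $(u,t)$ and its failure mode are fixed, the binary search of Corollary~\ref{cor:path} adds $O(\log m)$ queries, with $m$ bounding the length of the co-prime trace under scrutiny. The main obstacle I anticipate is proving the converse that must underlie the extraction: if every \PREF-witness $ut\in L$ were faithfully traceable through $\tNn$'s supports, then $\tNn$ would be sound. This converse is what licenses us to \emph{find} an \OUTINC\ or \Target\ from any unsoundness of $\tNn$, and it rests on Lemmas~\ref{lem:dom}--\ref{lem:bstep} translating co-prime $L$-completions into the concrete $(b,p)$-step structure that $\tNn$'s support function is obliged to mirror.
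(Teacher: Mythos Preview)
There is a genuine gap in your plan, and it appears already in the third paragraph. You write that ``by an easy induction over $\pi$ using the definition $u\act{(b,p)}v \Leftrightarrow uS(u,b,p)\eqT v$, the support trace $S(\pi)$ satisfies $S(\pi)\eqT u$.'' This inference is invalid: $\eqT$ is a right congruence (it absorbs suffixes from $T$), but it is \emph{not} compatible with left concatenation. From $u_{i-1}\,s_i\eqT u_i$ for each $i$ one cannot conclude $s_1\cdots s_k\eqT u_k$. In fact, the failure of this implication is exactly what Lemma~\ref{l:path} and Corollary~\ref{cor:path} exploit: whenever $S(\pi)\not\eqT u_k$, binary search locates a \Target\ instance. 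So the case split ``$S(\pi)\eqT u$ or not'' must be made explicitly, and the ``not'' branch is one of the places where the \Target\ witness actually comes from. Your argument skips this branch entirely.

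The second gap is structural. You reduce unsoundness to ``a reachable configuration from which $\Cfin$ is unreachable'' and then collapse that configuration to a single node $u$ ``from which $\tNn$ cannot complete.'' But a deadlock configuration may have every individual node on a local path to $\nfin$; the obstruction is the \emph{combination} of nodes, not any single one. The paper's proof does not proceed from an arbitrary bad configuration. It uses the forbidden-pattern characterization of unsoundness from~\cite{EKMW18}: pattern~F (a fork with two disjoint $p_1$-/$p_2$-paths into distinct nodes with overlapping domains), pattern~C (a reachable cycle with no dominating node), and pattern~B (a node with no $p$-path to $\nfin$). Your plan is essentially the B-case, and even there the paper first tests $u\eqT S(\pi)$ and falls back to Corollary~\ref{cor:path} when it fails. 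The F-case is handled quite differently: one checks $u'_i\eqT S(\pi\pi'_i)$ along both fork branches, and if all checks pass one derives a contradiction using soundness and \emph{minimality} of $\Nn$ (the two branches would induce a fork in $\Nn$). The C-case requires iterating the cycle $s$ times before the $\eqT$-test can be guaranteed to fail. None of this is visible in your single-node extraction, and the ``converse'' you flag as the main obstacle is precisely what the pattern-by-pattern analysis replaces.
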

\begin{proof}
  We assume throughout the proof that $\Nn$ is minimal.
  Checking whether a deterministic negotiation is sound is an
  \NLOGSPACE-complete problem~\cite{EKMW18}.
  A negotiation is not sound if and only its graph contains one of the following
  patterns: 
  \begin{enumerate}
  \item[F:] A local path from $\ninit$ to some node $n$, action $a \in \Act$, two nodes
    $n_1,n_2$  and two processes $p_1,p_2$ such that 
    \begin{itemize}
\item $\set{p_1,p_2} \subseteq \dom(n) \cap \dom(n_1) \cap \dom(n_2)$;
\item for $i=1,2$ there exists a $p_i$-path $\pi_i$ from node
  $\d(n,a,p_i)$ to node $n_i$; and 
\item $\pi_1$ and $\pi_2$ are disjoint. 
\end{itemize}
\item[C:] A local path which is a cycle and has no node $n$ on it with
  $\dom(n)$ containing all processes occurring in the cycle; moreover this cycle
  is reachable.
  \item[B:] A node that is reachable from $\ninit$ by a $p$-path, but
   has not  $p$-path to  $\nfin$.
  \end{enumerate}

Assume first that \Learner\ finds some  pattern of type F (fork) in
$\tNn$.
This means that she finds some words $u,u_1,u_2 \in Q$ with $u_1
\not= u_2$, $\set{p_1,p_2}
\subseteq \ndom(u) \cap \ndom(u_1) \cap \ndom(u_2)$, and local
paths $\pi, \pi_1,\pi_2 \in \Adom^*$ with $\e \act{\pi} u
\act{\pi_i} u_i$, and $\pi_i \in a_{p_i} \Adom^*$,  for $i=1,2$.
Moreover, every support in $S(\pi_1)$ is a $(b,p_1)$-step for some
$b$, and every support in $S(\pi_2)$ is a $(c,p_2)$-step, for some
$c$. 

Consider the local paths $\e \act{\pi} u \act{\pi_i} u_i$. 
For every prefix $\pi'_i$ of $\pi_i$ \Learner\ verifies if $u'_i\eqT S(\pi\pi'_i)$.
If it is not the case then using Cor~\ref{cor:path} she can find an
instance of the \Target\ case with $O(\log(s))$ membership queries, where $s$ is the
size of $\Nn$  ($s$ bounds the lengths of the paths
$\pi\pi_1,\pi\pi_2$).
The overall number of membership queries here is $O(s|T|+\log(s))$,
accounting for all prefixes. 

We show that the remaining case is impossible.
Towards contradiction suppose $u'_i \eqT S(\pi \pi'_i)$ for all prefixes
$\pi'_i$ of $\pi$ and $i=1,2$.
By invariant \PREF, both $S(\pi \pi_1)$ and $S(\pi \pi_2)$ are executable in $\Nn$.
Since every support $S(u,b,p)$ is a $(b,p)$-step the trace $S(\pi)$
induces the local path $\pi$ in $\Nn$ from $\ninit$ to some node $n$ with
outcome $a$ and both $p_1,p_2$ in its domain (because $S(\pi)a$ is
executable in $\Nn$). 
Similarly, $S(\pi\pi_1)$ induces the local $p_1$-path $\pi_1$ in $\Nn$, from $n$ to some 
node $n_1$ with both $p_1,p_2$ in its domain (because of $\set{p_1,p_2}
\subseteq \ndom(u_1)$ and the \PREF\ invariant applied to $u_1 \in
Q$). 
Same applies to $S(\pi\pi_2$): it induces the local  $p_2$-path
$\pi_2$ in $\Nn$, from $n$ to some 
node $n_2$ with both $p_1,p_2$ in its domain.
The two paths $\pi_1,\pi_2$ are disjoint because the corresponding
nodes in $\tNn$ are $\eqT$-inequivalent and $\Nn$ is minimal.
Since $\Nn$ is sound this implies $n_1=n_2$, therefore $S(\pi\pi_1)
\eqL S(\pi\pi_2)$, so in particular $S(\pi\pi_1)
\eqT S(\pi\pi_2)$.
We obtain a contradiction to $u_1 \not\eqT u_2$, using our assumption $u_1 \eqT S(\pi \pi_1)$ and
$u_2 \eqT S(\pi \pi_2)$. 

The two remaining cases, for (C) and (B) patterns, are presented in
 the appendix.
\end{proof}

\subsection{Learning algorithm}\label{sec:learning-executions-algorithm}
We assemble all the components presented until now into a learning algorithm.
We assume there is an external call $\EQ(\tNn)$ giving \Teacher's
answer to the equivalence query $L(\tNn)\stackrel{?}{=}L(\Nn)$. 
The answer can be either $\true$ or a pair of a form
$(pos,w)$, $(neg,w)$. 
In the latter case $w$ is a counter-example to the equivalence and the first
component indicates if this counter-example is positive or negative.
Counter-examples are handled by procedure $\BinS(\ans,Q,T,S)$.
It does a binary search on a counter-example and returns an instance of \OUTINC\
or \Target, as described in Sections~\ref{sec:negative} and~\ref{sec:positive}.
The result of $\BinS(\ans,Q,T,S)$ is either a tuple $(abs,u,r)$ for which
\OUTINC$(u,r)$ holds, or a tuple $(mt,u_1,b,p,u_2,r)$ for which
\Target$(u_1,b,p,u_2,r)$ holds.  
The procedures $\OUT$ and $\TRG$ extend $(Q,T,S)$ as described in
Section~\ref{sec:two-operations}.
Then procedure $\CLOS$ restores invariant \Closure\ as described in
Lemma~\ref{lem:closure}. 
Finally, $\isSound(\tNn)$ checks if $\tNn$ is sound; if not, it either
returns an instance
of  \OUTINC\ ($\res=(abs,u,r)$)
or of \Target\ ($\res=(mt,u_1,b,p,u_2,r)$),
as described in Sections~\ref{sec:making-sound}.

\begin{algorithm}
 Init: $\ans \larr \EQ(\Nn_\es$)\;
 \lIf{($\ans=\true$)}{\Return $\Nn_\es$}
 $(Q,T,\out) \larr (\set{\e}, \set{w}, S=\text{empty function})$ \tcp*{}
 $\OUT(\e,\ans.w,Q,T,S)$ \tcp*{}
 $\CLOS(Q,T,S)$ \tcp*{restore \Closure}
 \While{($\ans\not=\true$)}{
 $\tNn\larr \Negotiation(Q,T,S)$ \tcp*{build $\tNn$}
 $\ans \larr \EQ(\tNn)$ \tcp*{ask \Teacher}
 \lIf(\tcp*[f]{if OK, stop}){($\ans=\true$)}{\Return $\tNn$}
 $\res \larr \BinS(\ans,Q,T,S)$ \tcp*{process}
 \Repeat{$\res\not=\true$}{
    \lIf{$\res=(abs,u,r)$}{$\OUT(u,r,Q,T,S)$}
    \lIf{$\res=(mt,u_1,b,p,u_2,r)$}{$\TRG(u_1,b,p,u_2,r,Q,T,S)$}
    $\CLOS(Q,T,S)$ \tcp*{restore \Closure}
    $\res\larr \isSound(\tNn)$
 }(\tcp*{$\tNn$ sound})
 }
 \caption{Learning algorithm with membership queries about executions.}
 \end{algorithm}
 \medskip

The set of $T$ of test traces is extended by $\OUT$ and $\TRG$,
by one for each new transition and each new state, respectively.
Thus, $|T| \le |Q| + |S|$.
Because in each iteration of the while-loop either $Q$ or $S$ is extended, the number of equivalence queries is at most $|Q|+|S|$.\igw{modified}
As in previous sections, to simplify the complexity bound we use just one parameter
$s$ for the size of the negotiation, namely the sum of the number of nodes and
the number of transitions.
By $m$ we denote the maximal size of counter-examples.

For the membership queries we observe that:
\begin{itemize}
\item $\CLOS$ uses overall $|T||S|\in O(|S|^2)$, so $O(s^2)$ membership queries 
  (see Lemma~\ref{lem:closure}).
\item  Handling a counter-example $w$ uses each $O(\log(|w|)$
  membership queries, so overall $O(s \log(m))$.
\item Making $\tNn$ sound uses $O(s|T|+\log(m))$ membership
  queries. 
So the overall number here is $O(s(s^2+ \log(m)))$.
\end{itemize}

We summarize the developments of this section in the following theorem.

\begin{theorem}\label{thm:learning-executions}
   Algorithm~2 actively learns sound deterministic negotiations,
  using membership queries on executions and equivalence queries returning
  executions. It can learn a negotiation of size $s$ 
  using $O(s(s^2+ \log(m)))$ membership queries and $s$ equivalence
  queries, where $m$ is the maximal length of  counter-examples.
\end{theorem}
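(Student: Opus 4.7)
The plan is to establish two things: (i) termination with a correct output, and (ii) the stated bounds on queries. For correctness, I would argue that every operation in the main loop preserves the five invariants \Unique, \PREF, \DOM, \PREF', \Closure: this has already been established for \OUTINC\ and \Target\ in Section~\ref{sec:two-operations}, while Lemma~\ref{lem:closure} restores \Closure, and Proposition~\ref{prop:making-sound} guarantees that after the repeat-loop the resulting $\tNn$ is sound. The negotiation $\tNn$ built from $(Q,T,S)$ via Lemma~\ref{lem:domains} is therefore a well-defined, sound, deterministic negotiation whenever \Learner\ issues an equivalence query. The algorithm halts only when \EQ\ returns $\true$, hence correctness is immediate from the external guarantee of the equivalence oracle.

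For termination and the bound on equivalence queries, the key observation is that each iteration of the outer while-loop strictly enlarges $|Q|+|S|$. Indeed, \OUTINC\ always adds a new support (extending $S$), while \Target, thanks to the argument preceding Lemma~\ref{lem:closure}, forces \Closure\ to fail on the enlarged $T$, so the subsequent $\CLOS$ call adds at least one new node to $Q$. By \Unique, distinct elements of $Q$ map to distinct $\eqT$-classes and hence (using Lemma~\ref{lem:dom} and the fact that each support determines a transition in $\Nn$) to distinct nodes and transitions of the target minimal negotiation of size $s$. Consequently $|Q|+|S| \le O(s)$, giving at most $O(s)$ equivalence queries. Similarly $|T| \le |Q|+|S| = O(s)$ since $T$ is extended by one trace per new state or transition added.

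For the bound on membership queries I would accumulate the three contributions in the algorithm. First, the amortized cost of all $\CLOS$ calls: Lemma~\ref{lem:closure} charges $O(|S|\cdot|T'\setminus T| + |S'\setminus S|\cdot|T'|)$ per call, which summed over the whole run telescopes to $O(|S|\cdot|T|) = O(s^2)$. Second, each counter-example is processed with a single binary search costing $O(\log m)$ membership queries (Sections~\ref{sec:negative} and~\ref{sec:positive} together with Lemma~\ref{l:path}), for a total of $O(s\log m)$ over all outer iterations. Third, each call of $\isSound$ costs $O(s|T|+\log m) = O(s^2+\log m)$ membership queries by Proposition~\ref{prop:making-sound}; since $\isSound$ is invoked within the repeat-loop at most once per enlargement of $Q$ or $S$, i.e. $O(s)$ times in total, this contributes $O(s(s^2+\log m))$. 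Adding the three contributions yields the announced $O(s(s^2+\log m))$ bound.

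The main obstacle I expect is making the amortized accounting for the repeat-loop tight: one must verify that the body of the inner \Repeat-loop runs $O(s)$ times in total across the whole execution, not $O(s)$ times per outer iteration. This follows because each pass either adds an element to $Q$ or to $S$ (either through the \OUTINC/\Target\ application or through the $\CLOS$ call needed to restore \Closure\ after a \Target), so the total number of inner iterations is bounded by the total growth of $|Q|+|S|$, which is $O(s)$. Once this amortization is in place, combining it with the per-iteration costs above yields the stated complexity, and the number of equivalence queries coincides with the number of outer iterations, giving the $s$ bound.
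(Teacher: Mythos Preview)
Your proposal is correct and follows essentially the same approach as the paper's own argument: the same three-way decomposition of membership queries (amortized $\CLOS$ at $O(s^2)$, counter-example processing at $O(s\log m)$, and $\isSound$ at $O(s(s^2+\log m))$), the same bound $|T|\le |Q|+|S|$, and the same termination argument via growth of $|Q|+|S|$. Your explicit amortization of the inner \texttt{Repeat}-loop (each pass enlarges $Q$ or $S$, hence $O(s)$ passes in total) is a point the paper leaves more implicit, so your exposition is in fact slightly more careful there.
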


The complexity bound for this algorithm is roughly by a factor $s$ bigger than that of 
Angluin's algorithm for finite automata. 
This increase is due to the part making $\tNn$ sound.
Observe though that each time algorithm makes $\tNn$ sound, it adds at least one
state or one transition, so the number of equivalence queries
decreases.


\section{Conclusions}\label{sec:conclusions}

We have proposed two algorithms for learning sound deterministic negotiations. 
Due to concurrency, negotiations can be exponentially smaller than equivalent
finite automata. 
Yet the complexity of our algorithms, measured in the number of queries, is
polynomial in the size of the negotiation, and even
comparable\igw{changed} to that of learning algorithms for finite automata.

An immediate further work is to implement the algorithms. 
In particular, we have not discussed how to implement equivalence queries in our
active learning algorithms. 
If Teacher has a negotiation given explicitly then the equivalence query can be done
in \PTIME\igw{added}. 
In more complicated cases this task is closely related to conformance
checking~\cite{DorEl-Maa10}, a field 
developing methods to check if a system under test conforms to a given
model.
Examples of ingenious ways of implementing the equivalence test can be found in~\cite{SmeMoeVaa15}.
Extension of these methods to distributed systems, such as negotiations, is an
interesting research direction. 




\bibliographystyle{ACM-Reference-Format}
\bibliography{learning.bib}


\appendix
\section{Appendix}

\subsection{Angluin's $L^*$ algorithm for DFA}

The first case is when there is no run of $\tAa$ on $w$. 
This can only happen when $w \in L \setminus L(\tAa)$.
Let $w=w' b w''$, such that $\e\act{w'} u$ and there is no $b$-transition from
$u$ in $\tAa$.
Learner adds  $b$ to $\out(u)$.
Then she checks if $ub \eqT v$ for some $v \in Q$.
If this is not the case, she adds  $ub$ to
$Q$, setting $\out(ub)=\es$.
Note that both invariants are preserved.
The steps from this case can be repeated until $w$ has a run in $\tAa$.
The overall number of membership queries used in this step is at most
$|\out| \cdot |T|$.

The second case is the same as for Angluin learning of complete DFA: there is a
run of $\tAa$ on $w$, but  $w$ belongs to the symmetric difference of $L(\Aa)$
and $L(\tAa)$.
Assume that the run of $\tAa$ on  $w=a_1\dots a_m \in A^*$ is:
\begin{equation*}
  \e\act{a_1} u_1\act{a_2}u_2\dots\act{a_m}u_m\ .
\end{equation*}
Since $(w \in L) \not\iff (u_m \in L)$ there exists some $0 \le i
<n$ such that $u_i a_{i+1}\dots a_m \in L \not\iff u_{i+1}a_{i+2}\dots
a_m\in L$.
Such an index $i$ can be found with binary search, so that
 $O(\log(m))$ membership queries are required.
Learner adds $a_{i+2}\dots a_m$ to $T$.
Now $u_ia_{i+1} \not\eqT u_{i+1}$, so $u_ia_{i+1}$ is added to $Q$.
Setting $\out(u_ia_{i+1})=\es$ restores the \Unique\ invariant.

The \Closure\ invariant is also easy to restore.
Suppose after adding $a_{i+2}\dots a_m$ to $T$ for some $u\in Q$, and $a\in\out(u)$ there is
no $v\in Q$ with $ua\eqT v$. 
In this case add $ua$ to $Q$, and set $\out(ua)=\es$.
This operation does not invalidate \Unique.
Restoring \Closure\  after adding one element to $T$ requires $|\out|$
membership queries since for every $u\in Q, a \in\out(u)$ there 
is a unique possible $v$ to check, the one that was suitable before
extending $T$.


The algorithm terminates, as in each case either  $Q$ or
$\out$ grows.
Note that $|T|\le |Q|$ since $T$ is extended only in the second case, where $Q$
is extended too. 
The algorithm ends with $\tAa$ being the minimal DFA of $L$. 
In total it uses at most $O(|Q|(|\out|+\log(m))$ membership queries, where
$m$ is the maximal length of counter-examples given by Teacher.
The number of equivalence queries is bounded by $|Q|+|\out|$.
Note that Angluin's algorithm for complete DFA uses $|Q|$ equivalence queries and
$O(|Q|(|Q||A|+\log(m)))$ membership queries~\cite{TTT14}.
However, the size $|out|$ of the target DFA may be much smaller than $|Q||A|$,
in particular if $A$ is very large compared to the maximal out-degree of states.

\subsection{Missing proofs from Section~\ref{sec:minimization}}

\textbf{Proposition~\ref{prop:mindfa-negotiation}}
 Let $\Nn$ be a sound deterministic
  negotiation and $\Aa$ the minimal DFA accepting
  $\paths(\Nn)$.
  Then $L(\Nn)=L(\Nn_\Aa)$. Moreover $\Nn_\Aa$ is deterministic and sound.
 
\begin{proof}
  By definition $\Nn_\Aa$ is deterministic. 

  We show first $L(\Nn) \subseteq L(\Nn_\Aa)$. Let $w \in L(\Nn)$ and suppose
  that $u \sqs w$ is the maximal trace-prefix of $w \in\Act^*$ that is executable in
  $\Nn_\Aa$.
  Let also $\Cinit \act{u} C$ in $\Nn$ and
  $\Cinit' \act{u} C'$ in $\Nn_\Aa$.
  Assume first that $|u|<|w|$ and let $a$ be the first letter after $u$ in $w$: $w = uau'$.
  Since $a$ is enabled in $C$ there is some node $n$ with $C(p)=n$ for all $p\in\dom(a)$.
  By Lemma~\ref{lem:locpath} the projection $u|_p$ of $u$ on
$p$ is a local  path in $\Nn$, from $\ninit$ to $n$.
  This means that $(u|_p)^{-1}\paths(\Nn)=(u|_q)^{-1}\paths(\Nn)$ for all $p,q\in\dom(a)$.
  Since $\Aa$ is the minimal DFA for $\paths(\Nn)$, there is
  a state $s$ of $\Aa$ such that for every process $p\in\dom(a)$, $\Aa$ reaches
  $s$ after reading $u|_p$. 
  So in $C'$ we have $C'(p)=s$ for all $p\in\dom(a)$; by Lemma~\ref{lem:locpath}.
  Hence, $a$ is enabled in $C'$, a contradiction to the assumption
  that $a$ is not executable in $C'$.

  It remains to consider the case where $u \tequiv w$. Here we have
  that $w|_p \in \paths(\Nn)$ for every process $p$, so $w|_p$ is a
  local path in $\Nn_\Aa$ from the initial to the final node.
  This entails $w \in L(\Nn_\Aa)$ by Lemma~\ref{lem:locpath}.

  For the converse inclusion we show a stronger statement: if $w$ is an
  execution of $\Nn_\Aa$ then it is an execution of $\Nn$. 
  The statement is stronger as we do not require that $w$ is complete.

  Let $w=uav$ be an execution in $\Nn_\Aa$. 
  Suppose that $u$ can be executed in $\Nn$.
  We show that $ua$ can be executed in $\Nn$ as well. 
  We have $\Cinit\act{u} C$ in $\Nn$ and $C'_{\init}\act{u}C'\act{a}C''$ in $\Nn_\Aa$.
  Consider the projection $u|_p$ on some process $p$.
  By the definition of $\Nn_\Aa$ we have $C'(p)=\d_\Aa(s^0,u|_p)$.
  
  Since $a$ is enabled in $C'$ and by Lemma~\ref{lem:locpath}, there
  is some state $s$ of $\Aa$ with $a_q \in\out(s)$ and $\d_\Aa(s^0,u|_q)=s$ for all
  $q\in\dom(a)$. 
  For every $p\in\dom(a)$ consider now the node $n_p$ reached by the path $u|_p$ in $\Nn$. 
  By Lemma~\ref{lem:locpath}, $C(p)=n_p$.
  Because $u|_pa_p$ is a prefix of a complete path from $\paths(\Nn)$, we get $a\in\out(n_p)$ and $p\in\ndom(n_p)$.
  If $n_p\not=n_q$ for some $p,q\in\dom(a)$ then $C$ would be a deadlock,
  which is impossible as $\Nn$ is sound.
  Hence, $a$ is enabled in $C$.
  Finally, if $w$ is a complete execution of $\Nn_\Aa$ then it is
  complete in $\Nn$ as well, because the set of local paths is prefix-free.

  It remains to show that $\Nn_{\Aa}$ is sound. 
  Let $\Cinit' \act{u} C'\not=\Cfin'$
  in $\Nn_{\Aa}$ and assume that no action is executable in
  $C'$. 
  By the previous paragraph, we have $\Cinit\act{u} C$.
  Since $\Nn$ is sound we get   $C \act{v} \Cfin$ for some $v \in\Act^*$. 
  Once again by the above, the complete execution $uv$ of $\Nn$ gives
  us a complete 
  execution of $\Nn_\Aa$. 
  Hence $C$ is not a deadlock configuration, because $v$ can be executed from
  $C$. 
\end{proof}

\subsection{Missing proofs from Section~\ref{sec:paths}}

\textbf{Lemma~\ref{lem:lp-pos}:}
  Consider a positive counter-example $w\in L\setminus \tL$. 
  Let $v$ be the maximal trace-prefix of $w$ executable in $\tNn$. 
  So we have $\tCinit\act{v}\tC$ in $\tNn$, and no action in
  $\min(v^{-1}w)$ can   be executed from $\tC$. 
  With at most $|\Proc|$  membership queries \Learner\ can determine one of the following situations:
  \begin{description}
    \item[\OUTINC:]  An action $b\in\min(v^{-1}w)$, a node
      $u \in \Adom^*$ of $\tNn$, and a sequence $r
      \in \Act^*$  starting with $b$ such that for every $p\in\dom(b)$:
      \[ u \,
        r|_p\in L_P \quad \text{and } \quad  b_p\not\in \out(u)\,.
        \]
    \item[\NEQ:] A process $p$, and a local path $\pi \in \Adom^*$
      such that
      \[ v|_p\, \pi\in L_P \quad \not\iff \quad  u\, \pi\in L_P \quad \text{ with
        } u=\tC(p)\,.
      \]
      \end{description}

\begin{proof}
  The first possibility is that $v=w$ but $\tC$ is not a final configuration in $\tNn$.
  Since $w\not \in \tL$, for some process $p$ we have that $u=\tC(p)$
  is not the final node.
  Hence $u\not\in L_P$ while $w|_p\in L_P$ by Corollary~\ref{cor:projections}. 
  We get the \NEQ\ statement of the lemma for $\pi=\e$.

  For the rest of the proof consider some $b\in\min(v^{-1}w)$.
  Since $b$ is not enabled in $\tC$ we have one of the two cases:

  \textbf{Case 1:} $\tC(p)=u$ for all $p\in\dom(b)$. 
  This is possible only when $b_p\not\in\out(u)$ for some $p$, but then by
  invariant \DOM, 
  the same holds for all $q\in\dom(b)$.
  Take $r=v^{-1}w$. 
  If $u \,r|_p\in L_P$ for every $p \in\dom(b)$ then  we get the
  \OUTINC\ statement of the lemma. 
  Otherwise we get the \NEQ\ statement since $v|_p\, r|_p=w|_p\in L_P$.

  \textbf{Case 2:} $\tC(p)=u_p\not=u_q=\tC(q)$ for some $p,q\in\dom(b)$. 
  Hence $u_p\not\eqT u_q$ by the \Unique\ invariant.
  Let $t\in T$ be such that $u_p t \in L_P \not\iff u_q t \in L_P$. 
  Observe also that $vb$ is executable in $\Nn$ since it is a trace-prefix of $w\in L$.
  Thus there is some node $n$ such that $\ninit\act{v|_p}n$ and
  $\ninit\act{v|_q}n$ in the minimal negotiation for $L$. 
  This implies $v|_p\equiv^{L_p} v|_q$, so in particular $v|_p\eqT v|_q$.
  Hence either $u_p t \in L_P \not\iff  v|_p \, t \in L_P$ or $u_q t \in
  L_P \not\iff v|_q \, t \in L_P$. 
  So we get the \NEQ\ statement of the lemma with $\pi=t$.
\end{proof}

\textbf{Lemma~\ref{lem:lp-neg}:}
  Consider a negative  counter-example $w\in \tL\setminus L$, and let $\tCinit \act{w} \tC$.
  With at most $|\Proc|$  membership queries \Learner\ can find a process $p$
      such that
      \[ w|_p \in L_P \quad \not\iff \quad  u\ \in L_P \quad \text{ for
        } u=\tC(p)\,.
      \]

\begin{proof}
  Since $w\in \tL$, all nodes in configuration $\tC(p)$ are accepting.
  By definition of $\tNn$,  for every process $p$, the node $u=\tC(p)$ is such
  that  $u\in L_P$.
  On the other hand, by Corollary~\ref{cor:projections} there is $p$ such that
  $w|_p \notin L_P$. 
  Learner can find this $p$ with at most $|\Proc|$ membership queries.
  We get $w|_p\not\in L_P$ and $u\in L_P$.
\end{proof}

Let us analyze the complexity of the learning algorithm of
Section~\ref{sec:paths}. 

The number of equivalence queries is equal to the number of iterations of the
loop.
By the above, it is bounded by the size of the negotiation (that is the sum of
the number of nodes and the number of transitions). 
Note also that $|T| \le |Q|+|\out|$, since $\OUT$ adds one element to $T$
for each new transition, respectively~$\BinS$ adds one element per call.
Let us estimate the number of membership queries.
The calls of $\OUT$ altogether make $O(|\out||T|)$ membership queries,
We can over-approximate this by $O(|\out|^2)$.
The same complexity  holds for the calls of $\CLOS$ because this
procedure checks  $u',a_p,v$ with $u'a_p \eqT v$ before
enlarging $T$, only w.r.t.~newly added words in $T$.
Finally, checking whether case \OUTINC\ or \NEQ\ holds accounts for
$|Q| \cdot |\Proc|$ membership queries.

\subsection{Missing proofs from Section~\ref{sec:executions}}

\textbf{Crossing Lemma, Lemma~\ref{lem:cross}:}
	Suppose that $\Nn$ is sound and deterministic.
	If $ws_1t_1\in L$ and $ws_2t_2\in L$, with $t_1,t_2$ co-prime,
	$p\in\dmin(t_1)\cap\dmin(t_2)$, and $s_1,s_2$ are  
	$(b,p)$-steps then
	\begin{itemize}
		\item $\dmin(t_1)=\dmin(t_2)$,
		\item $ws_1t_2\in L$.
	\end{itemize}

\begin{proof}
	We observe that $s_1t_1$ and $s_2t_2$ are co-prime traces.
	By Lemma~\ref{lem:dom} we have two executions:
	\begin{align*}
		\Cinit\act{w}I(m)\act{s_1}I(n_1)\act{t_1}\Cfin\\
		\Cinit\act{w}I(m)\act{s_2}I(n_2)\act{t_2}\Cfin\ .
	\end{align*}
	Now, Lemma~\ref{lem:bstep} gives  $n_1=n_2=n$ because $m
        \act{(b,p)} n_1$, $m
  \act{(b,p)} n_2$  and $\Nn$ being deterministic.
	So $\dmin(t_1)=\dmin(t_2)$ by Lemma~\ref{lem:dom}.
	This also entails:
	\begin{equation*}
		\Cinit\act{w}I(m)\act{s_1}I(n)\act{t_2}\Cfin\,.
	\end{equation*}
      \end{proof}

\textbf{Lemma~\ref{lem:closure}:}
	If a triple $(Q,T,S)$ satisfies all invariants \Unique,
        \PREF, \DOM, \PREF', \Closure, and
        $(Q,T',S')$ with $T \subseteq T'$ and $S \subseteq S'$
        satisfies all invariants  but \Closure, then Learner can extend $Q$ and restore all five 
  invariants using $O(|S|(|T'\setminus T|)+(|S'\setminus S|)|T'|)$  membership queries.
 
\begin{proof}
	Suppose that for some $u \in Q$ and $S(u,b,p)$ there is no $v \in Q$ with $u \,
	S(u,b,p) \eqTp v$. 
	Add $u \, S(u,b,p)$ to $Q$ and make $S(u \, S(u,b,p))$
        undefined for all actions. 
	Observe that the invariants are preserved, in particular, $u \, S(u,b,p)$
	satisfies invariant \PREF\ because of \PREF'.

	Let us count the membership queries. 
	There are two cases.
	If $S(u,b,p)$ was defined, then there was some $v\in
	Q$ with $u \,S(u,b,p) \eqT v$. We need to ask only
	membership queries for $u\, S(u,b,p)t'$ and $vt'$ with $t' \in
	T' \setminus T$. 
	Otherwise, if $S'(u,b,p)$ is new we need membership queries  $u\, S'(u,b,p)t'$
	for all $t' \in T'$.  
      \end{proof}

\textbf{Lemma~\ref{lem:domains}:}
	For every $(Q,T,S)$ satisfying the invariants, the negotiation $\tNn$ is
	deterministic and satisfies the following conditions: 
	\begin{itemize}
		\item The domain $\ndom(u)$ is well-defined for every
                  node $u \in Q$. 
		\item If $S(u,b,p)$ is defined then $\ndom(u)=\dom(b)$.
		\item If $u\act{(b,p)} v$ then $p\in \ndom(u)\cap\ndom(v)$.
	\end{itemize}

\begin{proof}
	Note first that the domain $\ndom(u)$ is well-defined according to
	Lemma~\ref{lem:dom} and invariant \PREF. 

	For the second statement suppose that $S(u,b,p)$ is defined. 
	By \PREF', $uS(u,b,p)t \in L$ for some $t \in T$ which is
        either empty or has $p\in\dmin(t)$. 
	In both cases, by Lemma~\ref{lem:dom} and the definition
	of domains we obtain $\ndom(u)=\dmin(S(u,b,p) t)$ $=\dom(b)$.
  
	For the last statement, the transition $u\act{(b,p)} v$ entails $u\,
	S(u,b,p)\eqT 	v$. 
	Moreover, by \PREF' there is $t\in T$ with $u\,
	S(u,b,p)\, t\in L$ and either $t=\e$ or $p\in \dmin(t)$.
	Hence $vt\in L$, so $p\in\ndom(v)$ holds in both cases 
        by the definition of node domains.
	We also have $p\in\ndom(u)$ by the second statement of the lemma.
\end{proof}

\paragraph{Handling a positive counter-example}

In this part we need to assume that $\tNn$ is sound, in order to be
able to use the Crossing Lemma~\ref{lem:cross}.

We start by taking the longest trace-prefix $v_1$ of $w$ executable in
$\tNn$. 
We get $w=v_1br_1$ and $\tCinit\act{v_1} \tC_1$ with $b$ not enabled in $\tC_1$:
there is $p\in\dom(b)$ with $u_b=\tC_1(p)$ having no outgoing transition on $b$.
The sequence $br_1$ is trace-equivalent to $v_2br_2$, with $br_2$ co-prime.
For $v=v_1v_2$ we have that $w$ and $vbr_2$ are trace-equivalent, and $vt\not\in
\tL$ for every co-prime $t$ starting with $b$.

It may happen that there is no run $\tCinit\act{v}$, so we introduce a
notation.
We write $\tCinit\dact{v_3} u$ if for the maximal executable trace-prefix of $v_4$
of $v_3$  we have $\tCinit\act{v_4} \tC$ and $\tC(p)=u$ for some $p$.
So $\tCinit\dact{v_3} u$ means that by executing the maximal possible
trace-prefix of $v_3$ some process reaches node $u$ in $\tNn$.
Thus we have $\tCinit\dact{v} u_b$, for $v,u_b$ defined in the previous
paragraph. 

If $u_bbr_2\in L$ then we are in the \OUTINC\ case.
Hence suppose $u_bbr_2\not\in L$.
We show below how \Learner\ finds an instance of \Target\ in $\tNn$.
For this we use two auxiliary lemmas:

\begin{lemma}\label{lem:positive2}
  Suppose we have $u_k$, $v_k$, $t_k$ and co-prime $t$ with the following properties:
  \begin{equation}\label{eq:pos-inv2}
    \tCinit\dact{v_k} u_k\qquad u_k t_k\in L\qquad v_kt_k\not\in L\qquad
     v_kt\in L
  \end{equation}
  Then either \Target\ holds for some transition going into $u_k$ and
  trace $t_k$, or
  we can find $v_{k-1}$ shorter than $v_k$, and $u_{k-1}$, $t_{k-1}$ for which
  properties~\eqref{eq:pos-inv2} hold.
\end{lemma}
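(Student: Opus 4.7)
The plan is to perform a one-step descent along the $\tNn$-path of a process sitting at $u_k$. Let $p$ be a process with $\tC(p) = u_k$ after executing the maximal executable trace-prefix of $v_k$ in $\tNn$, and let $u_{k-1} \act{(a,p)} u_k$ be the last transition this process took in $\tNn$, with support $s_k = S(u_{k-1}, a, p)$ (an $(a,p)$-step by definition). The only membership query I need is whether $u_{k-1} s_k t_k \in L$. Since $u_k t_k \in L$ by hypothesis, a negative answer is precisely a \Target\ instance for the transition $u_{k-1} \act{(a,p)} u_k$ together with $t_k$, giving the first alternative of the lemma.

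If instead $u_{k-1} s_k t_k \in L$, I construct the descended instance. View $v_k$ as a Mazurkiewicz trace: let $\sigma$ consist of the last $p$-event of $v_k$ (necessarily labelled $a$) together with all events above it in the trace order, and let $v_{k-1}$ be the complement. Since $a$ is the last $p$-event of $v_k$, no event in $\sigma$ other than $a$ involves $p$, so $\sigma$ is a bona fide $(a,p)$-step; $v_{k-1}$ is downward-closed, hence a trace, with $v_k \equiv v_{k-1}\sigma$ and $|v_{k-1}| < |v_k|$. Moreover $\tCinit \dact{v_{k-1}} u_{k-1}$ holds with the same process $p$. I set $t_{k-1} = s_k t_k$ and, for the recursive instance, replace the witness $t$ by $\sigma t$.

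Three of the four properties of~(\ref{eq:pos-inv2}) for $(v_{k-1}, u_{k-1}, t_{k-1})$ with witness $\sigma t$ are then immediate: the $\dact{}$-reachability just noted, $u_{k-1} t_{k-1} = u_{k-1} s_k t_k \in L$ by the case assumption, and $v_{k-1}(\sigma t) = v_k t \in L$ by hypothesis. The remaining property $v_{k-1} s_k t_k \notin L$ is the crux, and is delivered by the Crossing Lemma~\ref{lem:cross}: applying Lemma~\ref{lem:dom} to $v_k t \in L$ and to $u_{k-1} s_k t_k \in L$ ensures $p \in \dmin(t) \cap \dmin(t_k)$, so with $\sigma$ and $s_k$ both $(a,p)$-steps and $t, t_k$ both co-prime, if we had $v_{k-1} s_k t_k \in L$ the Crossing Lemma together with $v_{k-1} \sigma t \in L$ would force $v_{k-1} \sigma t_k = v_k t_k \in L$, contradicting the hypothesis.

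The main obstacle I expect is the trace decomposition $v_k \equiv v_{k-1} \sigma$: making precise how to cut the partial order and verifying that $\sigma$ is a genuine $(a,p)$-step. Once this is in hand, the Crossing Lemma argument is clean, its nontrivial preconditions $p \in \dmin(t) \cap \dmin(t_k)$ being supplied by Lemma~\ref{lem:dom}. A subtler point is the tacit assumption that $t_k$ is co-prime (so that $\sigma t$ and $s_k t_k$ are co-prime and the descended instance is well-formed); this invariant is preserved by the choice $t_{k-1} = s_k t_k$ combined with $p \in \dmin(t_k)$.
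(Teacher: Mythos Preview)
Your high-level strategy is exactly the paper's: peel one step off $v_k$, ask a single membership query to detect a \Target\ instance, and otherwise invoke the Crossing Lemma to certify the shorter instance. Your Crossing-Lemma argument for $v_{k-1}s_kt_k\notin L$ is precisely what the paper does.

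Where you diverge is in how the one-step decomposition is obtained. You work from the $\tNn$-side: pick the witnessing process $p$ for $\tCinit\dact{v_k}u_k$, take its last $\tNn$-transition $u_{k-1}\act{(a,p)}u_k$, and then cut $v_k$ at its last $p$-event. The paper works from the $\Nn$-side: from $v_kt\in L$ with $t$ co-prime, Lemma~\ref{lem:dom} gives $\Cinit\act{v_k}I(n_k)$ in $\Nn$; choosing the last letter $c$ of $v_k$ and some $p\in\dom(c)$, Lemma~\ref{lem:bstep} yields $\Cinit\act{v_{k-1}}I(n_{k-1})\act{s_k}I(n_k)$ with $s_k$ a $(c,p)$-step. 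Only afterwards does the paper locate the matching transition $u_{k-1}\act{(c,p)}u_k$ in $\tNn$.

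Your route has a gap at the parenthetical ``necessarily labelled $a$''. The label $a$ is, by your own definition, the last $p$-action appearing in the \emph{maximal executable trace-prefix} of $v_k$ in $\tNn$; nothing in the hypotheses of the lemma excludes further $p$-actions in $v_k$ lying beyond that prefix (they could be blocked because some other process in their domain is stuck). If that happens, the minimal action of your $\sigma$ is not $a$, and the link between $\sigma$ and the $\tNn$-transition $u_{k-1}\act{(a,p)}u_k$ collapses; in particular $\tCinit\dact{v_{k-1}}u_{k-1}$ no longer follows. The paper's $\Nn$-based decomposition sidesteps this, since $v_k$ is fully executable in $\Nn$ and Lemma~\ref{lem:bstep} delivers the step directly. (The paper is admittedly terse when it then asserts the existence of the corresponding $\tNn$-transition, but the decomposition of $v_k$ itself is on firm ground.)
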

\begin{proof}
  Because $t$ is assumed to be co-prime we have  in $\Nn$ a run of the
  form $\Cinit\act{v_k}I(n_k)\act{t}\Cfin$.
  Consider the last letter of $v_k$, say $c$, and some process $p\in \dom(c)$.
  We must have $n_{k-1}\act{c,p}n_k$ for some node $n_{k-1}$ in $\Nn$.
  We get a decomposition of the above run as
  $\Cinit\act{v_{k-1}}I(n_{k-1})\act{s_k}I(n_k)\act{t}\Cfin$ with $s_k$ a
  $(c,p)$-step.
  
  In $\tNn$ we  have a corresponding transition $u_{k-1}\act{c,p}u_k$.
  If $u_{k-1}S(u_{k-1},c,p)t_k\not\in L$ then we have the \Target\
  case for this transition and $t_k$.

  So we suppose $u_{k-1}S(u_{k-1},c,p)t_k\in L$ for the rest of the proof.
  Observe that $\tCinit\dact{v_{k-1}} u_{k-1}$.
  If $v_{k-1}S(u_{k-1},c,p)t_k\not\in L$ then we get the  conclusion
  of the lemma for
  $t_{k-1}=S(u_{k-1},c,p)t_k$ and the co-prime trace  $s_k t$.
  
  We show that $v_{k-1}S(u_{k-1},c,p)t_k\in L$ is impossible.
  Observe that $p \in\dmin(t) \cap \dmin(t_k)$ because $p \in
  \dom(n_k) \cap \ndom(u_k)$.
  This allows us to apply Crossing Lemma~\ref{lem:cross} to
  $v_{k-1}S(u_{k-1},c,p)t_k\in L$ and $v_{k-1}s_kt\in L$.
  We get 
  $v_{k-1}s_kt_k=v_kt_k\in L$, contradicting the assumption~\eqref{eq:pos-inv2}.
\end{proof}

\begin{lemma}\label{lem:positive1}
  Suppose we have $v_k$, and a co-prime trace $t_k$ such that for some $u_k$:
\begin{equation}\label{eq:pos-inv}
  \tCinit\dact{v_k} u_k\qquad u_k t_k\not\in L\qquad v_kt_k\in L
\end{equation}
Then either  \Target\ holds for some transition going into $u_k$ and
trace $t_k$, or we
can find a shorter $v_{k-1}$ and some $t_{k-1}$ for which either the
conditions~\eqref{eq:pos-inv} or the conditions of Lemma~\ref{lem:positive2}
hold.
\end{lemma}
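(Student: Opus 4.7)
The proof will closely mirror that of Lemma~\ref{lem:positive2}, and the plan is to take the last action $c$ of $v_k$ in some linearization together with a suitable process $p \in \dom(c)$, obtain the corresponding decomposition in $\Nn$, and do a small case analysis driven by two membership queries. First I would observe that $v_k$ cannot be the empty trace: otherwise $u_k = \e$ and the two hypotheses $u_k t_k \notin L$ and $v_k t_k \in L$ would collapse to $t_k \in L \not\iff t_k \in L$. Since $v_k t_k \in L$ with $t_k$ co-prime, Lemma~\ref{lem:dom} gives in $\Nn$ an execution $\Cinit \act{v_k} I(n_k)$, and looking at the last action $c$ of $v_k$ yields the decomposition $\Cinit \act{v_{k-1}} I(n_{k-1}) \act{s_k} I(n_k)$, where $v_{k-1} s_k \tequiv v_k$ and $s_k$ is a $(c,p)$-step. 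Picking $p$ so that it witnesses the $\dact{}$ relation for $u_k$, the $p$-projection of $v_k$ in $\tNn$ ends with a transition $u_{k-1} \act{(c,p)} u_k$, which in turn gives $\tCinit \dact{v_{k-1}} u_{k-1}$.

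The key case analysis uses two membership queries on the co-prime trace $t_{k-1} = S(u_{k-1},c,p)\, t_k$ (co-prime because $S(u_{k-1},c,p)$ is a $(c,p)$-step and $p \in \dmin(t_k)$, as $p \in \ndom(n_k) = \dmin(t_k)$). If $u_{k-1}\, S(u_{k-1},c,p)\, t_k \in L$, then together with the hypothesis $u_k t_k \notin L$ this directly witnesses a \Target\ instance for the transition $u_{k-1} \act{(c,p)} u_k$ and the co-prime trace $t_k$. Otherwise, if $v_{k-1}\, S(u_{k-1},c,p)\, t_k \in L$, I obtain the desired reduction to condition~\eqref{eq:pos-inv} for the shorter $v_{k-1}$, the same node $u_{k-1}$, and $t_{k-1}$.

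The remaining case, where both $u_{k-1} S(u_{k-1},c,p) t_k$ and $v_{k-1} S(u_{k-1},c,p) t_k$ lie outside $L$, is the step I expect to be the main obstacle and where a reduction to the conditions of Lemma~\ref{lem:positive2} is required. I plan to pick $t \in T$ from invariant \PREF'\ applied to the transition $u_{k-1} \act{(c,p)} u_k$, so that $u_{k-1} S(u_{k-1},c,p)\, t \in L$. The crux is to argue that $S(u_{k-1},c,p)$ is not executable in $\Nn$ from the configuration $I(n_{k-1})$ reached by $v_{k-1}$: if it were, Lemma~\ref{lem:bstep} together with determinism of $\Nn$ would force it to lead from $I(n_{k-1})$ to $I(n_k)$, making $v_{k-1} S(u_{k-1},c,p) t_k \in L$, contrary to the present case. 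Consequently $v_{k-1} S(u_{k-1},c,p)\, t \notin L$ as well, and the conditions of Lemma~\ref{lem:positive2} are met with $v' = v_{k-1}$, $u' = u_{k-1}$, the distinguishing trace $S(u_{k-1},c,p)\, t$, and the co-prime witness $s_k t_k \in L$ (co-prime because $s_k$ has unique minimum $c$ and $p \in \dom(c) \cap \dmin(t_k)$ forces $c$ to precede every element of $t_k$). The delicate verifications to get right are the domain side-condition $p \in \dmin(t)$ from \PREF', the coprimality bookkeeping for the various concatenated traces, and the transfer of the $\dact{}$ relation from $v_k$ to $v_{k-1}$ through the process $p$ — all routine but needed for the invocation of Lemma~\ref{lem:positive2} to be legitimate.
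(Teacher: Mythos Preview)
Your overall structure and the first two cases match the paper's proof closely: pick the last action $c$ of $v_k$, obtain the transition $u_{k-1}\act{(c,p)}u_k$ in $\tNn$, decompose $v_k\tequiv v_{k-1}s_k$ in $\Nn$ with $s_k$ a $(c,p)$-step, and then branch on the two membership queries for $u_{k-1}S(u_{k-1},c,p)t_k$ and $v_{k-1}S(u_{k-1},c,p)t_k$. The \Target\ case and the recursion to condition~\eqref{eq:pos-inv} are fine.

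The genuine gap is in your handling of the remaining case. You claim that $S(u_{k-1},c,p)$ cannot be executable in $\Nn$ from $I(n_{k-1})$, arguing that otherwise Lemma~\ref{lem:bstep} would force the target configuration to be $I(n_k)$. But Lemma~\ref{lem:bstep} has as \emph{hypothesis} an execution of the shape $I(m)\act{u}I(n)\act{*}\Cfin$; it does not assert that every $(c,p)$-step executable from $I(m)$ lands in a configuration of the form $I(n')$. In fact this can fail: a $(c,p)$-step may stop short of advancing all the other processes to the point where $n_k$ is the unique enabled node. In that situation $v_{k-1}S(u_{k-1},c,p)$ is executable, $v_{k-1}S(u_{k-1},c,p)t_k\notin L$ still holds (because $n_k$ is not yet enabled), and your non-executability argument collapses.

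The paper closes this case with the Crossing Lemma (Lemma~\ref{lem:cross}) rather than Lemma~\ref{lem:bstep}. One assumes towards a contradiction that $v_{k-1}S(u_{k-1},c,p)t\in L$ for the $t\in T$ supplied by \PREF'. Then one has two complete executions $v_{k-1}s_kt_k\in L$ and $v_{k-1}S(u_{k-1},c,p)t\in L$, with $s_k$ and $S(u_{k-1},c,p)$ both $(c,p)$-steps and $p\in\dmin(t_k)\cap\dmin(t)$; the Crossing Lemma yields $v_{k-1}S(u_{k-1},c,p)t_k\in L$, contradicting the case assumption. This is exactly the ``switching between two $(c,p)$-steps'' that your direct argument was trying to get for free, and it is where the soundness of $\Nn$ is genuinely used. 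Replacing your Lemma~\ref{lem:bstep} invocation by this Crossing Lemma step fixes the proof.
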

Observe that conditions~\eqref{eq:pos-inv} hold for $v_k=v$, $u_k=u_b$, and
$t_k=br_2$.
Thus the positive case will be complete by proving
Lemma~\ref{lem:positive1}:

\begin{proof}[Proof of Lemma~\ref{lem:positive1}]

  Consider the last letter of $v_k$, say $c$, and some process $q\in \dom(c)$.
  In $\tNn$ we have $u_{k-1}\act{c,q} u_k$ for some $u_{k-1}$.
  By the invariants for $\tNn$ there exists a support $S(u_{k-1},c,q)$
  such that $u_{k-1}S(u_{k-1},c,q)\eqT u_k$. 

  If $u_{k-1}S(u_{k-1},c,q)t_k\in L$ then $u_{k-1}\act{c,q} u_k$ together with
  $t_k$ forms a \Target\ case.
  
  We are left to consider $u_{k-1}S(u_{k-1},c,q)t_k\not\in L$. 
  We need to find $v_{k-1}$  with $\tCinit\dact{v_{k-1}} u_{k-1}$.
  For this we take a run in $\Nn$: $\Cinit\act{v_k}I(n_k)\act{t_k}\Cfin$.
  It exists as $v_kt_k\in L$, and $t_k$ is a co-prime trace so the intermediate
  configuration must be of the form $I(n_k)$ for some $n_k$.
  Since $c$ is the last letter of $v_k$, we have a transition
  $n_{k-1}\act{c,q}n_k$ in $\Nn$, for some $n_{k-1}$.
  Lemma~\ref{lem:bstep} allows us to decompose this run further into
  $\Cinit\act{v_{k-1}}I(n_{k-1})\act{s_k}I(n_k)\act{t_k}\Cfin$
  with $s_k$ being a $(c,q)$-step, and $v_{k-1}$ some strict prefix of $v_k$.
  We claim $\tCinit\dact{v_{k-1}} u_{k-1}$. 
  This holds as $\tCinit\dact{v_{k-1}s_k} u_{k}$ and $s_k$ is a $(c,q)$-step.
  
  If $v_{k-1}S(u_{k-1},c,q)t_k\in L$ we get properties~\eqref{eq:pos-inv} for $v_{k-1}$
  and $t_{k-1}=S(u_{k-1},c,q)t_k$.
  
  The last case is when $v_{k-1}S(u_{k-1},c,q)t_k\not\in L$.
  By invariants for $\tNn$, there is $t\in T$ such that $u_{k-1}S(u_{k-1},c,q)t\in L$.
  We claim that $v_{k-1}S(u_{k-1},c,q)t\not\in L$, giving
  us  conditions~\eqref{eq:pos-inv2} of Lemma~\ref{lem:positive2} for $t_{k-1}=S(u_{k-1},c,q)t$.
  To see the claim, suppose to the contrary that  $v_{k-1}S(u_{k-1},c,q)t\in L$.
  Since  $v_{k-1}s_kt_k\in L$ the Crossing Lemma~\ref{lem:cross} implies $v_{k-1}S(u_{k-1},c,q)t_k\in L$,
  but we have assumed the contrary. 
  \end{proof}

\paragraph{Proof of Proposition~\ref{prop:making-sound}}

We present the remaining two cases of the proof.

Assume now that \Learner\ finds some pattern C (non-dominant cycle). 
This means that $\tNn$ has some local paths $\pi_1,\pi_2$, with
$\pi_1$ from $\e$ to some node $u \in Q$, and $\pi_2$ a cycle from $u$
to $u$ with no node containing in its domain all processes on the
cycle.
If $u \not\eqT S(\pi_1 \pi_2^k)$ for some $k$ then 
\Learner\  finds an instance of the \Target\ case.
We claim that this holds for $k=s$, where $s$ is the size of $\Nn$.
Since Learner does not know $s$, she needs to repeat the equivalence test for
$k=2,4,8,\dots$. 
So she needs $\log(s)$ tests.
The overall number of membership queries here is again  $O(\log(s)|T|)$. 

It remains to prove the claim from the previous paragraph.
Assume conversely that $u \eqT S(\pi_1 \pi_2^s)$.
In particular, $S(\pi_1 \pi_2^s)$ is executable in $\Nn$ for all these $k$,
by invariant \PREF.
The trace $S(\pi_1 \pi_2^s)$ induces the local path $\pi_1 \pi_2^s$ in $\Nn$.
Let $n_i$ be the node reached by $\pi_1 \pi_2^i$ in $\Nn$ for $i=1,\dots,s$.
Let $i<j$ be the smallest indices such that $n_i=n_j$.
So we obtain a local cycle $\pi_2^{j-i+1}$ in $\Nn$ that has
no dominant node.
This contradicts the fact that $\Nn$ is sound.

The last case is where \Learner\ finds some pattern B (blocking) in $\tNn$.
So we assume that there is some $p$-path $\e \act{\pi} u$ and
$u$ has no $p$-path to the unique accepting state of $\tNn$.
If $u \not\eqT S(\pi)$ then by Corollary~\ref{cor:path} \Learner\ finds an
instance of the \Target\ case with $O(\log(s))$ membership queries
($s$ is an upper bound on the length of $\pi$).

So assume that $u \eqT S(\pi)$.
In particular, using invariant \PREF\ we infer the existence of some
$t \in T$ such that $S(\pi)t \in L$.
Since $u \notin L$, $t$ must be non-empty.
Moreover, invariant \PREF' tells us that $p \in\dmin(t)$.
Consider the decomposition $t=t_1 t_2 \dots t_k$, with $t_i \dots t_k$ all co-prime suffixes of $t$
with $p$ in the domain of the minimal action $a_i$ of  $t_i$.
Take the $p$-path $\pi'=(a_1,p) \dots (a_k,p)$.
If the $p$-path $\pi'$ is not possible from $u$ in $\tNn$ then \Learner\ finds
the 
\OUTINC\ case for some $t_i\dots t_k$.
So assume that the path $u \act{\pi'} v$ exists in $\tNn$, with $u
\act{a_1,p} v_1 \act{a_2,p} v_2 \dots \act{a_k,p} v_k=v$. 
By assumption on $u$ we have $v \notin L$.
Let $s_1,\dots,s_k$ denote the associated supports.
Recall that  $S(\pi) t_1 \dots t_k \in L$.

First we check if $S(\pi) s_1 \dots s_k \eqT v_k$.
If this is not the case, Cor.~\ref{cor:path} applies and \Learner\
finds a \Target\ case.
So we assume that $S(\pi) s_1 \dots s_k \eqT v_k$, hence $S(\pi) s_1
\dots s_k \notin L$.

Since $S(\pi) s_1 \dots s_k \notin L$ but $S(\pi) t_1 \dots t_k \in L$
there exists $i$ such that $S(\pi) s_1 \dots s_i t_{i+1} \dots
t_k \in L$ but at the same time $S(\pi) s_1 \dots s_{i+1} t_{i+2} \dots t_k
\notin L$.

We check first if $v_i t_{i+1} \dots t_k \notin L$.
If this holds, Lemma~\ref{l:path} applies to the local path $\e \act{\pi}
u \act{a_1,p} v_1 \dots \act{a_i,p} v_i$ and $t_{i+1} \dots t_k$.
So \Learner\ can again find a \Target\ case.
The same argument applies when $v_{i+1} t_{i+2} \dots t_k \in L$.

So we can assume that $v_i t_{i+1} \dots t_k \in L$ and $v_{i+1}
t_{i+2} \dots t_k \notin L$.
%
We also have $v_{i+1} t' \in L$ for some $t' \in T$ by invariant
\PREF.
Since $v_{i+1} \eqT v_i s_{i+1}$ by invariant \Closure, we obtain $v_i
s_{i+1} t' \in L$.
Now we can apply Lemma~\ref{lem:cross} to $v_i t_{i+1} \dots t_k \in
L$ and $v_i s_{i+1} t' \in L$ , since $s_{i+1}$, $t_{i+1}$ are both $(a_{i+1},p)$-steps, and
$p\in \dmin(t_{i+2} \dots t_k) \cap \dmin(t')$.
We obtain $v_i s_{i+1} t_{i+2} \dots t_k \in L$.
Together with $v_{i+1} t_{i+2} \dots t_k \notin L$ we get an instance
of the \Target\ case.

Overall \Learner\ uses $O(|T|+\log(m))$ membership queries.
\igw{Count complexity}


\end{document}